\documentclass[11pt,reqno]{article}

\usepackage[utf8]{inputenc}

\usepackage[numbers,sort&compress]{natbib}
\usepackage[T1]{fontenc}
\usepackage{microtype}
\usepackage{dsfont}
\usepackage{algorithm}
\usepackage{algorithmic}
\usepackage[bookmarksnumbered,hypertexnames=false,colorlinks=true,linkcolor=blue,urlcolor=blue,citecolor=blue,anchorcolor=green,breaklinks=true,pdfusetitle]{hyperref}
\usepackage{amsmath,amssymb,amsthm,braket,fullpage,enumitem,authblk,mathtools,bbm}
\usepackage{caption,subcaption}
\usepackage[parfill]{parskip}
\usepackage{xcolor}
\usepackage[english]{babel}
\usepackage{tikz}

\newtheorem{thm}{Theorem}[section]
\newtheorem{lem}{Lemma}[section]
\newtheorem{claim}[thm]{Claim}
\newtheorem{definition}[thm]{Definition}

\DeclareMathOperator{\poly}{poly}
\DeclareMathOperator{\polylog}{polylog}

\DeclareMathOperator*{\argmax}{arg\,max}
\DeclarePairedDelimiter{\paren}{\lparen}{\rparen}

\newcommand{\ceil}[1]{\lceil #1 \rceil}
\newcommand{\cost}{\mathrm{T}}
\newcommand{\oracle}{\mathrm{O}}
\newcommand{\unitary}{\mathrm{U}}
\newcommand{\bigO}{\mathcal{O}}

\renewcommand{\braket}[2]{\langle #1|#2\rangle}
\renewcommand{\indent}{\hspace*{5mm}}

\allowdisplaybreaks[4]

\newcommand\blfootnote[1]{%
  \begingroup
  \renewcommand\thefootnote{}\footnote{#1}%
  \addtocounter{footnote}{-1}%
  \endgroup
}

\usetikzlibrary{quantikz}
\tikzstyle{arrow} = [thick,->,>=stealth]
\usetikzlibrary{shapes.geometric, arrows, fit}
\tikzstyle{box_orange} = [rectangle, rounded corners, minimum width=4.5cm, minimum height=1cm,text centered, draw=black,text width=4cm, fill=orange!40]
\tikzstyle{box_yellow} = [rectangle, minimum width=5cm, minimum height=1cm, text centered, text width=4.5cm, draw=black, fill=yellow!30]
\tikzstyle{box_blue} = [rectangle, rounded corners, minimum width=4.5cm, minimum height=1cm,text centered, draw=black, fill=cyan!30,text width=4cm]
\tikzstyle{box_green} = [rectangle, rounded corners, minimum width=4.5cm, minimum height=1cm,text centered, draw=black, fill=green!30,text width=4cm]
\tikzstyle{block} = [rectangle, draw, text width=7em, text centered, rounded corners, minimum height=3em]
\tikzstyle{line} = [draw, -latex]
\usetikzlibrary{patterns}

\usepackage[colorinlistoftodos,size=small]{todonotes}

\begin{document}

\title{Quantum Reinforcement Learning via Policy Iteration}

\date{}

\author[1]{El Amine Cherrat}
\author[1, 2]{Iordanis Kerenidis}
\author[2]{Anupam Prakash}

\affil[1]{Université de Paris, CNRS, IRIF, F-75006, Paris, France}
\affil[2]{QC Ware, Palo Alto, USA and Paris, France}

\maketitle

\begin{abstract}

    Quantum computing has shown the potential to substantially speed up machine learning applications, in particular for supervised and unsupervised learning. Reinforcement learning, on the other hand, has become essential for solving many decision making problems and policy iteration methods remain the foundation of such approaches. 
    In this paper, we provide a general framework for performing quantum reinforcement learning via policy iteration. We validate our framework by designing and analyzing: \emph{quantum policy evaluation} methods for infinite horizon discounted problems by building quantum states that approximately encode the value function of a policy $\pi$; and \emph{quantum policy improvement} methods by post-processing measurement outcomes on these quantum states. Last, we study the theoretical and experimental performance of our quantum algorithms on two environments from OpenAI's Gym.
    
\end{abstract}

\blfootnote{Emails: cherrat@irif.fr, jkeren@irif.fr, anupamprakash1@gmail.com}

\newpage

\section{Introduction}
\label{section:introduction}

    \indent Reinforcement learning has had a great impact in decision making problems, in particular combined with artificial neural networks \citep{Mnih2015HumanlevelCT, Silver2017MasteringTG}. Nevertheless, alternatives to neural networks are still needed for a number of different reasons, first, because the amount of data is expected to continue to grow along with its dimensionality, and, second, neural networks carry vulnerabilities that make them prone to adversarial attacks \cite{Szegedy2014IntriguingPO}. A possible alternative to deep learning for further improving machine learning can be found in quantum computing that has shown to be able to perform tasks beyond the reach of classical computing \cite{Arute2019QuantumSU}. The field of quantum machine learning explores how to design and implement quantum algorithms that could enable machine learning that is faster, more expressive, or more explainable. Using quantum computers, a number of quantum machine learning algorithms have been published for supervised and unsupervised learning \citep{Lloyd2014QuantumPC, Kerenidis2017QuantumRS, Biamonte2017QuantumML, Lloyd2018QuantumGA, Kerenidis2020QuantumAF, Kerenidis2020QuantumGD, Kerenidis2021ClassicalAQ}. Here, we are interested in reinforcement learning and in particular the policy iteration algorithm \cite{Sutton2005ReinforcementLA}:
    
    \begin{center}
        \textit{Policy iteration is an algorithm that, given a Markov Decision Process, \\ generates a sequence of policies converging to the optimal policy in a finite number of steps. }
    \end{center}

    \indent The research on quantum reinforcement learning is so far rather limited. The first approach was developed by \citet{Dong2008QuantumRL} where a quantum environment using a superposition of states and actions is proposed, while temporal difference learning is used for policy evaluation and Grover techniques \cite{Grover1996AFQ} for policy improvement. \citet{Cornelissen2018QuantumGE} proposed a quantum algorithm to evaluate the value function using a phase reward oracle and used quantum gradient estimation \cite{gilyen2019optimizing} to improve the policy. \citet{Ronagh2019QuantumAF} presents a quantum dynamic programming algorithm for solving the finite-horizon processes case. Several works in quantum reinforcement explore the use of variational circuits for value-based and policy-based algorithms \citep{Chen2020VariationalQC, Lockwood2020ReinforcementLW, Skolik2021QuantumAI, Jerbi2021VariationalQP}. More recent work from \citet{Wang2021QuantumAF} combines quantum mean estimation with the quantum maximum searching algorithm to estimate the optimal policy and value function when a generative model for the environment is available. This approach provides a polynomial speedup over the classical value iteration algorithm but does not generalize to the case where such model is not available.
    
    \indent In this work, we define a general framework for quantum reinforcement learning based on quantum policy iteration, by extending the classical approach in \cite{Lagoudakis2003LeastSquaresPI}. We provide algorithms for performing quantum policy iteration and we extend them to the approximate case with linear value functions. Our \textit{quantum policy iteration} algorithms alternate between two steps as their classical counterparts. The \textit{quantum policy evaluation} step uses quantum linear system solvers to produce quantum states that approximately encode the policy value function. The \textit{quantum policy improvement} step improves the actual policy based on measurements performed on these quantum states. In Section \ref{section:quantum-policy-iteration}, we provide an in-depth analysis of our quantum policy iteration algorithm where we prove tight convergence bounds similar to the classical case, and provide efficient methods for building quantum access to the environment and policy parameters. In Section \ref{section:approximate-quantum-policy-iteration}, we generalize our approach to the approximate case with linear value function approximation and provide a model-free implementation. 
    
    \indent The quantum policy iteration methods we develop use quantum linear system solvers for evaluating the policies and hence the running time of these procedures depend explicitly on parameters of the matrices involved in these linear systems (for example the condition number, sparsity or rank) and also on how efficient it is to access these matrices in a quantum way (in other words constructing efficient block encodings). 
    
    \indent In fact, we believe reinforcement learning is an advantageous case for quantum linear algebra precisely due to the character of the linear systems which are usually sparse and well-conditioned. Much of the effort in the paper is to provide explicit constructions of the block encodings for all cases, which enables to bound the parameters in the running time and have a clear idea of when to expect a quantum advantage. For example, we will see in Section \ref{section:quantum-policy-iteration} that for certain environments like the \textsc{FrozenLake}, mazes or other board games, the running time of our quantum method can be thought of as $\bigO(SA + \log(SA)/\epsilon^2)$, where $S,A$ are the states and actions of the game, and $\epsilon$ is the accuracy for retrieving the solution from the quantum linear system solver, which one can compare with the $\bigO((SA)^\omega)$ running time of the classical linear system method. We also provide a similar comparison between running times of classical and quantum approximate policy iteration methods in Section \ref{section:approximate-quantum-policy-iteration}. Last, in Section \ref{section:applications}, we simulate our quantum algorithms for the \textsc{FrozenLake} and \textsc{InvertedPendulum} environments to show that our quantum algorithms converge well and can be considerably faster in practice, and we also describe how to build the necessary block encodings.
    
    \indent Overall, our methodology provides a general framework for infinite-horizon problems in the model-based and model-free case, and it encompasses many different ways of performing policy evaluation and improvement, including deep learning techniques, thus, enabling theoretical analysis of quantum reinforcement learning.

\section{Preliminaries}
\label{section:preliminaries}

\subsection{Reinforcement learning}
\label{subsection:reinforcement-learning}
    
    \indent The aim of reinforcement learning\footnote{For a more detailed introduction to reinforcement learning, we recommend \cite{Sutton2005ReinforcementLA}. } is to train an agent to discover the policy that maximizes the agent's performance in terms of the discounted future reward,  while interacting with the environment, receiving only a reward signal. The agent can take actions in a set of possible actions based on a policy that maps each state with actions to take. This interaction is summarized in Figure \ref{fig:env}. 
    
    \indent More formally, we consider the infinite-horizon discounted decision problem with a \emph{state set} $\mathcal{S}$ and a finite \emph{action set} $\mathcal{A}$. At each time-step $t$, the agent receives a representation of the environment's \emph{state} $s_t \in \mathcal{S}$,  selects an \emph{action} $a_t \in \mathcal{A}$ and receives a \emph{reward} $r_t \in [0,1]$. Denoting by $p(s,a,s')$ the probability that $s'$ will occur and by $r(s,a)$ the average reward perceived after taking action $a$ whilst in state $s$, the usual framework used to describe the environment's elements in reinforcement learning are \emph{Markov Decision Processes} (MDP) which are fully defined by giving tuples of the form $\mathcal{M}=(\mathcal{S},\mathcal{A},P,R,\gamma)$ where $P=[p(s,a,s')]_{s,a,s'}\in\mathbb{R}^{S\times A\times S}$ is the transition matrix, $R=[r(s,a)]_{s,a}\in\mathbb{R}^{S\times A}$ is the reward vector and $\gamma\in(0,1)$ is the discount factor. For the rest of the paper, we will denote by $S$ the size of the state space $\mathcal{S}$, by $A$ the size of the action space $\mathcal{A}$ and by $\Gamma$ the effective time horizon: 
    $$ \Gamma = \frac{1}{1-\gamma}$$
    
\newpage
\begin{figure}[t]
    \centering
        \begin{tikzpicture}[node distance = 6em, auto, thick]
            \node [block] (Agent) {\textbf{Agent}};
            \node [block, below of=Agent] (Environment) {\textbf{Environment}};
             \path [line] (Agent.0) --++ (4em,0em) |- node [near start]{$a_t$} (Environment.0);
             \path [line] (Environment.190) --++ (-6em,0em) |- node [near start] {$s_t$} (Agent.170);
             \path [line,dotted] (Environment.170) --++ (-4.25em,0em) |- node [near start, right] {$r_t|s_{t+1}$} (Agent.190);
        \end{tikzpicture}
    \caption{The agent–environment interaction \cite{Sutton2005ReinforcementLA}.}
    \label{fig:env}
\end{figure}
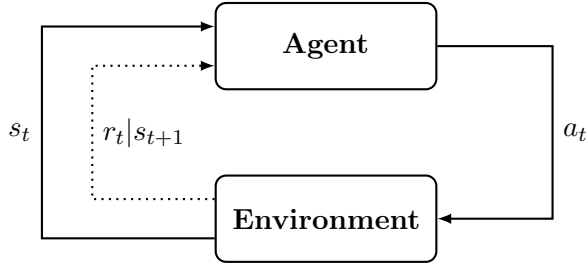

    \indent The agent's behavior is modeled by some policy $\pi$, where $\pi(s,a)$ represents the probability of selecting action $a$ given the state $s$. Moreover, for every policy $\pi$, we define its value function $Q^\pi\in\mathbb{R}^{S\times A}$ as: $$ Q^\pi(s,a) = \mathbb{E}\Big[\sum_{t=0}^{\infty} \gamma^t r(s_t,a_t)\Big|s_0=s, a_0=a, a_t \sim \pi(s_t), s_{t+1}\sim p(s_t,a_t)\Big] $$ which denotes the cumulative reward received by the agent when starting from $(s,a)$ and playing according to $\pi$. The value function $Q^\pi$ is the unique solution of the \emph{Bellman equation} $\mathcal{T}^\pi Q^\pi = Q^\pi$ where $\mathcal{T}^\pi$ is the operator acting on $\mathbb{R}^{S\times A}$ such that: $$\mathcal{T}^\pi Q(s,a)  =  r(s,a) + \gamma \sum_{s'a'} p(s,a,s') \pi(s',a') Q(s',a')$$ 
    
    \indent The goal in reinforcement learning is to find the optimal value function $Q^*=\sup_\pi Q^\pi$ over all  policies $\pi$ that maximizes the value function for all state-action pairs. A classical result about MDPs is the existence of a policy, referred to as the optimal policy $\pi^*$ that reaches these optimal values such that $Q^{\pi^*}= Q^*$ and verifies $\pi^*(s) \in \argmax \{Q^*(s,a) | a\in \mathcal{A}\}$ for every $s\in\mathcal{S}$.
    
\subsection{Policy iteration}
\label{subsection:policy-iteration}
    
    \indent The idea of \emph{Policy Iteration} (PI) is to build a sequence of deterministic policies $\{\pi_t\}_{t\in\mathbb{N}}$ that converges to the optimal policy $\pi^*$ (Figure \ref{fig:scheme-policy-iteration}). The algorithm is initialized with a random policy $\pi_0$ and iteratively alternates between two phases. The first phase, called \emph{policy evaluation}, computes the value function $Q^{\pi}$ of the actual policy $\pi$ while the second phase, called \emph{policy improvement}, uses this value function to output an improved policy $\pi'$, usually using a greedy approach with respect to the current value function. 
    
    \indent A generalization of policy iteration is  \emph{Approximate Policy Iteration} (API) algorithms which are used when the environment model is unknown or the state and action spaces are large. Exact representations of the value function $Q^\pi$ and the policy $\pi$ can be replaced by adjustable parameters and many approximation algorithms \citep{Bertsekas2011ApproximatePI,Scherrer2012ApproximateMP} follow the scheme in Figure \ref{fig:scheme-approximate-policy-iteration}. Next, we present a classical result in reinforcement learning that guarantees the convergence of approximate policy iteration:
    
\newpage
\begin{figure}[t]
    \begin{center}
        \begin{subfigure}[b]{0.45\linewidth}
            \centering
            \resizebox{\linewidth}{!}{
                \begin{tikzpicture}[node distance=2cm,scale=0.5]
                \node[font=\bfseries] (function) [box_yellow] {Value Function $Q^\pi$};
                \node[font=\bfseries] (pi_improv) [box_orange, below of=function, ,yshift=-0.8cm, xshift=-3cm] {Policy \\ Improvement};
                \node[font=\bfseries] (pi_eval) [box_orange, right of=pi_improv, xshift=4cm] {Policy \\ Evaluation};
                \node[font=\bfseries] (policy) [box_yellow, below of=pi_improv, xshift=3cm, yshift=-0.8cm] {Policy $\pi$};
                \draw [arrow] (pi_eval) |- (function);
                \draw [arrow] (function) -| (pi_improv);
                \draw [arrow] (pi_improv) |- (policy);
                \draw [arrow] (policy) -| (pi_eval);
                \end{tikzpicture}
            }
            \caption{Policy Iteration}
            \label{fig:scheme-policy-iteration}
        \end{subfigure}
        \begin{subfigure}[b]{0.44\textwidth}
        \centering
            \resizebox{\linewidth}{!}{
                \begin{tikzpicture}[node distance=2cm]
                \node[font=\bfseries] (function) [box_yellow] {Approximate \\ Value Function $\widetilde{Q}^\pi$};
                \node[font=\bfseries] (pi_improv) [box_orange, below of=function, xshift=-3cm] {Policy \\ Improvement};
                \node[font=\bfseries] (pi_actor) [box_blue, below of=pi_improv] {Policy \\ Projection};
                \node[font=\bfseries] (pi_eval) [box_orange, right of=pi_actor, xshift=4cm] {Policy \\ Evaluation};
                \node[font=\bfseries] (pi_critic) [box_blue, right of=pi_improv, xshift=4cm] {Value Function \\ Projection};
                \node[font=\bfseries] (policy) [box_yellow, below of=pi_actor, xshift=3cm] {Approximate Policy $\tilde{\pi}$};
                \node[font=\bfseries][draw, thick, rounded corners, dashed, inner xsep=0.5em, inner ysep=0.5em, fit= (pi_eval) (pi_critic)] (box_eval) {};
                \node[font=\bfseries][draw, thick, rounded corners, dashed, inner xsep=0.5em, inner ysep=0.5em, fit= (pi_improv) (pi_actor)] (box_improv) {};
                \draw [arrow] (box_eval) |- (function);
                \draw [arrow] (function) -| (box_improv);
                \draw [arrow] (box_improv) |- (policy);
                \draw [arrow] (policy) -| (box_eval);
                \draw[thick] (pi_actor) -- (pi_improv);
                \draw[thick] (pi_critic) -- (pi_eval);
                \end{tikzpicture}
            }
            \caption{Approximate Policy Iteration }   
            \label{fig:scheme-approximate-policy-iteration}
        \end{subfigure}
        \caption{General schemes of PI (\ref{fig:scheme-policy-iteration}) and API (\ref{fig:scheme-approximate-policy-iteration}) as reproduced in \cite{Lagoudakis2003LeastSquaresPI}. }
        \label{fig:schemes-classical}
    \end{center}
\end{figure}
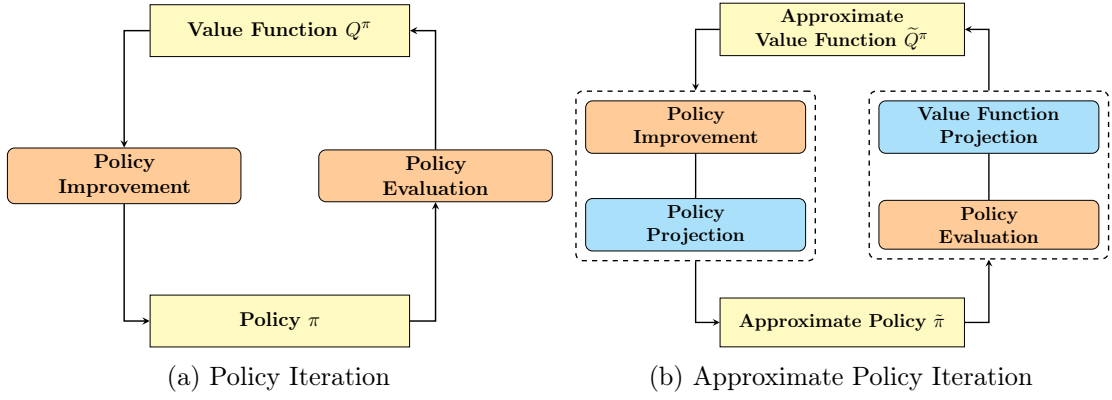

    \begin{thm}[Error bound of API \citep{Bertsekas2019Reinforcement, Lagoudakis2003LeastSquaresPI}]
    \label{theorem:bound-approximate-policy-iteration}
        Let $\{\pi_t\}_{t\in\mathbb{N}}$ be the sequence of policies generated by an approximate policy iteration with a greedy update and let $\{\widehat{Q}^{\pi_t}\}_{t\in\mathbb{N}}$ be the corresponding approximate value functions. Then, this sequence satisfies the following suboptimality bound: $$ \limsup\limits_{t\xrightarrow{}+\infty} \|Q^*-Q^{\pi_t}\|_\infty\leq 2\gamma \Gamma^2\limsup\limits_{t\xrightarrow{}+\infty}\|\widehat{Q}^{\pi_t}-Q^{\pi_t}\|_\infty $$
    \end{thm}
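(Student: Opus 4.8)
The plan is to derive, for the passage from one policy to the next, a contraction-type inequality of the form $\|Q^*-Q^{\pi_{t+1}}\|_\infty \le \gamma\|Q^*-Q^{\pi_t}\|_\infty + 2\gamma\Gamma\,\|\widehat{Q}^{\pi_t}-Q^{\pi_t}\|_\infty$, and then to iterate it and pass to the $\limsup$. It is convenient to write $\mathcal{T}^\pi$ in affine form $\mathcal{T}^\pi Q = R + \gamma P^\pi Q$, where $P^\pi$ is the row-stochastic $SA\times SA$ matrix with entries $p(s,a,s')\pi(s',a')$, so that $Q^\pi=(I-\gamma P^\pi)^{-1}R$ and $(I-\gamma P^\pi)^{-1}=\sum_{k\ge 0}(\gamma P^\pi)^k$ is entrywise nonnegative with $(I-\gamma P^\pi)^{-1}\mathbf{1}=\Gamma\,\mathbf{1}$. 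I also use the optimality operator $\mathcal{T}^* Q(s,a)=r(s,a)+\gamma\sum_{s'}p(s,a,s')\max_{a'}Q(s',a')$, for which $Q^*=\mathcal{T}^{\pi^*}Q^*$ and $\mathcal{T}^* Q\ge \mathcal{T}^\mu Q$ componentwise for every policy $\mu$. All estimates are kept as componentwise vector inequalities and $\|\cdot\|_\infty$ is taken only at the end, using that $0\le Q^*-Q^{\pi}\le \Gamma\,\mathbf{1}$ since rewards lie in $[0,1]$.

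Fix $t$ and abbreviate $\pi:=\pi_t$, $\pi':=\pi_{t+1}$, $\widehat{Q}:=\widehat{Q}^{\pi_t}$, $\epsilon:=\|\widehat{Q}-Q^\pi\|_\infty$. Since $\pi'$ is greedy for $\widehat{Q}$, $\mathcal{T}^{\pi'}\widehat{Q}=\mathcal{T}^*\widehat{Q}$, hence $P^{\pi'}\widehat{Q}\ge P^\mu\widehat{Q}$ for every policy $\mu$. The first step is a one-sided bound on the value of the new policy, $Q^{\pi'}\ge Q^\pi-2\gamma\Gamma\epsilon\,\mathbf{1}$: indeed $\mathcal{T}^{\pi'}Q^\pi-Q^\pi=\gamma(P^{\pi'}-P^\pi)Q^\pi$ (using $Q^\pi=\mathcal{T}^\pi Q^\pi$), and splitting $Q^\pi=\widehat{Q}+(Q^\pi-\widehat{Q})$, using $(P^{\pi'}-P^\pi)\widehat{Q}\ge 0$ and $\|Q^\pi-\widehat{Q}\|_\infty\le\epsilon$, gives $\mathcal{T}^{\pi'}Q^\pi-Q^\pi\ge -2\gamma\epsilon\,\mathbf{1}$; applying $(I-\gamma P^{\pi'})^{-1}$ to the identity $Q^{\pi'}-Q^\pi=(I-\gamma P^{\pi'})^{-1}(\mathcal{T}^{\pi'}Q^\pi-Q^\pi)$ yields the claim.

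The core step is the one-step inequality. Using $Q^*=\mathcal{T}^{\pi^*}Q^*$ and $Q^{\pi'}=\mathcal{T}^{\pi'}Q^{\pi'}$, decompose
$$Q^*-Q^{\pi'}=\gamma P^{\pi^*}\left(Q^*-Q^\pi\right)+\gamma\left(P^{\pi^*}-P^{\pi'}\right)Q^\pi+\gamma P^{\pi'}\left(Q^\pi-Q^{\pi'}\right),$$
obtained by inserting $\mathcal{T}^{\pi^*}Q^\pi$ and $\mathcal{T}^{\pi'}Q^\pi$ as intermediate terms. The first term is $\le\gamma\|Q^*-Q^\pi\|_\infty\mathbf{1}$ since $0\le Q^*-Q^\pi$ and $P^{\pi^*}$ is stochastic; the second is $\le 2\gamma\epsilon\,\mathbf{1}$ by the same $\widehat{Q}$-splitting as above, now using $(P^{\pi^*}-P^{\pi'})\widehat{Q}\le 0$; and the third equals $-\gamma P^{\pi'}(Q^{\pi'}-Q^\pi)\le 2\gamma^2\Gamma\epsilon\,\mathbf{1}$ by the first-step bound and stochasticity of $P^{\pi'}$. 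Since $2\gamma\epsilon+2\gamma^2\Gamma\epsilon=2\gamma\Gamma\epsilon$ and $Q^*-Q^{\pi'}\ge 0$, taking coordinatewise maxima gives $\|Q^*-Q^{\pi'}\|_\infty\le\gamma\|Q^*-Q^\pi\|_\infty+2\gamma\Gamma\epsilon$. Writing $\delta_t:=\|Q^*-Q^{\pi_t}\|_\infty$ and $\epsilon_t:=\|\widehat{Q}^{\pi_t}-Q^{\pi_t}\|_\infty$, this reads $\delta_{t+1}\le\gamma\delta_t+2\gamma\Gamma\epsilon_t$; since $\{\delta_t\}$ is bounded, taking $\limsup_{t\to\infty}$ and using subadditivity gives $\limsup_t\delta_t\le\gamma\limsup_t\delta_t+2\gamma\Gamma\limsup_t\epsilon_t$, and rearranging yields $\limsup_t\delta_t\le 2\gamma\Gamma^2\limsup_t\epsilon_t$.

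The main obstacle is securing the contraction factor $\gamma$ — and nothing larger — in front of $\delta_t$. A naive argument (bounding $\|Q^*-Q^{\pi'}\|_\infty$ directly through $\|Q^*-\mathcal{T}^{\pi'}Q^*\|_\infty$ and the operator $(I-\gamma P^{\pi'})^{-1}$) produces a coefficient like $\tfrac{\gamma}{1-\gamma}$ or $\tfrac{2\gamma}{1-\gamma}$, which exceeds $1$ for $\gamma$ near $1$ and makes the recursion useless. Routing the comparison through the optimal policy $\pi^*$ is what extracts a clean factor $\gamma$ on $\delta_t$, and proving the one-sided estimate $Q^{\pi'}\ge Q^\pi-2\gamma\Gamma\epsilon\,\mathbf{1}$ first is precisely what prevents the feedback term $\gamma P^{\pi'}(Q^\pi-Q^{\pi'})$ from reintroducing a $(I-\gamma P^{\pi'})^{-1}$ blow-up on the $\delta_t$-contribution; everything else is bookkeeping with componentwise inequalities and the observation $Q^{\pi'}\le Q^*$.
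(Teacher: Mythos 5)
Your proof is correct: the one-step bound $\|Q^*-Q^{\pi_{t+1}}\|_\infty\leq\gamma\|Q^*-Q^{\pi_t}\|_\infty+2\gamma\Gamma\,\|\widehat{Q}^{\pi_t}-Q^{\pi_t}\|_\infty$ is established soundly (the decomposition, the greedy inequality $P^{\pi_{t+1}}\widehat{Q}\geq P^{\mu}\widehat{Q}$, the monotonicity of $(I-\gamma P^{\pi_{t+1}})^{-1}$, and the algebra $2\gamma\epsilon+2\gamma^2\Gamma\epsilon=2\gamma\Gamma\epsilon$ all check out), and passing to the $\limsup$ gives the stated $2\gamma\Gamma^2$ constant. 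The paper does not prove this theorem but cites it, and your argument is essentially the standard one from the cited sources (Bertsekas--Tsitsiklis / Lagoudakis--Parr), adapted to $Q$-functions, so there is nothing further to compare.
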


\subsection{Quantum computing}
\label{subsection:quantum-computing}

    \indent Quantum computing\footnote{For a detailed introduction to quantum computing, we recommend \cite{Nielsen2000QuantumCA}.} is a new paradigm for computing that uses the postulates of quantum mechanics in order to encode and compute with information. While classical systems can be only in one state at a time, namely a bit can be either in state $0$ or $1$, quantum systems can be in a superposition of multiple states at the same time, namely a \textit{qubit}, which is the carrier of quantum information, can be in a superposition of the states $\ket{0}$ and $\ket{1}$, i.e.\ it can be written as $\ket{x} = \alpha \ket{0} + \beta\ket{1}$. The qubit $\ket{x}$ corresponds to a unit vector of the Hilbert space $\mathcal{H}=\text{span}\{\ket{0},\ket{1}\}$ with $\alpha,\beta \in \mathbb{C}$ and $|\alpha|^2+|\beta|^2=1$.
    
    \indent The qubit can be generalized to $n$-qubit states, which are unit vectors of $\mathcal{H}_n=\otimes^n \mathcal{H}\simeq\mathbb{C}^{2^n}$. Denoting by $[2^n]$ the set $\{0,\dots,2^n-1\}$ and by $\{\ket{i}\}_{i\in[2^n]}$ the computational basis of $\mathcal{H}_n$, an $n$-qubit state can be written as $ \ket{x} = \sum_{i\in[2^n]} \alpha_i \ket{i} $ with $\alpha_i=\braket{x}{i} \in \mathbb{C}$ and $\sum_{i\in[2^n]} |\alpha_i|^2=1$. Quantum states evolve by applying unitary operators on them, namely applying a unitary operator $\unitary$ (a $2^n \times 2^n$ unitary matrix) on an $n$-qubit state $\ket{x}$ results in the quantum state $\ket{\unitary x}$. In addition, quantum states can be measured and the probability that the measurement of the state $\ket{x}$ gives outcome $i$ is $|\alpha_i|^2$. Note that this is a probability distribution over $[2^n]$. 
    
    \indent The quantum state corresponding to a vector $\mathbf{b}=[b_i]_{i\in[n]}\in\mathbb{R}^n$ is defined as the $\ceil{\log n}$-qubit state $\ket{\textbf{b}}\in\mathbb{C}^{n}$ such that: $$ \ket{\mathbf{b}} = \frac{1}{\|\mathbf{b}\|}\sum_{i\in[n]} b_i \ket{i} $$ where $\ket{i}$ represents the vector $e_i$ the $i$-th vector of the standard basis of $\mathbb{R}^n$. Next, we define the notion of quantum access to a matrix as used in the block-encoding framework \cite{Chakraborty2019ThePO}.
    
    \begin{definition}[Matrix block-encoding \cite{Chakraborty2019ThePO}]
    \label{definition:block-encoding}
        Let $r \in \mathbb{N}$ and $\mu \in \mathbb{R}_+$. We say that we have a $\mu$-block-encoding $\unitary_\mathbf{A}$ of a symmetric matrix $\mathbf{A} \in \mathbb{R}^{n\times n}$ with $\|\mathbf{A}\|\leq \mu$, in total cost $\cost_\mathbf{A} = \cost + r$ if there exists a unitary $\unitary_\mathbf{A}$ acting on $\ceil{\log n}+r$ qubits that can be implemented using $\cost$ elementary gates such that: 
        $$\unitary_\mathbf{A}  = \begin{pmatrix} \mathbf{A}/\mu & . \\ . & . \end{pmatrix}$$
    \end{definition}
    
    \indent This framework, introduced in \citep{Chakraborty2019ThePO,Gilyn2019QuantumSV}, represents any sub-normalized matrix as the top-left block of a unitary $\unitary_\mathbf{A}$. This definition generalizes to any matrix $\mathbf{A}\in\mathbb{R}^{m\times n}$ by building the block-encoding of its symmetrized version $\overline{\mathbf{A}}\in\mathbb{R}^{(m+n)\times(m+n)}$ such that: 
    $$\overline{\mathbf{A}}  = \begin{pmatrix} 0 & \mathbf{A} \\ \mathbf{A}^\dag & 0 \end{pmatrix}$$
    
    \indent The property of constructing a block-encoding of a matrix $\mathbf{A}=[a_{ij}]_{i,j} \in \mathbb{R}^{m\times n}$ can be reduced to being able to perform certain mappings regarding the rows of the matrix $\mathbf{A}^{(p)}$ and the columns of the matrix $\mathbf{A}^{(1-p)}$ for some $p\in[0,1]$ where $\mathbf{A}^{(k)}$ denotes the matrix with elements $(a_{ij})^k\in\mathbb{C}$.
    If we define ${\rm s}_q(\mathbf{A})=\max_{i\in[m]} \|a_i\|_q^q$ to be the maximum $\ell_q$-norm over the rows $a_i$ of $\mathbf{A}$, then we have the following result: 
    
    \begin{lem}[Constructing block-encodings \cite{Chakraborty2019ThePO,Gilyn2019QuantumSV}]
    \label{lemma:constructing-blocks} 
        Let $p\in[0,1]$ and $\alpha,\beta\in\mathbb{R}_+$. Let $\mathbf{A}\in \mathbb{R}^{m\times n}$ be a matrix with ${\rm s}_{2p}(\mathbf{A})\leq \alpha^2$ and  ${\rm s}_{2(1-p)}(\mathbf{A}^\top)\leq \beta^2$. We can implement an $\alpha\beta$-block-encoding of $\mathbf{A}$ by applying a constant number of times the unitaries $\oracle^r$ and $\oracle^c$ such that: 
        \begin{equation*}
            \centering
            \begin{aligned}
                \oracle^r:&  \ket{i,0} \longrightarrow \frac{1}{\alpha}\Big[\sum_{j\in[m]}a_{ij}^{p}\ket{i,j}\Big] + \ket{G^\perp_i}\\
                \oracle^c:&  
                \ket{0,j} \longrightarrow  \frac{1}{\beta} \Big[\sum_{i\in[n]} a_{ij}^{1-p}\ket{i,j}\Big]  + \ket{G^\perp_j} \\
            \end{aligned}
        \end{equation*}
        where $\ket{G_k^\perp}$ denotes some unnormalized garbage quantum state such that $\braket{G_k^\perp}{i,j}=0$ for all $i,j$.
    \end{lem}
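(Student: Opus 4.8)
The plan is to build the block-encoding as the composite unitary $\unitary_{\mathbf{A}} := (\oracle^r)^{\dagger}\,\oracle^c$ and then to verify, by a short inner-product computation, that the prescribed block of $\unitary_{\mathbf{A}}$ is exactly $\mathbf{A}/(\alpha\beta)$. Since this circuit invokes $\oracle^c$ and $(\oracle^r)^{\dagger}$ once each, the ``constant number of calls'' claim will be immediate. Concretely, I would set $\ket{\psi_i} := \oracle^r\ket{i,0}$ and $\ket{\phi_j} := \oracle^c\ket{0,j}$, note that $\bra{i,0}(\oracle^r)^{\dagger} = \bra{\psi_i}$, and then read off the matrix entry $\bra{i,0}\unitary_{\mathbf{A}}\ket{0,j} = \braket{\psi_i}{\phi_j}$.

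First I would justify that unitaries implementing the stated actions of $\oracle^r$ and $\oracle^c$ actually exist. Fixing a row index $i$, the hypothesis ${\rm s}_{2p}(\mathbf{A}) \le \alpha^2$ gives $\sum_j |a_{ij}|^{2p} = \|a_i\|_{2p}^{2p} \le \alpha^2$, so $\tfrac{1}{\alpha}\sum_j a_{ij}^{p}\ket{i,j}$ is a sub-normalized vector and can be completed to a unit vector by appending an orthogonal ``garbage'' component $\ket{G_i^{\perp}}$ supported in an auxiliary flag subspace; here $\ket{i',j'}$ is read with that flag set to zero, which is exactly the condition $\braket{G_i^{\perp}}{i',j'} = 0$. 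Defining $\oracle^r$ to act this way on the inputs $\ket{i,0}$ and extending arbitrarily to a unitary on the orthogonal complement does the job, and $\oracle^c$ is handled symmetrically, using ${\rm s}_{2(1-p)}(\mathbf{A}^{\top}) \le \beta^2$ to control $\sum_i |a_{ij}|^{2(1-p)}$.

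The heart of the argument is the expansion of $\braket{\psi_i}{\phi_j}$ into its good and garbage pieces. The good--good term contributes $\tfrac{1}{\alpha\beta}\sum_{k,\ell} a_{ik}^{p} a_{\ell j}^{1-p}\braket{i,k}{\ell,j} = \tfrac{1}{\alpha\beta}\,a_{ij}^{p}\,a_{ij}^{1-p} = \tfrac{a_{ij}}{\alpha\beta}$, where I use the elementary identity $a_{ij}^{p} a_{ij}^{1-p} = a_{ij}$; the two good--garbage cross terms vanish because $\braket{G_i^{\perp}}{i',j'} = \braket{G_j^{\perp}}{i',j'} = 0$ against the flag-zero computational basis; and the garbage--garbage term $\braket{G_i^{\perp}}{G_j^{\perp}}$ is made to vanish by having $\oracle^r$ and $\oracle^c$ write their garbage into disjoint flag registers, so the two garbage states are orthogonal by construction. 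I would then observe that $\|\mathbf{A}\| \le \alpha\beta$, needed for Definition~\ref{definition:block-encoding}, holds automatically since $\mathbf{A}/(\alpha\beta)$ is a submatrix of the unitary $\unitary_{\mathbf{A}}$ and so has operator norm at most $1$. When $m \ne n$ the two registers have unequal dimension; one then reads $\unitary_{\mathbf{A}}$ as a block-encoding of the symmetrized matrix $\overline{\mathbf{A}}$ and recovers a block-encoding of $\mathbf{A}$ itself by the standard padding, or equivalently appends a SWAP so that the input and output ancillas coincide.

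I expect the main obstacle to be precisely this ancilla/flag bookkeeping: arranging the register layout so that every garbage component is simultaneously orthogonal to the target $\ket{i,j}$ subspace and mutually orthogonal between the two oracles, which is what forces $\braket{\psi_i}{\phi_j}$ to collapse to the single term $a_{ij}/(\alpha\beta)$. Everything else --- the existence of the unitary extensions and the inner-product computation --- is routine once the layout is pinned down.
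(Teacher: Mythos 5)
The paper does not prove Lemma \ref{lemma:constructing-blocks} at all --- it is imported from \cite{Chakraborty2019ThePO,Gilyn2019QuantumSV} --- so there is no internal proof to compare against; your argument is, in substance, the standard proof from those references, and it is correct. Setting $\unitary_{\mathbf{A}}=(\oracle^r)^\dagger\oracle^c$ and reading off $\bra{i,0}\unitary_{\mathbf{A}}\ket{0,j}=\braket{\psi_i}{\phi_j}$ is exactly the intended route, and you correctly isolate the one genuinely delicate point: the stated hypotheses ($\braket{G^\perp_k}{i,j}=0$ only) do not by themselves kill the garbage--garbage overlap, so with purely black-box oracles one would only get an $\alpha\beta$-block-encoding of $\mathbf{A}+\alpha\beta E$ with $E_{ij}=\braket{G^\perp_i}{G^\perp_j}$. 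Your resolution --- arranging the two oracles so their garbage components live in mutually orthogonal subspaces (disjoint flags, or equivalently padding the garbage onto an extra row/column index) --- is precisely how the cited works and this paper's concrete instantiations handle it; for instance in Definition \ref{definition:block-mdp} the row oracle is garbage-free and the column garbage is orthogonal to every $\ket{s,a}$, so all overlap terms vanish. Two minor remarks: strictly, $\braket{\psi_i}{\phi_j}$ conjugates the amplitudes of $\ket{\psi_i}$, so the good--good term is $\overline{a_{ij}^{\,p}}\,a_{ij}^{1-p}/(\alpha\beta)$; for negative entries and fractional $p$ one must place the conjugate (or the sign) in one of the two oracles to recover $a_{ij}$, a point the lemma statement itself glosses over and which is harmless in the paper's applications (nonnegative entries, or $p\in\{0,1\}$). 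Also, your opening paragraph establishing that such unitaries exist is unnecessary, since the oracles are given as hypotheses, though it does no harm.
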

    
    \indent Next, we introduce different algorithms and results for quantum linear algebra using block-encodings. The first result (Theorem \ref{theorem:quantum-arithmetics}) describes techniques for performing matrix arithmetics in this framework. The second algorithm is the \emph{quantum linear system solver} whose running time depends on the quantity $\mu$ and the \emph{condition number} of the matrix $\mathbf{A}$ defined as:
    $$ \kappa = \frac{\max_{\mathbf{b}\neq0} \{\|\mathbf{A}\mathbf{b}\|/\|\mathbf{b}\|\}}{\min_{\mathbf{b}\neq0} \{\|\mathbf{A}\mathbf{b}\|/\|\mathbf{b}\|\}}$$
    
    \indent The general idea of solving linear algebra problems with quantum computing is based on the singular value decomposition of matrices. This decomposition is a generalization of eigendecomposition of a positive semidefinite normal matrix and can be used to accelerate algebra and optimization procedures. We state the cost guarantees for the state-of-the-art linear algebra procedures using block-encodings \cite{Chakraborty2019ThePO} (Theorem \ref{theorem:quantum-linear-algebra}). Another quantum algorithm we use is a way to recover efficiently a classical approximation to any quantum state in the $\ell_\infty$-norm \cite{Kerenidis2020QuantumAF} (Theorem \ref{theorem:tomography}).
    
    \begin{thm}[Matrix arithmetics with block-encodings \citep{Chakraborty2019ThePO,Gilyn2019QuantumSV}]
    \label{theorem:quantum-arithmetics}
        Suppose that we have a $\mu_i$-block-encoding $\unitary_i$ of the matrix $\mathbf{A}_i$ at a cost $\cost_i$ for all $i\in[m]$. Then we can implement with cost $\bigO(\sum_{i\in[m]} \cost_i)$ a $(\prod_{i\in[m]} \mu_i)$-block-encoding of $\prod_{i\in[m]} \mathbf{A}_i$ and a $(\sum_{i\in[m]}  |\lambda_i| \mu_i)$-block-encoding of $\sum_{i\in[m]} \lambda_i \mathbf{A}_i$.
    \end{thm}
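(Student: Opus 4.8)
The plan is to establish the two constructions independently, each by first doing the $m=2$ case and then inducting on $m$. The norm conditions required by Definition~\ref{definition:block-encoding}, namely $\lVert \prod_i \mathbf{A}_i\rVert \le \prod_i \mu_i$ and $\lVert \sum_i \lambda_i\mathbf{A}_i\rVert \le \sum_i |\lambda_i|\mu_i$, are immediate from submultiplicativity of the operator norm and the triangle inequality together with $\lVert\mathbf{A}_i\rVert\le\mu_i$, so the only real content is to exhibit a unitary whose top-left block is the claimed (renormalized) matrix, at the claimed cost.

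\textbf{Products.} For $m=2$, view $\unitary_1$ as acting on the system register together with its own $r_1$-qubit ancilla and $\unitary_2$ as acting on the system register together with a \emph{disjoint} $r_2$-qubit ancilla, so that $\bra{0^{r_1}}\unitary_1\ket{0^{r_1}} = \mathbf{A}_1/\mu_1$ and $\bra{0^{r_2}}\unitary_2\ket{0^{r_2}} = \mathbf{A}_2/\mu_2$ as operators on the system. Put $\unitary = \widetilde{\unitary}_1\,\widetilde{\unitary}_2$, where $\widetilde{\unitary}_j$ is $\unitary_j$ extended by the identity to the joint ancilla register of size $r_1+r_2$. Acting on $\ket{0^{r_1}}\ket{0^{r_2}}\ket{\psi}$, the operator $\widetilde{\unitary}_2$ produces $\ket{0^{r_1}}\ket{0^{r_2}}(\mathbf{A}_2/\mu_2)\ket{\psi}$ plus a term whose $r_2$-ancilla part is orthogonal to $\ket{0^{r_2}}$; applying $\widetilde{\unitary}_1$ and projecting both ancillas onto $\ket{0}$ annihilates the orthogonal term and leaves $(\mathbf{A}_1/\mu_1)(\mathbf{A}_2/\mu_2)\ket{\psi} = \bigl(\mathbf{A}_1\mathbf{A}_2/(\mu_1\mu_2)\bigr)\ket{\psi}$. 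Hence $\unitary$ is a $\mu_1\mu_2$-block-encoding of $\mathbf{A}_1\mathbf{A}_2$, built from $\cost_1+\cost_2$ gates and $r_1+r_2$ ancillas. Iterating left to right gives a $(\prod_i\mu_i)$-block-encoding of $\prod_i\mathbf{A}_i$ at total cost $\bigO(\sum_i\cost_i)$.

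\textbf{Linear combinations.} Here I would use the linear-combination-of-unitaries trick. Pad all the $\unitary_i$ to act on one common $r=\max_i r_i$ ancilla register, adjoin an index register of $\ceil{\log m}$ qubits, and set $\Lambda=\sum_j |\lambda_j|\mu_j$. Define the select unitary $\mathrm{SEL}=\sum_{i\in[m]}\ket{i}\bra{i}\otimes\unitary_i$, the prepare unitary with $\mathrm{PREP}\ket{0}=\Lambda^{-1/2}\sum_{i\in[m]}\sqrt{|\lambda_i|\mu_i}\,\ket{i}$, and the diagonal sign operator $\mathrm{SGN}=\sum_i \mathrm{sign}(\lambda_i)\ket{i}\bra{i}$ on the index register. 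Then $\unitary=(\mathrm{PREP}^\dagger\otimes I)(\mathrm{SGN}\otimes I)\,\mathrm{SEL}\,(\mathrm{PREP}\otimes I)$ is unitary, and evaluating $(\bra{0^{\,\mathrm{idx}}}\bra{0^{\,r}})\,\unitary\,(\ket{0^{\,\mathrm{idx}}}\ket{0^{\,r}})$ with $\bra{0^r}\unitary_i\ket{0^r}=\mathbf{A}_i/\mu_i$ yields $\Lambda^{-1}\sum_i \mathrm{sign}(\lambda_i)|\lambda_i|\mu_i\,(\mathbf{A}_i/\mu_i)=\Lambda^{-1}\sum_i\lambda_i\mathbf{A}_i$, so $\unitary$ is a $\Lambda$-block-encoding of $\sum_i\lambda_i\mathbf{A}_i$. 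The cost of $\mathrm{SEL}$ is $\bigO(\sum_i\cost_i)$ (one controlled call per $\unitary_i$); $\mathrm{PREP}$, $\mathrm{SGN}$ and the select-addressing logic add only lower-order overhead dominated by $\sum_i\cost_i$ (each $\cost_i\ge 1$), so the total is $\bigO(\sum_i\cost_i)$.

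\textbf{Main obstacle.} None of the steps is deep; this is the standard product and LCU machinery of \cite{Chakraborty2019ThePO,Gilyn2019QuantumSV}. The one place that needs care is the ancilla bookkeeping: for products one must keep the ancilla registers of distinct factors disjoint and verify that ``top-left block'' composes as claimed, whereas for sums one must share a single padded ancilla register across all $\unitary_i$ and insert the signs $\mathrm{sign}(\lambda_i)$ on exactly one side of the sandwich (equivalently, split them between an asymmetric prepare pair $(\mathrm{PREP}_L,\mathrm{PREP}_R)$), since a symmetric insertion would produce $|\lambda_i|$ rather than $\lambda_i$. It is also worth recording that the construction is exact — there is no approximation error — because we only ever compose exact block-encodings with exact unitaries.
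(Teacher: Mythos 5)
Your proposal is correct, but note that the paper does not prove this statement at all: it is imported verbatim, with citations, from \cite{Chakraborty2019ThePO,Gilyn2019QuantumSV}, and your argument is essentially the standard one given there — composition of block-encodings on disjoint ancilla registers for products, and the prepare/select (LCU) sandwich with amplitudes $\sqrt{|\lambda_i|\mu_i/\Lambda}$ and signs absorbed on one side for linear combinations. The ancilla bookkeeping you flag is exactly the right point of care, and your observation that both constructions are exact is consistent with how the theorem is used later in the paper.
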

    
    \begin{thm}[Linear algebra with block-encodings \cite{Chakraborty2019ThePO}]
    \label{theorem:quantum-linear-algebra}
        Let $\mathbf{A} \in \mathbb{R}^{n\times n}$ be a matrix such that $\|\mathbf{A}\|=1$ and let $\epsilon>0$ be the precision parameter. Given a $\mu$-block-encoding of $\mathbf{A}$ with cost $\cost_\mathbf{A}$ and a procedure preparing a state $\ket{\mathbf{b}}$ with cost $\cost_\mathbf{b}$,  then there exists quantum algorithms such that with probability at least $\paren{1-1/\poly\paren{n}}$ return a state $\ket{\mathbf{x}}$ such that $\|\ket{\mathbf{x}}-\ket{\mathcal{A}\mathbf{b}}\|\leq\epsilon$ for $\mathcal{A}\in\{\mathbf{A},\mathbf{A}^{-1},\mathbf{A}^\top\}$ with cost\footnote{If $\|\mathbf{A}\|\neq1$, then we rescale the factor $\mu$ to ${\mu}/{\|\mathbf{A}\|}$.} $\bigO\paren{ \kappa\paren{\mu \cost_\mathbf{A}+\cost_\mathbf{b}}\polylog\paren{\kappa/\epsilon}}$.
    \end{thm}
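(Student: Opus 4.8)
The plan is to obtain all three cases $\mathcal{A}\in\{\mathbf{A},\mathbf{A}^{-1},\mathbf{A}^\top\}$ from a single tool --- quantum singular value transformation (QSVT) applied to the block-encoding $\unitary_\mathbf{A}$ --- followed by (variable-time) amplitude amplification to renormalize the flagged output into the desired unit state. Recall that a $\mu$-block-encoding of $\mathbf{A}$ in the sense of Definition~\ref{definition:block-encoding} has the submatrix $\mathbf{A}/\mu$, whose singular values all lie in $\big[\tfrac{1}{\kappa\mu},\tfrac{1}{\mu}\big]$ because $\|\mathbf{A}\|=1$ forces $\sigma_{\min}(\mathbf{A})=1/\kappa$, sitting in its top-left block; QSVT then lets us replace each such singular value $\sigma$ by $f(\sigma)$ for any real polynomial $f$ of definite parity with $\|f\|_{[-1,1]}\le 1$, at a cost of $\deg(f)$ applications of $\unitary_\mathbf{A}$ and $\unitary_\mathbf{A}^\dagger$ together with single-qubit phase rotations on one ancilla.

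The two easy cases come first. For $\mathcal{A}=\mathbf{A}$ we apply $\unitary_\mathbf{A}$ once to $\ket{\mathbf{b}}\ket{0}$, so that the branch flagged by the all-zero ancilla carries $\tfrac{1}{\mu}\mathbf{A}\mathbf{b}$, i.e.\ has amplitude $\|\mathbf{A}\mathbf{b}\|/\mu\ge \tfrac{1}{\kappa\mu}$ since $\|\mathbf{A}\mathbf{b}\|\ge \sigma_{\min}(\mathbf{A})\|\mathbf{b}\|=1/\kappa$; for $\mathcal{A}=\mathbf{A}^\top$ we instead use $\unitary_\mathbf{A}^\dagger$, which is a $\mu$-block-encoding of $\mathbf{A}^\dagger=\mathbf{A}^\top$, and argue identically. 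In either case the flagged branch is boosted to within $\epsilon$ of $\ket{\mathcal{A}\mathbf{b}}$ by amplitude amplification; using the variable-time version one amplifies without re-running the $\ket{\mathbf{b}}$-preparation on every round, which is exactly what produces the additive $\kappa(\mu\cost_\mathbf{A}+\cost_\mathbf{b})$ shape rather than the multiplicative $\kappa\mu(\cost_\mathbf{A}+\cost_\mathbf{b})$, and the $\polylog(\kappa/\epsilon)$ here comes only from the amplitude estimation used to set the renormalization.

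The substantive case is $\mathcal{A}=\mathbf{A}^{-1}$, and it rests on a classical approximation-theory fact: there is an odd polynomial $p$ with $\|p\|_{[-1,1]}\le 1$, of degree $d=\bigO\!\paren{\kappa\mu\log(\kappa\mu/\epsilon)}$, such that $\big|p(x)-\tfrac{1}{2\kappa\mu}\cdot\tfrac{1}{x}\big|\le \epsilon'$ for all $x\in\big[\tfrac{1}{\kappa\mu},\tfrac{1}{\mu}\big]$; such a $p$ is obtained by truncating a Chebyshev series for $1/x$ away from the origin (the Childs-Kothari-Somma construction), and the prefactor $\tfrac{1}{2\kappa\mu}$ is chosen so that $\tfrac{1}{2\kappa\mu}\|\mathbf{A}^{-1}\|=\tfrac{1}{2\mu}<1$ keeps $p$ admissible for QSVT. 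Running QSVT with $p$ on $\unitary_\mathbf{A}$ produces a $1$-block-encoding of $p(\mathbf{A}/\mu)\approx \tfrac{1}{2\kappa\mu}\mathbf{A}^{-1}$ at cost $\bigO(d\,\cost_\mathbf{A})$; applying it to $\ket{\mathbf{b}}$ leaves in the flagged branch a vector $\epsilon'$-close to $\tfrac{1}{2\kappa\mu}\mathbf{A}^{-1}\mathbf{b}$, of norm at least $\tfrac{1}{2\kappa\mu}$ because $\|\mathbf{A}^{-1}\mathbf{b}\|\ge\|\mathbf{b}\|/\|\mathbf{A}\|=1$.

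It remains to amplify, and this is where I expect the main obstacle. Naive amplitude amplification needs $\bigO(\kappa\mu)$ rounds each re-executing the degree-$d$ QSVT and the $\ket{\mathbf{b}}$-preparation, costing a wasteful $\bigO(\kappa^2\mu^2\cost_\mathbf{A})$. The fix --- variable-time amplitude amplification (Ambainis), as used by Childs-Kothari-Somma and by Chakraborty et al.\ --- replaces the single high-degree inversion by a logarithmic ladder of QSVT filters of geometrically growing degree that successively resolve the singular-value scales of $\mathbf{A}$, amplifying the portion that has already converged at each level and only refining the remainder; the delicate part is the bookkeeping of the nested flag registers and of the amount of amplitude surviving at each timescale, which is what turns the bound into $\bigO\!\paren{\kappa\paren{\mu\cost_\mathbf{A}+\cost_\mathbf{b}}\polylog(\kappa/\epsilon)}$. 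Finally one checks that the errors compose additively --- the block-encoding error, the $\|p-\tfrac{1}{2\kappa\mu}x^{-1}\|$ term, and the amplification/estimation error each contribute linearly on the relevant subnormalized states --- so that choosing $\epsilon'=\Theta(\epsilon)$ yields $\|\ket{\mathbf{x}}-\ket{\mathcal{A}\mathbf{b}}\|\le\epsilon$, while $\bigO(\log n)$ repetitions of the final amplitude estimation drive the failure probability below $1/\poly(n)$.
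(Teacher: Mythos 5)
The paper offers no proof of this theorem---it is imported directly from \cite{Chakraborty2019ThePO}---and your sketch follows essentially the same route as that reference: QSVT on the block-encoding, the Childs--Kothari--Somma odd polynomial approximating $1/x$ on $[1/(\kappa\mu),1/\mu]$, and Ambainis-style variable-time amplitude amplification, which is precisely what yields the additive $\kappa(\mu \cost_\mathbf{A}+\cost_\mathbf{b})$ cost with failure probability $1/\poly(n)$. One harmless normalization slip: applying $p$ to the singular values of $\mathbf{A}/\mu$ block-encodes $\tfrac{1}{2\kappa}\mathbf{A}^{-1}$ (the factor $\mu$ cancels), not $\tfrac{1}{2\kappa\mu}\mathbf{A}^{-1}$, so the flagged amplitude is at least $1/(2\kappa)$; your $1/(2\kappa\mu)$ is a valid but looser lower bound and the rest of the argument goes through unchanged.
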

    
    \begin{thm}[Vector tomography \cite{Kerenidis2020QuantumAF}]
    \label{theorem:tomography}
        Let $\mathbf{x}\in\mathbb{R}^n$ be a normalized vector. Given a procedure preparing $\ket{\mathbf{x}}$ with cost $\cost_\mathbf{x}$, there is a tomography algorithm that with probability at least $(1-1/\poly(n))$ produces a unit vector $\widetilde{\mathbf{x}}\in\mathbb{R}^n$ 
        such that $\|\mathbf{x}-\widetilde{\mathbf{x}}\|_\infty\leq \epsilon$ with cost $\bigO(\cost_\mathbf{x} \log(n) / \epsilon^2 )$.
    \end{thm}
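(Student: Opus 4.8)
The plan is to give an explicit two‑phase tomography procedure: a \textbf{magnitude estimation} phase that learns $|x_i|$ for every $i$, followed by a \textbf{sign estimation} phase that learns $\mathrm{sign}(x_i)$ only for the few coordinates where the sign actually matters, and to output $\widetilde{x}_i=\sigma_i\sqrt{n_i/N}$, where $\sigma_i\in\{+1,-1\}$ is the estimated sign and $n_i$ is a measurement count defined below.

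For magnitude estimation I would measure $\ket{\mathbf{x}}$ in the computational basis $N=\Theta(\log(n)/\epsilon^2)$ times, let $n_i$ be the number of times outcome $i$ occurs, and set $m_i=\sqrt{n_i/N}$. Since $\sum_i n_i=N$, the vector $(m_i)_i$ is automatically a unit vector, so no renormalization is needed. The core estimate is that $|m_i-|x_i||\le\epsilon$ for every $i$ with probability $1-1/\poly(n)$, proved coordinatewise with a union bound and a split on the size of $p_i:=x_i^2$. If $p_i\gtrsim\epsilon^2$, a multiplicative Chernoff bound with deviation $\delta=\epsilon/\sqrt{p_i}\le 1$ gives $|\widehat{p}_i-p_i|\le\epsilon\sqrt{p_i}$ with probability $1-n^{-\Omega(1)}$, and then $|\sqrt{\widehat{p}_i}-\sqrt{p_i}|=|\widehat{p}_i-p_i|/(\sqrt{\widehat{p}_i}+\sqrt{p_i})\le|\widehat{p}_i-p_i|/\sqrt{p_i}\le\epsilon$. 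If $p_i\lesssim\epsilon^2$, an upper‑tail Chernoff bound shows $\widehat{p}_i=O(\epsilon^2)$ with probability $1-n^{-\Omega(1)}$, so both $\sqrt{\widehat{p}_i}$ and $\sqrt{p_i}$ are $O(\epsilon)$ and hence so is their difference. Taking the constant in $N$ large enough and a union bound over the $n$ coordinates gives the claimed failure probability. I expect this two‑case concentration argument to be the main obstacle: a naive additive Chernoff bound on $\widehat{p}_i-p_i$ would force $N=\Theta(1/\epsilon^4)$ and break the stated running time.

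For sign estimation, observe that at most $1/\epsilon^2$ coordinates satisfy $|x_i|\ge\epsilon$, and for all the others any sign choice incurs $\ell_\infty$ error at most $|m_i|+|x_i|=O(\epsilon)$, so only the large coordinates need correct signs. Using the preparation unitary $U_{\mathbf{x}}$ together with a cheap (at most $N$‑sparse) preparation of $\ket{\mathbf{m}}=\sum_i m_i\ket{i}$, I would prepare $\tfrac1{\sqrt2}(\ket{0}\ket{\mathbf{x}}+\ket{1}\ket{\mathbf{m}})$, apply a Hadamard to the control qubit, and measure both registers. Conditioned on observing index $i$, the control qubit equals $0$ with probability $(x_i+m_i)^2/\big(2(x_i^2+m_i^2)\big)$; when $|x_i|\ge\epsilon$ and the phase‑1 error is a small fraction of $\epsilon$, this probability is bounded away from $1/2$ (toward $0$ if $x_i>0$, toward $1$ if $x_i<0$), and index $i$ is observed with probability $\ge\tfrac12 x_i^2\ge\tfrac12\epsilon^2$. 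Hence after $\Theta(\log(n)/\epsilon^2)$ repetitions each large coordinate is hit $\Omega(\log n)$ times, a majority vote on the control outcomes recovers $\mathrm{sign}(x_i)$ with probability $1-n^{-\Omega(1)}$, and a union bound over coordinates finishes this phase. The only point to verify is that the sparse reference preparation costs $\polylog(n)$ per call, so it does not dominate the $\Theta(\cost_{\mathbf{x}}\log(n)/\epsilon^2)$ uses of $U_{\mathbf{x}}$.

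Finally I would combine the phases by setting $\widetilde{x}_i=\sigma_i\, m_i$. For coordinates whose sign is recovered correctly, $|\widetilde{x}_i-x_i|=|m_i-|x_i||\le\epsilon$; for the remaining coordinates, which all have $|x_i|<\epsilon$, $|\widetilde{x}_i-x_i|\le|m_i|+|x_i|=O(\epsilon)$. Thus $\|\widetilde{\mathbf{x}}-\mathbf{x}\|_\infty=O(\epsilon)$, the output $\widetilde{\mathbf{x}}$ is a unit vector by construction, and the total cost is $\Theta(\log(n)/\epsilon^2)$ uses of the preparation procedure plus $\polylog(n)$ overhead, i.e.\ $\bigO(\cost_{\mathbf{x}}\log(n)/\epsilon^2)$. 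Rescaling $\epsilon$ by the appropriate constant yields the statement exactly as written.
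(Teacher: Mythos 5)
The paper itself contains no proof of this statement: Theorem~\ref{theorem:tomography} is imported as a black box from \cite{Kerenidis2020QuantumAF}, so there is no internal argument to compare against. Your reconstruction follows the same two-phase scheme as that reference — computational-basis sampling with $m_i=\sqrt{n_i/N}$, then a Hadamard-test comparison against the estimated vector to recover signs of the large coordinates — and your concentration analysis (multiplicative Chernoff when $p_i\gtrsim\epsilon^2$, an upper-tail bound when $p_i\lesssim\epsilon^2$, then a union bound) is sound and correctly avoids the naive additive bound that would force $N=\Theta(1/\epsilon^4)$. Two minor points to fix or make explicit. First, the bias direction of the control qubit is reversed: for $x_i>0$ with $m_i\approx|x_i|>0$ the conditional probability $(x_i+m_i)^2/\bigl(2(x_i^2+m_i^2)\bigr)$ is close to $1$, not $0$, and close to $0$ for $x_i<0$; this is only a labeling slip and does not affect the majority-vote argument, which, as you note, also requires running phase 1 at accuracy a constant fraction of $\epsilon$ so the bias is bounded away from $1/2$. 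Second, the claim that the $N$-sparse reference state $\ket{\mathbf{m}}$ can be prepared in $\polylog(n)$ time per call needs justification: a direct circuit costs $\widetilde{\bigO}(N)=\widetilde{\bigO}(\log(n)/\epsilon^2)$ gates per copy, so you should either invoke a QRAM-style data structure built once in $\widetilde{\bigO}(N)$ time or note that the stated cost counts invocations of the preparation procedure (which must also be available in controlled form for phase 2); with either convention the claimed $\bigO(\cost_\mathbf{x}\log(n)/\epsilon^2)$ bound stands after the constant rescaling of $\epsilon$ you already perform.
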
 
    
    \indent Moreover, we also introduce this claim from \cite{Kerenidis2020QuantumGD} that bounds the distance between two quantum states in terms of the distance between the corresponding unnormalized vectors:
    
    \begin{claim}
    \label{claim:angle-error}
        Let $\theta$ be the angle between two vectors $\mathbf{x}$ and $\mathbf{y}$ and assume that $\theta < \pi/2$. Then $\|\mathbf{x}-\mathbf{y}\|\leq \epsilon$ implies $\|\ket{\mathbf{x}}-\ket{\mathbf{y}}\|\leq\sqrt{2}\epsilon/\|\mathbf{x}\|$.
    \end{claim}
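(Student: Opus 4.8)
The plan is to reduce the statement to a one–dimensional trigonometric estimate. Write $\ket{\mathbf{x}}=\mathbf{x}/\|\mathbf{x}\|$ and $\ket{\mathbf{y}}=\mathbf{y}/\|\mathbf{y}\|$ for the two normalized vectors; these are unit vectors at angle $\theta$, so expanding the squared norm gives $\|\ket{\mathbf{x}}-\ket{\mathbf{y}}\|^2 = 2-2\cos\theta = 4\sin^2(\theta/2)$, hence $\|\ket{\mathbf{x}}-\ket{\mathbf{y}}\| = 2\sin(\theta/2)$ (using $\theta\in[0,\pi]$ so that $\sin(\theta/2)\ge 0$). It therefore suffices to bound $\sin(\theta/2)$ in terms of $\epsilon$ and $\|\mathbf{x}\|$.

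First I would extract a bound on $\sin\theta$ from the hypothesis $\|\mathbf{x}-\mathbf{y}\|\le\epsilon$ by a geometric argument: $\|\mathbf{x}-\mathbf{y}\|$ is at least the distance from $\mathbf{x}$ to the line $\mathbb{R}\mathbf{y}$ through the origin, and that distance equals $\|\mathbf{x}\|\sin\theta$. Concretely, decomposing $\mathbf{x} = (\|\mathbf{x}\|\cos\theta)\,\mathbf{y}/\|\mathbf{y}\| + \mathbf{v}$ with $\mathbf{v}\perp\mathbf{y}$ and $\|\mathbf{v}\| = \|\mathbf{x}\|\sin\theta$, one gets $\|\mathbf{x}-\mathbf{y}\|^2 = (\|\mathbf{x}\|\cos\theta-\|\mathbf{y}\|)^2 + \|\mathbf{x}\|^2\sin^2\theta \ge \|\mathbf{x}\|^2\sin^2\theta$. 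Thus $\sin\theta \le \epsilon/\|\mathbf{x}\|$.

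Next I would pass from $\sin\theta$ back to $\sin(\theta/2)$ using the double-angle identity $\sin\theta = 2\sin(\theta/2)\cos(\theta/2)$, i.e. $2\sin(\theta/2) = \sin\theta/\cos(\theta/2)$. Here the hypothesis $\theta<\pi/2$ is used, and is the only place it is needed: it gives $\theta/2<\pi/4$, hence $\cos(\theta/2) > \cos(\pi/4) = 1/\sqrt2$. Combining, $\|\ket{\mathbf{x}}-\ket{\mathbf{y}}\| = 2\sin(\theta/2) = \sin\theta/\cos(\theta/2) \le \sqrt2\,\sin\theta \le \sqrt2\,\epsilon/\|\mathbf{x}\|$, which is the claim.

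I do not expect any real obstacle: the argument is a chain of two elementary identities ($2-2\cos\theta = 4\sin^2(\theta/2)$ and $\sin\theta = 2\sin(\theta/2)\cos(\theta/2)$), one elementary inequality ($\cos(\theta/2)>1/\sqrt2$ for $\theta<\pi/2$), and the orthogonal-decomposition bound $\|\mathbf{x}-\mathbf{y}\|\ge\|\mathbf{x}\|\sin\theta$. The only point worth stating carefully is this last geometric bound and the fact that it, rather than the symmetric bound with $\|\mathbf{y}\|$, is what yields the $\|\mathbf{x}\|$ in the denominator.
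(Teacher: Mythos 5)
Your proof is correct: the chord-length identity $\|\ket{\mathbf{x}}-\ket{\mathbf{y}}\|=2\sin(\theta/2)$, the orthogonal-decomposition bound $\|\mathbf{x}-\mathbf{y}\|\geq\|\mathbf{x}\|\sin\theta$, and the estimate $\cos(\theta/2)>1/\sqrt{2}$ for $\theta<\pi/2$ combine exactly as you say. The paper itself states this claim without proof, importing it from the cited reference, and your argument is essentially the same as the one given there, so there is nothing further to reconcile.
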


\section{Quantum policy iteration}
\label{section:quantum-policy-iteration}

\subsection{General framework for quantum policy iteration}
\label{subsection:qpi-framework}

    \indent We start by providing a general framework for quantum reinforcement learning by appropriately extending the policy iteration scheme to the quantum case. In this section we look at the case where we work directly with exact representations of the value function $Q^{\pi}$, while in Section \ref{section:approximate-quantum-policy-iteration}, we generalize the algorithm to the approximate case.
    Similarly to the classical case, we want to build quantum methods that generate a sequence of policies improving at each iteration and converging to an approximation of the optimal policy $\pi^*$. We refer to our framework as \textit{Quantum Policy Iteration} (QPI) and we summarize the general procedure in Figure \ref{fig:scheme-quantum-policy-iteration}. 
    
    \indent We define the \textit{quantum policy evaluation} step as a quantum procedure (one can think of this as a unitary operation or a quantum circuit) that takes as input a classical policy $\pi$ and performs a mapping to create a quantum state that approximates or more generally contains some information about the classical value function $Q^\pi$. Again here one can define different quantum outputs that contain information about the value function and we will provide examples in the remaining of the paper. Similarly, we define the \textit{quantum policy improvement} step, as a quantum procedure (one may want to think of this as a generalized measurement operation), that takes as input the output quantum states from the quantum policy evaluation procedure, and extracts classical information by performing measurements on them, in order to compute a new policy $\pi'$ based on some policy update rule.
    
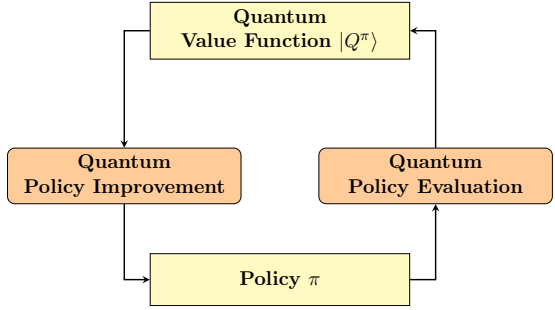
\begin{figure}[t]
    \centering
    \resizebox{0.45\linewidth}{!}{
            \begin{tikzpicture}[node distance=2cm,scale=0.5]
            \node[font=\bfseries] (function) [box_yellow] {Quantum \\ Value Function $\ket{Q^\pi}$};
            \node[font=\bfseries] (pi_improv) [box_orange, below of=function, ,yshift=-0.8cm, xshift=-3cm] {Quantum \\ Policy Improvement};
            \node[font=\bfseries] (pi_eval) [box_orange, right of=pi_improv, xshift=4cm] {Quantum \\ Policy Evaluation};
            \node[font=\bfseries] (policy) [box_yellow, below of=pi_improv, xshift=3cm] {Policy $\pi$};
            \draw [arrow] (pi_eval) |- (function);
            \draw [arrow] (function) -| (pi_improv);
            \draw [arrow] (pi_improv) |- (policy);
            \draw [arrow] (policy) -| (pi_eval);
            \end{tikzpicture}
                }
    \caption{Quantum Policy Iteration}
    \label{fig:scheme-quantum-policy-iteration}
\end{figure}

\subsection{Quantum policy evaluation}
\label{subsection:qpi-evaluation}

    \indent We will instantiate here our quantum policy iteration framework by providing one specific example of a quantum policy iteration method. We start by defining for any policy $\pi$ the quantum state: $$ \ket{Q^\pi}  = \frac{1}{\|Q^\pi\|}\sum_{sa} Q^\pi(s,a)\ket{s,a} $$ which encodes the classical value function $Q^\pi$ in the amplitudes of a normalized quantum state, and we call this state the quantum value function state. As we can see, this state contains information about the value function $Q^\pi$, though one needs to be careful since a copy of this state cannot recreate the complete function $Q^\pi$. It can still provide useful information, in particular we can use this state to sample a pair $(s,a)$ with probability proportional to $(Q^\pi(s,a))^2$. Moreover, there exist efficient quantum procedures for producing approximations to this state that we describe below. 
    
    \indent We assume we have quantum access to the MDP parameters $P$ and $R$ so that we can construct, for any policy $\pi$, a block-encoding of the policy transition matrix defined as: $$P^\pi:=\big[p(s,a,s') \pi(s',a')\big]_{sa,s'a'}$$ 
    
    \indent In Subsection \ref{subsection:qpi-block-encoding}, we discuss how one can get quantum access to the parameters of any MDP $\mathcal{M}$ and policy $\pi$ assuming that the transition matrix $P$ and the reward function $R$ can be efficiently computed, which is the case for many classes of environments. Denoting $\mathbf{A}^\pi = I-\gamma P^\pi$ and $\mathbf{b}=R$, it follows from the Bellman equation that the value function $Q^\pi$ is the solution of the linear system $\mathbf{A}^\pi Q^\pi=\mathbf{b}$. Using the \emph{quantum linear system solver} from Theorem \ref{theorem:quantum-linear-algebra}, we can build an $\epsilon$-approximation (in $\ell_2$-norm) state $\ket{\widehat{Q}^\pi}$ to the quantum value function state $\ket{Q^\pi}$: 

    \begin{thm}[Quantum policy evaluation]
    \label{theorem:quantum-policy-evaluation}
        Let $\mathcal{M}=(\mathcal{S},\mathcal{A},P,R,\gamma)$ be a finite Markov decision process, $\pi$ a  policy and $\epsilon > 0$ the precision parameter. Suppose there exists a $\mu_{P^\pi}$-block-encoding of the policy transition matrix $P^\pi$ that can be implemented with cost $\cost_{P^\pi}$. Also suppose we can prepare the reward vector $\ket{R}$ with cost $\cost_R$. Then there exists a quantum algorithm that with probability at least $(1-1/\poly(SA))$ returns a quantum state $\ket{\widehat{Q}^\pi}$  such that  $\|\ket{\widehat{Q}^\pi}-\ket{{Q}^\pi}\|\leq\epsilon$ with cost: $$ \bigO\paren*{\paren*{\mu_{P^\pi}\cost_{P^\pi}+\cost_R}\Gamma\polylog\paren*{\Gamma/\epsilon}} $$ 
    \end{thm}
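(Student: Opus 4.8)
The plan is to apply the quantum linear system solver (Theorem~\ref{theorem:quantum-linear-algebra}) to the Bellman linear system $\mathbf{A}^\pi Q^\pi = \mathbf{b}$ with $\mathbf{A}^\pi = I - \gamma P^\pi$ and $\mathbf{b} = R$, and then carefully track how the parameters $\kappa$, $\mu$, and the block-encoding costs of $\mathbf{A}^\pi$ feed into the claimed running time. First I would build a block-encoding of $\mathbf{A}^\pi = I - \gamma P^\pi$: given the $\mu_{P^\pi}$-block-encoding of $P^\pi$ at cost $\cost_{P^\pi}$ and the trivial $1$-block-encoding of the identity at cost $\bigO(1)$, Theorem~\ref{theorem:quantum-arithmetics} (linear combination of block-encoded matrices) yields a $(1 + \gamma \mu_{P^\pi})$-block-encoding of $\mathbf{A}^\pi$ at cost $\bigO(\cost_{P^\pi})$. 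Since $\mu_{P^\pi} \geq \|P^\pi\| \geq \gamma^{-1}\|\gamma P^\pi\|$, we have $1 + \gamma\mu_{P^\pi} = \bigO(\mu_{P^\pi})$ in the regime of interest, so the effective block-encoding factor $\mu$ for $\mathbf{A}^\pi$ is $\bigO(\mu_{P^\pi})$ at cost $\cost_{\mathbf{A}^\pi} = \bigO(\cost_{P^\pi})$.

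Next I would bound the condition number $\kappa$ of $\mathbf{A}^\pi$. The matrix $P^\pi$ is substochastic-like: its rows, being of the form $p(s,a,\cdot)\pi(\cdot,\cdot)$, are entrywise nonnegative and sum to $1$, so $\|P^\pi\|_\infty \le 1$ and the spectral radius of $P^\pi$ is at most $1$; more carefully $\gamma P^\pi$ has spectral radius at most $\gamma < 1$. Hence $\mathbf{A}^\pi = I - \gamma P^\pi$ is invertible with $(\mathbf{A}^\pi)^{-1} = \sum_{k\ge 0}(\gamma P^\pi)^k$, and one gets the standard estimates $\|\mathbf{A}^\pi\| \le 1 + \gamma$ and $\|(\mathbf{A}^\pi)^{-1}\| \le \Gamma = 1/(1-\gamma)$ (the latter by using that $\|P^\pi\|$ acting appropriately, or by a Neumann series bound in the relevant norm, gives $\|(\mathbf{A}^\pi)^{-1}\| \le \sum_k \gamma^k = \Gamma$). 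Therefore $\kappa = \|\mathbf{A}^\pi\|\cdot\|(\mathbf{A}^\pi)^{-1}\| = \bigO(\Gamma)$. Plugging $\kappa = \bigO(\Gamma)$, $\mu = \bigO(\mu_{P^\pi})$, $\cost_{\mathbf{A}^\pi} = \bigO(\cost_{P^\pi})$, and $\cost_{\mathbf{b}} = \cost_R$ into the quantum linear system cost $\bigO(\kappa(\mu\,\cost_{\mathbf{A}} + \cost_{\mathbf{b}})\polylog(\kappa/\epsilon))$ gives exactly $\bigO((\mu_{P^\pi}\cost_{P^\pi} + \cost_R)\Gamma\,\polylog(\Gamma/\epsilon))$, with the success probability $1 - 1/\poly(SA)$ inherited directly since $\mathbf{A}^\pi$ acts on $\mathbb{R}^{SA\times SA}$. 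One technical wrinkle is that Theorem~\ref{theorem:quantum-linear-algebra} is stated for $\|\mathbf{A}\| = 1$ (with the footnote rescaling $\mu \to \mu/\|\mathbf{A}\|$ otherwise); since $\|\mathbf{A}^\pi\| = \Theta(1)$ this rescaling changes nothing asymptotically, and the solver returns $\ket{\widehat{Q}^\pi}$ with $\|\ket{\widehat{Q}^\pi} - \ket{(\mathbf{A}^\pi)^{-1}\mathbf{b}}\| \le \epsilon$, and $(\mathbf{A}^\pi)^{-1}\mathbf{b} = (\mathbf{A}^\pi)^{-1} R = Q^\pi$ by the Bellman equation, so $\ket{(\mathbf{A}^\pi)^{-1}\mathbf{b}} = \ket{Q^\pi}$.

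The main obstacle I anticipate is getting the condition-number bound $\kappa = \bigO(\Gamma)$ tight and rigorous, because the relevant Neumann-series bound $\|(\mathbf{A}^\pi)^{-1}\|_2 \le \Gamma$ is cleanest in the $\ell_\infty$ operator norm (where $\|\gamma P^\pi\|_\infty \le \gamma$), whereas the quantum linear system solver's $\kappa$ is defined via the spectral/$\ell_2$ norm; one must either argue directly about singular values of $I - \gamma P^\pi$ or invoke the bound $\sigma_{\min}(\mathbf{A}^\pi) \ge 1 - \gamma\sigma_{\max}(P^\pi)$ together with a norm bound on $P^\pi$ (here $P^\pi$ need not be a contraction in $\ell_2$, so a small amount of care is needed — e.g. bounding $\|P^\pi\|_2 \le \sqrt{\|P^\pi\|_1\|P^\pi\|_\infty}$, which is $\bigO(1)$ only if column sums are also controlled, or else working with the resolvent bound $\|(I-\gamma P^\pi)^{-1}\|_2 \le \sum_k \gamma^k \|P^\pi\|_2^k$ and separately arguing $\|P^\pi\|_2 = \bigO(1)$). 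Everything else — the block-encoding arithmetic, the substitution into the cost formula, and the identification of the solution state — is routine bookkeeping.
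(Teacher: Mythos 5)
Your proposal follows the paper's proof essentially step for step: build a $(1+\gamma\mu_{P^\pi})$-block-encoding of $\mathbf{A}^\pi = I-\gamma P^\pi$ as a linear combination with the identity via Theorem~\ref{theorem:quantum-arithmetics}, feed it and $\ket{R}$ into the linear system solver of Theorem~\ref{theorem:quantum-linear-algebra}, and bound $\kappa(\mathbf{A}^\pi)=\bigO(\Gamma)$ and $\|\mathbf{A}^\pi\|^{-1}\kappa(\mathbf{A}^\pi)\leq\Gamma$ from the stochasticity of $P^\pi$ to obtain the stated cost. The $\ell_2$-versus-$\ell_\infty$ subtlety you flag at the end is genuine, but the paper does not address it either: it simply asserts $\|P^\pi\|=1$ from row-stochasticity and concludes that the singular values of $\mathbf{A}^\pi$ lie in $[1-\gamma,1+\gamma]$, so on that point your write-up is if anything the more careful of the two.
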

    
\newpage

    \begin{proof}
        Using Theorem \ref{theorem:quantum-arithmetics}, we can implement with cost $\cost_{\mathbf{A}}=\bigO(\cost_{P^\pi})$ a $(1+\gamma \mu_{P^\pi})$-block-encoding of $\mathbf{A}^\pi=I-\gamma P^\pi$ as a linear combination of the trivial $1$-block-encoding of $I$ and the $\mu_{P^\pi}$-block-encoding of $P^\pi$. Then, we apply the quantum linear system solver from Theorem \ref{theorem:quantum-linear-algebra} with the procedure that generates the state $\ket{\mathbf{b}}=\ket{R}$ to return a quantum state $\ket{\widehat{Q}^\pi}$ $\epsilon$-close to $\ket{Q^\pi}=\ket{(\mathbf{A}^\pi)^{-1}\mathbf{b}}$ with cost: $$ \bigO\paren*{\kappa(\mathbf{A}^\pi)\paren*{\|\mathbf{A}^\pi\|^{-1}\paren*{1+\gamma\mu_{P^\pi}}\cost_{P^\pi}+\cost_R}\polylog\paren*{\kappa(\mathbf{A}^\pi)/\epsilon}} $$
        
        Since $P^\pi$ is a row-stochastic matrix, we have $ \|P^\pi \|\ = 1$ which implies that the singular values $\sigma_{\mathbf{A}}$ of $\mathbf{A}^\pi$ range in $[1-\gamma,1+\gamma]$.  We finish the proof by plugging the following upper bounds in the total cost: $1+\gamma \mu_{P^\pi} = \bigO(\mu_{P^\pi})$, $\kappa(\mathbf{A}^\pi) = \frac{\max \sigma_{\mathbf{A}}}{\min \sigma_{\mathbf{A}}}\leq \frac{1+\gamma}{1-\gamma}= \bigO(\Gamma)$ and $\|\mathbf{A}^\pi\|^{-1}\kappa(\mathbf{A}^\pi) =\frac{1}{\min \sigma_{\mathbf{A}}}\leq \frac{1}{1-\gamma}=\Gamma$.
    \end{proof}
    
    \indent We can now go ahead and define our specific quantum policy evaluation method as the one that given a policy $\pi$, uses a quantum linear system solver to create a quantum state which is an approximation of the state $\ket{Q^\pi}$. 

\subsection{Quantum policy improvement}
\label{subsection:qpi-improvement}

    \indent We will now describe a quantum policy improvement method that works together with the specific quantum policy evaluation method we described above, where approximations $\ket{\widehat{Q}^\pi}$ to $\ket{{Q}^\pi}$ are produced. Let us assume that these states can be produced in time $\cost_{Q^\pi}$. 
    
    \indent The quantum policy improvement method consists of first performing a number of $M$ measurements of the state $\ket{\widehat{Q}^\pi}$ for a total cost of $\bigO(M \times \cost_{Q^\pi})$. Denote by $M(s,a)$ the number of times outcome $(s,a)$ is observed. Then, we define the following strategy for the policy update: $$ \pi'(s) = \argmax_a M(s,a) \approx \argmax_a Q^\pi(s,a)$$ which takes time $\bigO(SA)$.
    
    \indent Let us analyze this policy improvement method. A measurement of the state $\ket{\widehat{Q}^\pi}$ outputs some state-action pair $(s,a)$ with probability $|\braket{\widehat{Q}^\pi}{s,a}|^2$ and we have: 
    
    \begin{equation*}
        \begin{aligned}
            \lim_{M\xrightarrow{}+\infty} & \frac{M(s,a)}{M} & = \quad& |\braket{\widehat{Q}^\pi}{s,a}|^2 & 
        \end{aligned}
    \end{equation*}
    
    \indent We make now some remarks on the appropriate value of $M$. Note first that the approximation state $\ket{\widehat{Q}^\pi}$ is $\epsilon$-close in $\ell_2$-norm to the quantum value function state $\ket{{Q}^\pi}$ and the cost of creating these states depends only logarithmically on the parameter $\epsilon$. Thus, we can take this parameter very small, and so if the number of measurements $M$ guarantees that we can closely reconstruct the state $\ket{\widehat{Q}^\pi}$, then this guarantee will carry over to the state $\ket{{Q}^\pi}$. Thus, if we want to guarantee an $\epsilon$-approximation of the quantum value function state $\ket{{Q}^\pi}$ in $\ell_2$-norm, one would need $M = \widetilde{\bigO}(SA/\epsilon^2)$, and for $\ell_\infty$-norm, which is what is used in reinforcement learning, $M = \widetilde{\bigO}(1/\epsilon^2)$. In this case, we are able to reconstruct an $\epsilon$-approximation in $\ell_\infty$-norm of the normalized quantum value function. 
    
    \begin{algorithm}[t]
    \caption{Quantum Policy Iteration}
    \label{algorithm:quantum-policy-iteration}
    \begin{algorithmic}
    \STATE {\bfseries input} MDP $\mathcal{M}$, number of measurements $M$, number of iterations $T$, precision $\epsilon$.
    \STATE initialize policy $\pi_0$.
    \FOR{$t=0$ {\bfseries to} $T-1$}
    \STATE initialize measurement histogram $M_t(s,a)=0$ for every pair $(s,a)$.
    \FOR{$m=0$ {\bfseries to} $M-1$}
        \STATE use the quantum linear solver with precision $\epsilon$ to obtain $\ket{\widehat{Q}^{\pi_t}} \approx \ket{(I-\gamma P^{\pi_t})^{-1}R}$.
        \STATE measure $\ket{\widehat{Q}^{\pi_t}}$ to get pair $(s,a)$ with probability $|\braket{\widehat{Q}^{\pi_t}}{s,a}|^2$.
        \STATE update measurement histogram $M_t(s,a)={}M_t(s,a)+1$.
    \ENDFOR
    \STATE improve policy as $\pi_{t+1}(s)=\argmax_a M_t(s,a)$ for every $s$.
    \ENDFOR
    \STATE {\bfseries output} policy $\pi_T$
    \end{algorithmic}
\end{algorithm}
    
    \indent In practice, setting $M$ to be $\widetilde{\bigO}(1/\epsilon^2)$ may be more than what is needed, since our goal is not to recreate $Q^\pi$ but to find $\pi'(s) = \argmax_a Q^\pi(s,a)$. This number $M$ can be adjusted in practice until the given method provides good results and in fact, in our experiments it was tuned to be significantly smaller than the theoretical value $\widetilde{\bigO}(1/\epsilon^2)$. 
    
    \indent It would be interesting to understand theoretically the number of samples needed for a successful implementation of a quantum policy improvement scheme, though we believe that in the end this would be use case-specific. Last, note that for the policy update rule we do not estimate directly the value function $Q^\pi$ but its normalized version $q^\pi$, i.e. we do directly the measurement outcomes $M(s,a)$, since the norm does not change the $\argmax$ calculation.
    
    \indent To sum up, we have defined a quantum policy improvement method as the one that given access to a quantum procedure that outputs a quantum value function state $\ket{Q^\pi}$, performs a number $M$ of measurements in order to create a measurement histogram from which the policy is updated via an $\argmax$ computation. 
    
    \indent The quantum policy iteration method we presented appears in Algorithm \ref{algorithm:quantum-policy-iteration}. In the next subsections we prove convergence, how to construct the necessary block-encodings for the  Markov decision process, and analyze its running time.

\subsection{Constructing block-encodings}
\label{subsection:qpi-block-encoding}
    
    \indent We are now going to show how to construct the block-encoding of $P^\pi$ and the unitary that prepares the quantum state associated to $R$ that we need in order to perform quantum policy iteration. In the classical case, one needs to have access to the transition matrix $P$, the policy $\pi$ and the reward vector $R$ in order to compute the policy-transition matrix $P^\pi$ and the corresponding value function $ Q^\pi=(I-\gamma P^\pi)^{-1} R = (\mathbf{A}^\pi)^{-1}\mathbf{b} $. In this subsection, we will discuss the access we need in the quantum case. More precisely, we will assume quantum access to the parameters $(\mathcal{S},\mathcal{A},P,R,\gamma)$ of the MDP $\mathcal{M}$ and to the policy $\pi$ which will be used to build the block-encoding of  $\mathbf{A}^\pi$ and the state $\ket{\mathbf{b}}$ as in Theorem \ref{theorem:quantum-policy-evaluation}. 
    
    \indent Next, we specify formally what we mean by quantum access to the MDP $\mathcal{M}$:
    
\newpage
    
    \begin{definition}[Quantum access to $\mathcal{M}$]
    \label{definition:block-mdp}
        Let $\mathcal{M}=(\mathcal{S},\mathcal{A},P,R,\gamma)$ be a finite MDP and let ${\rm c}_P\in\mathbb{R}_+$ such that $ {\rm c}_P \geq {\rm s}_{1}(P^\top)=\max_{s'}\sum_{sa} p(s,a,s') $. We say that we have quantum access to $\mathcal{M}$ with costs $(\cost_{P}, \cost_R)$ if we can implement: 
        \begin{enumerate}
            \item An oracle for the rows of the transition matrix with cost $\cost_P$ such that
            $$ \oracle_P^r: \ket{s,a}\ket{0_s} \longrightarrow \sum_{s'} \sqrt{p(s,a,s')} \ket{s,a}\ket{s'} $$
            
            \item An oracle for the columns of the transition matrix with cost $\cost_P$ such that:
            $$\oracle_P^c: \ket{0_s,0_a}\ket{s'} \longrightarrow \frac{1}{\sqrt{{\rm c}_P}} \Big[\sum_{sa} \sqrt{p(s,a,s')} \ket{s,a}\ket{s'} \Big] + \ket{G^\perp_{s'}}\ket{s'}$$
            
            where $\{\ket{G^\perp_{s'}}\}_{s'\in\mathcal{S}}$ are unnormalized garbage quantum states such that $\braket{G^\perp_{s'}}{s,a}=0$ for all transitions $(s,a,s') \in \mathcal{S}\times\mathcal{A}\times\mathcal{S}$.
            \item An oracle for the reward vector with cost $\cost_R$ such that:
            $$\oracle_R:\ket{0_s,0_a} \longrightarrow \ket{R} =  \frac{1}{\|R\|}\sum_{sa} r(s,a)\ket{s,a}$$
        \end{enumerate}
    \end{definition}
    
    \indent Let us look at the above oracles in more detail. The oracle $\oracle_R$ gives quantum access to the reward vector $\ket{R}$ whilst the oracles $\oracle_P^r$ and $\oracle_P^c$ correspond to the unitaries used to construct block-encodings from Lemma \ref{lemma:constructing-blocks} applied on the transition matrix $P$ where we have set the factor $p$ to be $1/2$ in our case. This choice is based on the observation that the rows of $P$ form probability distributions over $\mathcal{S}$ and can be seen as valid quantum states. The unitary $\oracle_P^r$ encodes the rows of $P$ and  maps every state-action pair $\ket{s,a}$ to the quantum state $\sum_{s'} \sqrt{p(s,a,s')} \ket{s'}$  with amplitudes forming a probability distribution, over next states $\ket{s'}$, that match the dynamics of the MDP $\mathcal{M}$. On the other hand, $\oracle_P^c$ encodes the columns of $P$ and maps every next-state $\ket{s'}$ to the neighboring state-action pairs $\ket{s,a}$ such that $p(s,a,s')\neq0$.  The quantity ${\rm c}_P$ depends on the dynamics of $\mathcal{M}$ and is an upper-bound on the portion of the state-action space $\mathcal{S}\times\mathcal{A}$ covered by every $s'$. 
    
    \indent The definitions of the above oracles are made so that Lemma \ref{lemma:constructing-blocks} with parameters $\alpha=1$ and $\beta=\sqrt{{\rm c}_P}$ implies that we can implement a $\sqrt{{\rm c}_P}$-block-encoding of $P$ using the oracles $\oracle_P^r$ and $\oracle_P^c$ and with cost $\cost_P$.
    
    \indent We have shown that efficient quantum access to $\mathcal{M}$ as defined in Definition \ref{definition:block-mdp} suffices to implement efficiently a block-encoding for the transition matrix $P$. In the next definition, we introduce quantum access to the policy $\pi$ where we map each state to a distribution over the set of actions. 
    
    \begin{definition}[Quantum access to $\pi$]
    \label{definition:block-policy}
        Let $\pi$ be a finite  policy. We say that we have quantum access to $\pi$ with cost $\cost_\pi$ if we can implement the following oracle with cost $\cost_\pi$:  
        $$\oracle_\pi:\ket{s,0_a} \longrightarrow  \ket{s, \pi(s)} = \sum_{a} \sqrt{\pi(s,a)} \ket{s,a} $$
    \end{definition}
    
    \indent Recall that we need a block-encoding for the matrix ${P^\pi}$ to perform quantum policy evaluation as in Theorem \ref{theorem:quantum-policy-evaluation}. We will build its block-encoding by combining the block-encoding of ${P}$ with the oracle $\oracle_\pi$ as stated in the following lemma:
    
    \begin{lem}[Block-encoding of $P^\pi$]
    \label{lemma:block-policy}
        Given quantum access to an MDP $\mathcal{M}$ with cost $(\cost_P,\cost_R)$ and to a policy $\pi$ with cost $\cost_\pi$, we can implement a $\mu_{P^\pi}$-block-encoding of ${P^\pi}$ with cost $\bigO(\cost_P + \cost_\pi)$, where the factor $\mu_{P^\pi}=\sqrt{{\rm c}_P}$ does not depend on the policy $\pi$.
    \end{lem}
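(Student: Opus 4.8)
The plan is to apply Lemma~\ref{lemma:constructing-blocks} directly to the matrix $P^\pi$ with splitting parameter $p=1/2$ (the same choice made for $P$ in Definition~\ref{definition:block-mdp}), so the task reduces to building the two oracles $\oracle^r$ and $\oracle^c$ for $P^\pi$ out of the oracles we are given for $\mathcal{M}$ and $\pi$, and to checking the associated norm hypotheses. For the norms: since $\sum_{a'}\pi(s',a')=1$ and $P$ is row-stochastic, every row of $P^\pi$ has $\ell_1$-norm $\sum_{s'a'}p(s,a,s')\pi(s',a')=\sum_{s'}p(s,a,s')=1$, so ${\rm s}_1(P^\pi)\le 1$ and we may take $\alpha=1$; every column of $P^\pi$ has $\ell_1$-norm $\sum_{sa}p(s,a,s')\pi(s',a')=\pi(s',a')\sum_{sa}p(s,a,s')\le {\rm c}_P$, so ${\rm s}_1((P^\pi)^\top)\le {\rm c}_P$ and we may take $\beta=\sqrt{{\rm c}_P}$. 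Lemma~\ref{lemma:constructing-blocks} then gives an $\alpha\beta=\sqrt{{\rm c}_P}$-block-encoding, and since both bounds are independent of $\pi$ the factor $\mu_{P^\pi}=\sqrt{{\rm c}_P}$ is policy-independent, as claimed.

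The row oracle $\oracle^r$ is the easy direction: appending a fresh $\ket{0_a}$ register, one applies $\oracle_P^r$ to the state and next-state registers and then $\oracle_\pi$ to the next-state register together with the fresh $\ket{0_a}$, sending $\ket{s,a}\ket{0_s}\ket{0_a}$ to $\sum_{s'a'}\sqrt{p(s,a,s')\,\pi(s',a')}\,\ket{s,a}\ket{s',a'}$, which is exactly the required mapping with no garbage (consistent with $\alpha=1$), at cost $\bigO(\cost_P+\cost_\pi)$.

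The main obstacle is the column oracle $\oracle^c$: $\oracle_\pi$ only lets us prepare the superposition $\sum_{\tilde a}\sqrt{\pi(s',\tilde a)}\ket{\tilde a}$, whereas $\oracle^c$ needs to multiply a fixed state by the scalar $\sqrt{\pi(s',a')}$. I would synthesize that amplitude by a standard compare-and-uncompute step: with $\ket{s',a'}$ in the column register, apply $\oracle_\pi$ into a fresh action register $r$ (using the $s'$-subregister as control) to get $\sum_{\tilde a}\sqrt{\pi(s',\tilde a)}\ket{\tilde a}_r$, set a flag qubit $f\leftarrow 1$ exactly on the branch $r=a'$, and, controlled on $f=1$, erase $r$ by XOR-ing in the $a'$-subregister; this yields $\sqrt{\pi(s',a')}\,\ket{s',a'}\ket{0}_r\ket{1}_f$ plus terms that carry $f=0$. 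Applying $\oracle_P^c$ to the row and $s'$-subregisters then turns the good branch into $\tfrac{1}{\sqrt{{\rm c}_P}}\sum_{sa}\sqrt{p(s,a,s')}\,\ket{s,a}\ket{s',a'}\ket{0}_r\ket{1}_f$, up to the extra term involving $\ket{G^\perp_{s'}}$, which Definition~\ref{definition:block-mdp} guarantees orthogonal to every $\ket{s,a}$; a final $X$ on $f$ makes the good ancilla state all-zero and leaves all remaining terms in the orthogonal complement of the good states $\ket{s,a}\ket{s',a'}$. This is precisely $\oracle^c$ for $P^\pi$ with $\beta=\sqrt{{\rm c}_P}$ (the registers $r,f$ becoming the additional block-encoding ancilla), and as it invokes $\oracle_P^c$ and $\oracle_\pi$ a constant number of times plus $\bigO(\polylog(SA))$ gates for the comparison, its cost is $\bigO(\cost_P+\cost_\pi)$.

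Finally, feeding $\oracle^r$ and $\oracle^c$ into Lemma~\ref{lemma:constructing-blocks} yields the $\sqrt{{\rm c}_P}$-block-encoding of $P^\pi$ at total cost $\bigO(\cost_P+\cost_\pi)$, completing the proof. (One could instead factor $P^\pi=\widehat{P}\,\widehat{\Pi}$ with $\widehat{P}\in\mathbb{R}^{SA\times S}$ and $\widehat{\Pi}\in\mathbb{R}^{S\times SA}$ and multiply a $\sqrt{{\rm c}_P}$-block-encoding of $\widehat{P}$ with a $1$-block-encoding of $\widehat{\Pi}$ via Theorem~\ref{theorem:quantum-arithmetics}, but handling the symmetrized rectangular blocks is messier than the direct route.) I expect the column-oracle construction to be the only genuinely nontrivial step; everything else is norm estimation and routine composition of the given oracles.
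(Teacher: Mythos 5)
Your proof is correct, but it takes a different route from the paper. The paper factorizes $P^\pi = P\,\Pi^\top$ with the auxiliary matrix $\Pi = [\mathds{1}[s=s']\pi(s,a)]_{sa,s'}$, builds a $1$-block-encoding of $\Pi$ from $\oracle_\pi$ and copy operations, and then multiplies it with the $\sqrt{{\rm c}_P}$-block-encoding of $P$ via Theorem~\ref{theorem:quantum-arithmetics} — exactly the alternative you mention and set aside at the end. You instead apply Lemma~\ref{lemma:constructing-blocks} once, directly to $P^\pi$, by composing the given oracles into row and column oracles for $P^\pi$ itself: the row oracle is the straightforward composition $\oracle_P^r$ then $\oracle_\pi$, and the column oracle imprints the amplitude $\sqrt{\pi(s',a')}$ through the compare-and-uncompute flag trick before applying $\oracle_P^c$; your norm bounds ${\rm s}_1(P^\pi)\le 1$ and ${\rm s}_1((P^\pi)^\top)\le {\rm c}_P$ and the orthogonality bookkeeping for the garbage (the $\oracle_P^c$ garbage is orthogonal to every $\ket{s,a}$ by Definition~\ref{definition:block-mdp}, and the $f=1$ branches are killed by the flag) are all sound, and the resulting overlap $\bra{s,a,0}(\oracle^r)^\dagger\oracle^c\ket{0,s',a'} = p(s,a,s')\pi(s',a')/\sqrt{{\rm c}_P}$ is exactly what the lemma needs. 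The trade-off: the paper's route is more modular (it reuses the product-of-block-encodings machinery and keeps $P$ and $\pi$ in separate encodings, at the price of being informal about symmetrizing the rectangular factors), while your direct route avoids the matrix product and the auxiliary $\Pi$ altogether, concentrating the work in a single, slightly more delicate column-oracle construction with two extra ancilla registers. Both yield the same policy-independent factor $\mu_{P^\pi}=\sqrt{{\rm c}_P}$ and cost $\bigO(\cost_P+\cost_\pi)$.
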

    
    \begin{proof}
        We will use $\oracle_\pi$ to construct a $1$-block-encoding of the matrix $\Pi \in\mathbb{R}^{SA\times S} $ defined as: $$\Pi:= \big[\mathds{1}[s=s']\pi(s,a)\big]_{sa,s'}$$ and then use Lemma \ref{lemma:constructing-blocks} to get a block-encoding of $P^\pi$ which can be rewritten as $P^\pi=P\Pi^\top$. First, we build the oracle acting on the three registers $\ket{0_s}\ket{0_a}\ket{s'}$ that gives access to the columns of $\Pi$. Let $\unitary_C$ be the unitary that uses CNOT gates to copy the third register to the first register. If we apply $\unitary_C$ followed by $\oracle_\pi$ on the first register, we get: $$ \sum_a \sqrt{ \pi(s',a)} \ket{s',a} \ket{s'} = \sum_{sa} \sqrt{\mathds{1}[s=s'] \pi(s,a)} \ket{s,a} \ket{s'}$$
    
        Next, we build the oracles acting on four registers $\ket{s}\ket{a}\ket{0_s}\ket{0_a}$ that gives quantum access to the rows of $\Pi$. Let $\unitary_C$ be the unitary that copies the first state register to the third register. If we apply $\unitary_C$ followed by $\oracle_\pi$ on the last two registers, we get: 
        $$  \sum_{a'} \sqrt{\pi(s,a')}\ket{s,a}\ket{s}\ket{a'}$$
        Using again $\unitary_C$ to copy the second register to the fourth register we get
        $$
        \sqrt{\pi(s,a)}\ket{s,a}\ket{s}\ket{0} + \ket{G_{sa}^\perp}
        $$
        where $\ket{G_{sa}^\perp}$ is a garbage quantum state such that $\braket{G_{sa}^\perp}{s,a,s,0}=0$. The procedures above for accessing the rows and columns of $\Pi$ give us a $1$-block-encoding of $\Pi$. Finally, we apply the product of the block-encodings of  ${\Pi}$ and $P$ to get a $\mu_{P^\pi}$-block-encoding of $P\Pi^\top=P^\pi$. 
    \end{proof}
    
\subsection{Running time analysis}
\label{subsection:qpi-running-time}
    
    \indent From Theorem \ref{theorem:quantum-policy-evaluation}, we see that the cost of the quantum policy evaluation step is: $$ \bigO\paren*{\paren*{\mu_{P^\pi}(\cost_{P}+\cost_\pi)+\cost_R}\Gamma\polylog\paren*{\Gamma/\epsilon}} $$
    
    \indent Let us make some comments now of how this cost can behave in practice. We will see in following sections that for many MDP of interest quantum access can be implemented using quantum circuits of $\widetilde{\bigO}(SA)$ qubits and with only $\polylog(SA)$ depth, and thus the running time for the quantum policy evaluation, where here time refers to the depth of the quantum circuit will be of the form:
    $$ \bigO\paren*{\mu_{P^\pi}\Gamma\polylog\paren*{SA\Gamma/\epsilon}}$$
    
    \indent The overall running time of the quantum policy iteration will then be $\widetilde{\bigO}(SA+\mu_{P^\pi}M\Gamma)$, where $M$ is the number of measurements during the quantum policy improvement step. If we are using the $\ell_\infty$-tomography algorithm from Theorem \ref{theorem:tomography} then $M$ is taken to be $\widetilde{\bigO}(1/\epsilon^2)$ and the running time becomes $\widetilde{\bigO}(SA+\mu_{P^\pi}\Gamma/\epsilon^2)$, while in the experiments the value was actually smaller. 
    
    \indent In comparison, the running time of classical policy iteration is $\bigO((SA)^\omega)$ when using a classical linear system solver\footnote{$\omega$ is the matrix multiplication exponent, with best known theoretical value $2.37$ and in practice close to $3$.}. Whether our quantum algorithm provides an advantage for a specific environment depends on the environment parameters, i.e.\ the values of $\Gamma$ and $\mu_{P^\pi}$, what is the actual cost of constructing the block encodings of the transition matrix $P$ and the policy $\pi$, as well as how many samples $M$ are needed for a good policy improvement method. 
    
    \indent  Let us also remark on the value $\mu_{P^\pi}$. This value can be shown in the worst case to be $\sqrt{SA}$ but we expect it to be much smaller when we have efficient access to the transition matrix $P$ as detailed in Subsection \ref{subsection:qpi-block-encoding} where we have shown that $\mu_{P^\pi}=\sqrt{{\rm c}_P}$. The idea of the bound ${\rm c}_P$ is that in some of the most studied environments in reinforcement learning, any next state $s'$ arises as a result of taking an action $a$ from a small number of neighboring states $s$. In other terms, the transition matrix is very sparse and usually has $\bigO(A)$ non-zero elements in each column. We use our approach to build a $\sqrt{{\rm c}_P}$-block-encoding of $P^\pi$ for a total running of $\widetilde{\bigO}( \sqrt{{\rm c}_P} \times \Gamma/\epsilon^2)$. We expect this bound $\sqrt{{\rm c}_P}$ to be very small and have poly-logarithmic dependence on the size of the state space $\mathcal{S}$. For example, in the case of $d$-dimensional mazes, the factor ${\rm c}_P$ can be chosen such that ${\rm c}_P = 2d = \bigO(\log S)$. For two-player board games, such as chess or go, again the number of different states that could result to a particular state of the board through a single action are small and we can again think of it as ${\rm c}_P=\bigO(A)$, much smaller than the number of states that grows exponentially with the size of the board game. A concrete example is given in Subsection \ref{subsection:exp-frozen-lake} where we show, for \textsc{FrozenLake}, that ${\rm c}_P=\bigO(1)$ is constant and does not depend on the environment size. 
    
\subsection{Convergence guarantees}
\label{subsection:qpi-convergence}
    
    \indent We are going now to prove the theoretical convergence of our algorithm with precision $\epsilon$ and $M=\widetilde{\bigO}(1/\epsilon^2)$ measurements using a similar approach to \cite{Lagoudakis2003LeastSquaresPI}. Our bound is similar to the result given in Theorem \ref{theorem:bound-approximate-policy-iteration} but using the norm $\|.\|_{\rho}\leq\|.\|_\infty$ which is the $\ell_2$-norm weighted by the uniform distribution $\rho$ over $\mathcal{S}\times\mathcal{A}$. The $\|.\|_\rho$ norm is equal to the expected norm of a coordinate, instead of the maximum one as in the $\ell_\infty$ norm. Weighted quadratic norms are also used in classical reinforcement learning as in \cite{Munos2003ErrorBF} to prove the convergence of approximation algorithms. Next, we state the error bound on the policies generated by our algorithm when performing $M=\widetilde{\bigO}(1/\epsilon^2)$ measurements followed by a greedy update on the reconstructed normalized value function $\widehat{q}^{\pi_t}$ as described in Algorithm \ref{algorithm:quantum-policy-iteration}. 
    
    \begin{thm}[Error bound of QPI]
        \label{theorem:qpi-convergence}
        Let $\{\pi_t\}_{t\in\mathbb{N}}$ be the sequence of policies generated by the quantum policy iteration algorithm with a greedy update and let $\{\widehat{q}^{\pi_t}\}_{t\in\mathbb{N}}$ be the corresponding approximate normalized value functions. Then, this sequence satisfies the following suboptimality bound: $$ \limsup\limits_{t\rightarrow+\infty} \|\ket{Q^*}-\ket{Q^{\pi_t}}\|_{\rho} \leq 2\sqrt{2} \gamma\Gamma^2\limsup\limits_{t\xrightarrow{}+\infty}\|\widehat{q}^{\pi_t}-q^{\pi_t}\|_\infty$$
    \end{thm}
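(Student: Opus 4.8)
The plan is to mirror the classical proof of Theorem~\ref{theorem:bound-approximate-policy-iteration} as presented in \cite{Lagoudakis2003LeastSquaresPI}, but carrying the error through in the weighted quadratic norm $\|\cdot\|_\rho$ rather than $\|\cdot\|_\infty$, and accounting for the fact that the quantum policy improvement step performs a greedy update not on $\widehat{Q}^{\pi_t}$ directly but on its normalized reconstruction $\widehat{q}^{\pi_t}$. First I would fix an iteration $t$, write $\pi = \pi_t$ and $\pi' = \pi_{t+1}$, and set $\epsilon_t = \|\widehat{q}^{\pi_t} - q^{\pi_t}\|_\infty$. Since $\pi'$ is greedy with respect to $\widehat{q}^{\pi_t}$, and since $q^{\pi_t} = Q^{\pi_t}/\|Q^{\pi_t}\|$ is a positive rescaling of $Q^{\pi_t}$ (so that $\argmax_a$ is unchanged by the normalization), the standard argument bounding how far a policy greedy with respect to an $\epsilon$-approximate value function is from being greedy with respect to the true one applies: for every state $s$, $Q^{\pi_t}(s,\pi'(s)) \geq \max_a Q^{\pi_t}(s,a) - 2\|Q^{\pi_t}\|\,\epsilon_t$, i.e.\ in terms of the Bellman optimality operator, $\mathcal{T} Q^{\pi_t} - \mathcal{T}^{\pi'} Q^{\pi_t} \leq 2\|Q^{\pi_t}\|\,\epsilon_t \mathbf{1}$ coordinatewise.

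Next I would feed this one-step error into the contraction machinery. Using $\mathcal{T}^{\pi'} Q^{\pi'} = Q^{\pi'}$, $\mathcal{T} Q^* = Q^*$, and monotonicity plus $\gamma$-contraction of the operators $\mathcal{T}^{\pi'}$ and $\mathcal{T}$ in the $\ell_\infty$ norm, one derives in the usual way the bound $\|Q^* - Q^{\pi'}\|_\infty \leq \gamma \|Q^* - Q^{\pi_t}\|_\infty + \frac{2\gamma}{1-\gamma}\|Q^{\pi_t}\|\,\epsilon_t$, or something of this form with the $\Gamma$ factors arranged so that taking $\limsup_{t\to\infty}$ and using $\limsup(\gamma a_t + b_t) \leq \frac{1}{1-\gamma}\limsup b_t$ yields $\limsup_t \|Q^* - Q^{\pi_t}\|_\infty \leq 2\gamma\Gamma^2 \limsup_t \|Q^{\pi_t}\|\,\epsilon_t$. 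The $\Gamma^2$ appears exactly as in the classical statement: one $\Gamma$ from the geometric series in the fixed-point argument and one more from converting the per-step $\ell_\infty$ error between $\mathcal{T}$ and $\mathcal{T}^{\pi'}$ into an error on value functions. Then I would pass to the normalized vectors: by Claim~\ref{claim:angle-error}, $\|\ket{Q^*} - \ket{Q^{\pi_t}}\| \leq \sqrt{2}\,\|Q^* - Q^{\pi_t}\| / \|Q^*\|$ (the relevant angle is below $\pi/2$ since both vectors have nonnegative entries), and since all rewards lie in $[0,1]$ we have $\|Q^{\pi_t}\| \leq \|Q^*\|$, so the factor $\|Q^{\pi_t}\|/\|Q^*\|$ is absorbed, leaving the clean $2\sqrt{2}\gamma\Gamma^2$. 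Finally I would observe $\|\cdot\|_\rho \leq \|\cdot\|_\infty$ after normalizing by $1/\sqrt{SA}$ absorbed into $\rho$ — i.e.\ replace the $\ell_2$ norms by $\rho$-weighted ones on the left, which only decreases them, to land on the stated inequality.

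The main obstacle I expect is getting the normalization bookkeeping exactly right: the classical proof lives entirely with unnormalized value functions $Q^{\pi_t}$, whereas here the improvement step sees $\widehat{q}^{\pi_t}$ and the final bound is stated for the \emph{normalized} quantum states $\ket{Q^*}, \ket{Q^{\pi_t}}$ in the $\rho$-norm, with error measured by $\|\widehat{q}^{\pi_t} - q^{\pi_t}\|_\infty$. One has to track how an $\epsilon_t$-error in the \emph{normalized} $q^{\pi_t}$ translates into an additive error on the unnormalized greedy gap (a factor $\|Q^{\pi_t}\|$ enters), then how that propagates through the contraction (giving $2\gamma\Gamma^2\|Q^{\pi_t}\|\epsilon_t$ in the unnormalized error), and then how Claim~\ref{claim:angle-error} and the bound $\|Q^{\pi_t}\| \le \|Q^*\|$ conspire to cancel the norm factors and produce precisely the $\sqrt{2}$ (hence $2\sqrt{2}\gamma\Gamma^2$). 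A secondary subtlety is justifying that $M = \widetilde{\bigO}(1/\epsilon^2)$ measurements suffice for the tomography guarantee $\|\widehat{q}^{\pi_t} - q^{\pi_t}\|_\infty \le \epsilon$ to hold (with high probability, via Theorem~\ref{theorem:tomography}), and that the $\ell_2$-error $\epsilon$ from the linear-system solver in Theorem~\ref{theorem:quantum-policy-evaluation} can be taken negligible relative to the tomography error since it only costs $\polylog(1/\epsilon)$ — this is already argued informally in the text preceding the theorem, so I would just invoke it.
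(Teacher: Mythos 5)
Your proposal is correct and follows essentially the same route as the paper: view the quantum update as a greedy API step on the rescaled approximation $\|Q^{\pi_t}\|\widehat{q}^{\pi_t}$ (using scale-invariance of the $\argmax$), bound its $\ell_\infty$ error by $\|Q^*\|\,\|\widehat{q}^{\pi_t}-q^{\pi_t}\|_\infty$, then convert to normalized states via Claim~\ref{claim:angle-error} so the $\|Q^*\|$ factors cancel, and finish with $\|\cdot\|_\rho\leq\|\cdot\|_\infty$. The only difference is that the paper simply invokes Theorem~\ref{theorem:bound-approximate-policy-iteration} as a black box where you re-derive the classical contraction argument, and in the last step one should rescale both sides of the Claim~\ref{claim:angle-error} inequality by $1/\sqrt{SA}$ (not only the left-hand side) before bounding the $\rho$-norm by the $\ell_\infty$-norm.
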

\newpage
    \begin{proof}
        First, we use Theorem \ref{theorem:bound-approximate-policy-iteration} on our quantum policy iteration algorithm which can be seen as a classical approximate policy iteration algorithm with a greedy update applied on $\widetilde{Q}^{\pi_t}=\|Q^{\pi_t}\|.\widehat{q}^{\pi_t}$ where $\widehat{q}^{\pi_t}(s,a):=\sqrt{M(s,a)/M}$. Moreover, note that the update in Algorithm \ref{algorithm:quantum-policy-iteration} is equivalent to $\pi_{t+1}(s)=\argmax\widehat{q}^{\pi_t}(s,a)$. In this case, the approximation errors $\|\widetilde{Q}^{\pi_t} - Q^{\pi_t}\|_\infty$ are bounded by $\|Q^*\|.\|\widehat{q}^{\pi_t}-q^{\pi_t}\|_\infty$ since $\|Q^{\pi_t}\|\leq\|Q^*\|$ for every policy $\pi_t$. Using Theorem \ref{theorem:bound-approximate-policy-iteration}, we have: $$ \limsup\limits_{t\xrightarrow{}+\infty} \|Q^*-Q^{\pi_t}\|_{\infty} \leq  2\gamma\Gamma^2 \limsup\limits_{t\xrightarrow{}+\infty} \|\widetilde{Q}^{\pi_t}-Q^{\pi_t}\|_{\infty} \leq 2 \gamma \Gamma^2 \|Q^*\| \limsup\limits_{t\xrightarrow{}+\infty}\|\widehat{q}^\pi-q^\pi\|_\infty $$

        Since we define our rewards to be greater than $0$, the angle between the vectors $Q^{\pi_t}$ and $Q^*$ will be no greater than $\pi/2$. Using Claim \ref{claim:angle-error}, we have: $$\|\ket{Q^*}-\ket{Q^{\pi_t}}\| \leq \frac{\sqrt{2}}{\|Q^*\|}\|Q^*-Q^{\pi_t}\|$$
        By taking the limit superior in the inequality above and observing that $\|.\|_{\rho}=\|.\|/\sqrt{SA}$, we conclude the proof by: 
        \begin{equation*}
            \begin{aligned}
                \limsup\limits_{t\rightarrow+\infty} \|\ket{Q^*}-\ket{Q^{\pi_t}}\|_{\rho} & \leq  \frac{\sqrt{2}}{\|Q^*\|}\limsup\limits_{t\rightarrow+\infty}\|Q^*-Q^{\pi_t}\|_{\rho} \\
                & \leq  \frac{\sqrt{2}}{\|Q^*\|}\limsup\limits_{t\rightarrow+\infty}\|Q^*-Q^{\pi_t}\|_{\infty} \\
                & \leq  2\sqrt{2}\gamma\Gamma^2\limsup\limits_{t\xrightarrow{}+\infty}\|\widehat{q}^{\pi_t}-q^{\pi_t}\|_\infty 
            \end{aligned}
        \end{equation*} 
    \end{proof}
    
    \indent Using weighted quadratic norms instead of the $\ell_\infty$-norm appears in many approximate algorithms in reinforcement learning when the subroutines minimize the $\ell_2$-norm \cite{Munos2003ErrorBF}. In our case, the bound shows that our quantum approach is a stable algorithm. When the inequality $\|\widehat{q}^{\pi_t}-{q}^{\pi_t}\|_\infty\leq\epsilon$ 
    holds for every iteration, the bound in Theorem \ref{theorem:qpi-convergence} becomes: 
    $$ \limsup\limits_{t\rightarrow+\infty} \|\ket{Q^*}-\ket{Q^{\pi_t}}\|_{\rho} \leq 2\sqrt{2}\gamma\Gamma^2 \epsilon$$
    
    which shows that quantum policy iteration oscillates between sub-optimal policies with value functions $\epsilon$-close to the optimal policy. 
    
\section{Quantum approximate policy iteration}
\label{section:approximate-quantum-policy-iteration}
    
\subsection{General framework for quantum approximate policy iteration}
\label{subsection:aqpi-framework}
    
    \indent We continue the description of our general framework for quantum reinforcement learning by looking at the common case where we may not be able to compute directly the value function $Q^\pi$, for example when the dynamics of the environment are unknown or the state-action space is too large. Instead, one approximates it with linear or non-linear value functions. Linear value functions correspond to the use of a linear combinations of features whilst the non-linear case corresponds to the use of non-linear approximation schemes such as neural networks for example. 
    
    \indent In our case, we are going to provide quantum algorithms for approximate policy iteration with linear value function approximation. We will start with the model-based case when having access to the parameters of the MDP $\mathcal{M}$ and provide a model-free implementation in Subsection \ref{subsection:aqpi-model-free} for the case when the dynamics of $\mathcal{M}$ are unknown. We refer to our framework as quantum approximate policy iteration (QAPI) and we summarize the general procedure in Figure \ref{fig:scheme-quantum-approximate-policy-iteration}.
    
\subsection{Quantum approximate policy evaluation}
\label{subsection:aqpi-evaluation}
    
    \indent In many cases, linear architectures are used for value function approximation where the $Q^\pi$ values are approximated by a linear combination $\widetilde{Q}^\pi$ of $K$ basis functions of the form $\phi_k: \mathcal{S}\times \mathcal{A} \xrightarrow[]{} \mathbb{R}$ and the policy parameters $w^\pi \in \mathbb{R}^K$:
    $$ Q^\pi(s,a) \approx \widetilde{Q}^\pi(s,a) = \sum\limits_{k} \phi_k(s,a) w^\pi_k $$
    
    \indent Note that the linearly independent features $\Phi:=[\phi_k(s,a)]_{sa,k} \in \mathbb{R}^{SA\times K}$ are  usually hand-crafted and common between all policies. On the other hand, one should compute $w^\pi \in \mathbb{R}^K$ to get the estimated value function $\widetilde{Q}^\pi= \Phi w^\pi$. To do so, we will base our approach on the least-squares policy iteration algorithm by Lagoudakis and Parr \cite{Lagoudakis2003LeastSquaresPI} for finding the parameters $w^\pi$. In the model-based case when we have access to $P$ and $R$, we can compute $P^\pi$ and we retrieve $w^\pi$ as the solution of the linear system $\mathbf{A}^\pi w^\pi = \mathbf{b}$ with  $\mathbf{A}^\pi= \Phi^\top ( \Phi- \gamma P^\pi \Phi)$ and $\mathbf{b}=\Phi^\top R$.
    
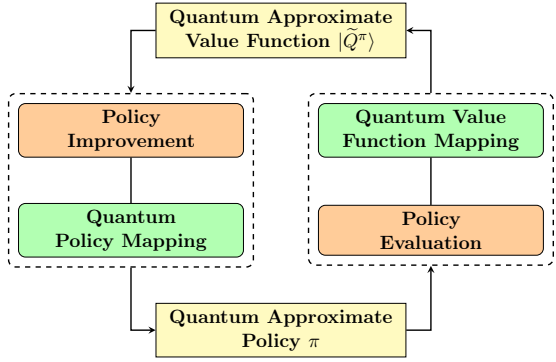
\begin{figure}[t]
    \centering
    \resizebox{0.45\linewidth}{!}{
        \begin{tikzpicture}[node distance=2cm]
            \node[font=\bfseries] (function) [box_yellow] {Quantum Approximate \\ Value Function $\ket{\widetilde{Q}^\pi}$};
            \node[font=\bfseries] (pi_improv) [box_orange, below of=function, xshift=-3cm] {Policy \\ Improvement};
            \node[font=\bfseries] (pi_actor) [box_green, below of=pi_improv] {Quantum \\ Policy Mapping};
            \node[font=\bfseries] (pi_eval) [box_orange, right of=pi_actor, xshift=4cm] {Policy \\ Evaluation};
            \node[font=\bfseries] (pi_critic) [box_green, right of=pi_improv, xshift=4cm] {Quantum Value Function Mapping};
            \node[font=\bfseries] (policy) [box_yellow, below of=pi_actor, xshift=3cm] {Quantum Approximate \\ Policy $\pi$};
            \node[font=\bfseries][draw, thick, rounded corners, dashed, inner xsep=0.5em, inner ysep=0.5em, fit= (pi_eval) (pi_critic)] (box_eval) {};
            \node[font=\bfseries][draw, thick, rounded corners, dashed, inner xsep=0.5em, inner ysep=0.5em, fit= (pi_improv) (pi_actor)] (box_improv) {};
            \draw [arrow] (box_eval) |- (function);
            \draw [arrow] (function) -| (box_improv);
            \draw [arrow] (box_improv) |- (policy);
            \draw [arrow] (policy) -| (box_eval);
            \draw[thick] (pi_actor) -- (pi_improv);
            \draw[thick] (pi_critic) -- (pi_eval);
        \end{tikzpicture}
                }
    \caption{Quantum Approximate Policy Iteration}
    \label{fig:scheme-quantum-approximate-policy-iteration}
\end{figure}
    
    \indent We now define for any policy $\pi$ the corresponding quantum state: $$ \ket{w^\pi} = \frac{1}{\|w^\pi\|}\sum_k w_k^\pi \ket{k}$$
    which encodes the classical weight vector $w^\pi$ in the amplitudes of a normalized quantum state. 
    
    \indent Using the same arguments provided in Section \ref{section:approximate-quantum-policy-iteration}, a copy of this quantum state cannot recreate $w^\pi$. However, it can still be used to provide useful information for quantum policy improvement as we will show in the next subsection. Using quantum linear algebra techniques, there exist efficient quantum procedures for producing approximations to $\ket{w^\pi}$ that we describe below.
    
\newpage

    \indent First, we assume we have quantum access to the model parameters $P$ and $R$ as in the quantum policy iteration algorithm. We also assume that we have quantum access to the features matrix $\Phi$ and we discuss how to do so in Subsection \ref{subsection:aqpi-block-encoding}. Denoting $\mathbf{A}^\pi= \Phi^\top ( \Phi- \gamma P^\pi \Phi)$ and $\mathbf{b}=\Phi^\top R$, the weight vector $w^\pi$ is the solution to the linear system $\mathbf{A}^\pi w^\pi = \mathbf{b}$. Using the quantum linear system solver from Theorem \ref{theorem:quantum-linear-algebra}, we can build an $\epsilon$-approximation (in $\ell_2$-norm) state $\ket{\widehat{w}^\pi}$ to the quantum weight vector $\ket{w^\pi}$:
    
    \begin{thm}[Model-based policy evaluation]
    \label{theorem:model-based-evaluation}
        Let $\mathcal{M}=(\mathcal{S},\mathcal{A},P,R,\gamma)$ a finite Markov decision process, $\Phi$ a features matrix with $\|\Phi\|=1$, $\pi$ a  policy and $\epsilon > 0$ the precision parameter. Suppose there exists a $\mu_{P^\pi}$-block-encoding of the policy transition matrix ${P^\pi}$ with cost $\cost_{P^\pi}$ and a $\mu_\Phi$-block-encoding of the features matrix ${\Phi}$ with cost $\cost_\Phi$. Also suppose that we can prepare the reward vector $\ket{R}$ with cost $\cost_R$. Then there exists a quantum algorithm that returns a quantum state $\ket{\widehat{w}^\pi}$ such that  $\|\ket{\widehat{w}^\pi}-\ket{{w}^\pi}\|\leq\epsilon$ with cost:  $$ \bigO\paren*{ \kappa_\Phi^2 \paren*{\paren*{\mu_\Phi^2\mu_{P^\pi}+\mu_\Phi\kappa_\Phi}\cost_\Phi+\mu_\Phi^2\mu_{P^\pi}\cost_{P^\pi}+\kappa_\Phi \cost_R} \Gamma \polylog\paren*{\kappa_\Phi\Gamma /\epsilon}} $$
    \end{thm}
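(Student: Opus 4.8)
The plan is to mirror the argument of Theorem~\ref{theorem:quantum-policy-evaluation}: assemble a block-encoding of $\mathbf{A}^\pi = \Phi^\top(\Phi - \gamma P^\pi \Phi)$ out of the given block-encodings via Theorem~\ref{theorem:quantum-arithmetics}, prepare the right-hand side $\ket{\mathbf{b}} = \ket{\Phi^\top R}$, and then feed both into the quantum linear system solver of Theorem~\ref{theorem:quantum-linear-algebra}, tracking the block-encoding factor $\mu$ and the condition number $\kappa(\mathbf{A}^\pi)$ through every composition step. First I would build a block-encoding of $\Phi - \gamma P^\pi \Phi$: using the product rule on the $\mu_{P^\pi}$-block-encoding of $P^\pi$ and the $\mu_\Phi$-block-encoding of $\Phi$ gives a $\mu_\Phi\mu_{P^\pi}$-block-encoding of $P^\pi\Phi$, then the linear-combination rule with the $\mu_\Phi$-block-encoding of $\Phi$ itself yields a $(\mu_\Phi + \gamma\mu_\Phi\mu_{P^\pi}) = \bigO(\mu_\Phi\mu_{P^\pi})$-block-encoding of $\Phi - \gamma P^\pi\Phi$ at cost $\bigO(\cost_\Phi + \cost_{P^\pi})$ (absorbing the $\mu_\Phi\mu_{P^\pi}$-vs-$\mu_\Phi$ issue into the constant since $\mu_{P^\pi}\ge 1$). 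Multiplying once more by the $\mu_\Phi$-block-encoding of $\Phi^\top$ (available from the $\Phi$ block-encoding, as Theorem~\ref{theorem:quantum-linear-algebra} also handles $\mathbf{A}^\top$) produces a $\mu_{\mathbf{A}} = \bigO(\mu_\Phi^2\mu_{P^\pi})$-block-encoding of $\mathbf{A}^\pi$ with cost $\cost_{\mathbf{A}} = \bigO(\cost_\Phi + \cost_{P^\pi})$.

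Next I would handle the right-hand side: $\ket{\Phi^\top R}$ is obtained by applying the $\mu_\Phi$-block-encoding of $\Phi^\top$ to the state $\ket{R}$ (prepared at cost $\cost_R$), which by Theorem~\ref{theorem:quantum-linear-algebra} costs $\bigO(\kappa_\Phi(\mu_\Phi \cost_\Phi + \cost_R)\polylog(\kappa_\Phi/\epsilon))$ — here the $\kappa_\Phi$ appears because applying a (possibly ill-conditioned) matrix to a state via the block-encoding framework carries the condition number. Then I apply the linear system solver to $\mathbf{A}^\pi$ with this state as input, giving a state $\epsilon$-close to $\ket{(\mathbf{A}^\pi)^{-1}\Phi^\top R} = \ket{w^\pi}$ at cost $\bigO(\kappa(\mathbf{A}^\pi)(\mu_{\mathbf{A}}\|\mathbf{A}^\pi\|^{-1}\cost_{\mathbf{A}} + \cost_{\mathbf{b}})\polylog(\kappa(\mathbf{A}^\pi)/\epsilon))$, and substitute in the cost $\cost_{\mathbf{b}}$ of the previous step.

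The crux of the argument is bounding $\kappa(\mathbf{A}^\pi)$ — the analogue of the step in Theorem~\ref{theorem:quantum-policy-evaluation} where row-stochasticity of $P^\pi$ pins the singular values of $I - \gamma P^\pi$ into $[1-\gamma, 1+\gamma]$. Here $\mathbf{A}^\pi = \Phi^\top(I - \gamma P^\pi)\Phi$ is a product of three factors, so I would bound $\|\mathbf{A}^\pi\| \le \|\Phi\|^2\|I - \gamma P^\pi\| = \bigO(1)$ using $\|\Phi\|=1$ and $\|P^\pi\|=1$, and bound the smallest singular value from below by $\sigma_{\min}(\Phi)^2 \cdot \sigma_{\min}(I-\gamma P^\pi) \ge \sigma_{\min}(\Phi)^2 (1-\gamma) = \kappa_\Phi^{-2}\Gamma^{-1}$ (using $\sigma_{\max}(\Phi)=\|\Phi\|=1$ so that $\sigma_{\min}(\Phi) = 1/\kappa_\Phi$); this is where the linear independence of the features is used, ensuring $\Phi$ has full column rank and $\kappa_\Phi$ is finite. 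Consequently $\kappa(\mathbf{A}^\pi) = \bigO(\kappa_\Phi^2\Gamma)$ and $\|\mathbf{A}^\pi\|^{-1} = \bigO(\kappa_\Phi^2\Gamma)$ as well. Plugging $\kappa(\mathbf{A}^\pi) = \bigO(\kappa_\Phi^2\Gamma)$, $\mu_{\mathbf{A}} = \bigO(\mu_\Phi^2\mu_{P^\pi})$, $\cost_{\mathbf{A}} = \bigO(\cost_\Phi + \cost_{P^\pi})$, and the right-hand-side cost into the solver bound and collecting terms — the $\kappa_\Phi^2$ prefactor, the $(\mu_\Phi^2\mu_{P^\pi} + \mu_\Phi\kappa_\Phi)\cost_\Phi$ term from combining the matrix-application cost $\mu_\Phi\kappa_\Phi\cost_\Phi$ with the block-encoding cost $\mu_\Phi^2\mu_{P^\pi}\cost_\Phi$, the $\mu_\Phi^2\mu_{P^\pi}\cost_{P^\pi}$ term, and the $\kappa_\Phi\cost_R$ term — gives exactly the claimed running time. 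The main obstacle, and the place to be careful, is not any individual step but the bookkeeping: keeping the $\|\mathbf{A}^\pi\|^{-1}$ rescaling (footnote of Theorem~\ref{theorem:quantum-linear-algebra}), the $\kappa_\Phi$ from the matrix-application subroutine, and the $\kappa(\mathbf{A}^\pi)$ from the solver all straight, and making sure the $\polylog$ factors collapse to $\polylog(\kappa_\Phi\Gamma/\epsilon)$.
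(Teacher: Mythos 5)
Your outline follows the paper's proof essentially step for step: the same decomposition $\mathbf{A}^\pi=\Phi^\top(I-\gamma P^\pi)\Phi$ and $\mathbf{b}=\Phi^\top R$, the same use of Theorem~\ref{theorem:quantum-arithmetics} to obtain a $\bigO(\mu_\Phi^2\mu_{P^\pi})$-block-encoding of $\mathbf{A}^\pi$ at cost $\bigO(\cost_\Phi+\cost_{P^\pi})$, the same preparation of $\ket{\Phi^\top R}$ at cost $\bigO(\kappa_\Phi(\mu_\Phi\cost_\Phi+\cost_R))$ up to polylogs, and the same singular-value interval $[(1-\gamma)/\kappa_\Phi^2,\,1+\gamma]$ for $\mathbf{A}^\pi$. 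There are, however, two places where the accounting as you state it does not yet yield the theorem. First, the solver cost carries the product $\kappa(\mathbf{A}^\pi)\,\|\mathbf{A}^\pi\|^{-1}$ in front of $\mu_{\mathbf{A}}\cost_{\mathbf{A}}$. You bound $\kappa(\mathbf{A}^\pi)=\bigO(\kappa_\Phi^2\Gamma)$ and, separately, $\|\mathbf{A}^\pi\|^{-1}=\bigO(\kappa_\Phi^2\Gamma)$; multiplying these two bounds gives $\bigO(\kappa_\Phi^4\Gamma^2)$ on the $\mu_\Phi^2\mu_{P^\pi}(\cost_\Phi+\cost_{P^\pi})$ term, overshooting the claimed cost by a factor $\kappa_\Phi^2\Gamma$. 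The correct move, which the paper makes explicitly (and which follows at once from your own singular-value bounds), is to bound the product jointly: $\kappa(\mathbf{A}^\pi)\|\mathbf{A}^\pi\|^{-1}=1/\sigma_{\min}(\mathbf{A}^\pi)\leq\kappa_\Phi^2\Gamma$. With that single substitution your collection of terms does reproduce the stated running time.

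Second, you feed the state produced in the right-hand-side step directly into the solver and conclude that the output is $\epsilon$-close to $\ket{w^\pi}$, but that state is only an approximation $\ket{\widehat{\mathbf{b}}}$ of $\ket{\Phi^\top R}$, and inverting $\mathbf{A}^\pi$ amplifies its error by up to roughly $\kappa(\mathbf{A}^\pi)$: by Claim~\ref{claim:angle-error} one only gets $\|\ket{(\mathbf{A}^\pi)^{-1}\widehat{\mathbf{b}}}-\ket{(\mathbf{A}^\pi)^{-1}\mathbf{b}}\|\leq\sqrt{2}\,\kappa(\mathbf{A}^\pi)\,\epsilon'$ when $\ket{\widehat{\mathbf{b}}}$ is $\epsilon'$-accurate. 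The paper handles this by preparing $\ket{\widehat{\mathbf{b}}}$ to accuracy $\epsilon'=\Theta\bigl(\epsilon/(\kappa_\Phi^2\Gamma)\bigr)$, running the solver to accuracy $\epsilon/2$, and combining via the triangle inequality; since $\epsilon'$ enters the cost only inside the $\polylog$, this is free asymptotically, but it is exactly what turns your $\polylog(\kappa_\Phi/\epsilon)$ into the $\polylog(\kappa_\Phi\Gamma/\epsilon)$ of the statement and what makes the final $\epsilon$-guarantee for $\ket{w^\pi}$ actually hold. Both points are repairs of bookkeeping rather than of the approach, which otherwise coincides with the paper's.
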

    \begin{proof}
        As we said, denoting $\mathbf{A}^\pi= \Phi^\top ( \Phi- \gamma P^\pi \Phi)$ and $\mathbf{b}=\Phi^\top R$, the weight vector $w^\pi$ is the solution to the linear system $\mathbf{A}^\pi w^\pi = \mathbf{b}$.
        First, we implement the procedure that prepares $\ket{\mathbf{b}}=\ket{\Phi^\top R}$.  Using the $\mu_\Phi$-block-encoding of ${\Phi}$,  and picking $\epsilon' = (1+\gamma)\epsilon/(2\sqrt{2}\kappa_\Phi^2\Gamma)$, we can efficiently generate a state $\ket{\widehat{\mathbf{b}}}$ such that $\|\ket{\widehat{\mathbf{b}}}-\ket{\mathbf{b}}\|\leq \epsilon'$ with cost: 
        $$ 
        \cost_{\mathbf{b}}=\bigO\paren*{\kappa_\Phi \paren*{\mu_\Phi \cost_\Phi+\cost_R}\polylog\paren*{\kappa_\Phi\Gamma/\epsilon}}
        $$
        Next, we build the block encoding of ${\mathbf{A}^\pi}={\Phi^\top ( I- \gamma P^\pi )\Phi}$ using the block-encodings of ${P^\pi}$, ${\Phi}$ and ${\Phi^\top}$. Using the same approach as in Theorem \ref{theorem:quantum-policy-evaluation}, we can build a $(1+\gamma \mu_{P^\pi})$-block-encoding of ${I-\gamma P^\pi}$ with cost $\bigO(\cost_{P^\pi})$. Using Theorem \ref{theorem:quantum-arithmetics} to compute the block-encoding of the product $\Phi^\top(I-\gamma P^\pi)\Phi$, we get a $\mu_{\mathbf{A}}$-block-encoding of ${\mathbf{A}^\pi}$ with cost $\cost_{\mathbf{A}}=\bigO\paren{\cost_\Phi + \cost_{P^\pi}}$ such that $\mu_{\mathbf{A}}=\mu_\Phi^2(1+\gamma \mu_{P^\pi})=\bigO(\mu_\Phi^2\mu_{P^\pi})$. 
        
        Then, we apply the quantum linear solver with precision $\epsilon/2$ to generate a state $\ket{\widehat{w}^\pi}$ such that $\|\ket{\widehat{w}^\pi}-\ket{(\mathbf{A}^\pi)^{-1}\widehat{\mathbf{b}}}\|\leq\epsilon/2 $ with cost: 
        $$ \bigO\paren*{\kappa(\mathbf{A}^\pi)\paren*{\|\mathbf{A}^\pi\|^{-1}\mu_{\mathbf{A}}\cost_{\mathbf{A}}+\cost_{\mathbf{b}}}\polylog\paren*{\kappa(\mathbf{A}^\pi)/\epsilon}}$$

        Since $P^\pi$ is a row-stochastic matrix, we know that $\|P^\pi\|=1$ and that the singular values of $I-\gamma P^\pi$ range in $[1-\gamma,1+\gamma]$. Similarly, we have $\|\Phi\|=1$ and the singular-values of $\Phi$ range in $[1/\kappa_\Phi,1]$. We conclude that $\mathbf{A}^\pi$ has its singular values in $[(1-\gamma)/\kappa_\Phi^2, 1+\gamma]$. It follows that $\kappa(\mathbf{A}^\pi) = \bigO(\kappa_\Phi^2\Gamma)$ and $\|\mathbf{A}^\pi\|^{-1}\kappa(\mathbf{A}^\pi) = \bigO(\kappa_\Phi^2\Gamma)$. We plug these bounds into the total cost to get the total cost of our algorithm:
        $$
        \bigO\paren*{\kappa_\Phi^2\paren*{\mu_\Phi^2\mu_{P^\pi}\paren*{\cost_\Phi+\cost_{P^\pi}}+\kappa_\Phi \paren*{\mu_\Phi \cost_\Phi+\cost_R}}\Gamma\polylog\paren*{\kappa_\Phi\Gamma/\epsilon}}
        $$
        We conclude the proof by showing that $\ket{\widehat{w}^\pi}$ is $\epsilon$-close to $\ket{w^\pi}=\ket{(\mathbf{A}^\pi)^{-1}\mathbf{b}}$. We use Claim \ref{claim:angle-error} to show that, for a small value of $\epsilon$, $\|\ket{(\mathbf{A}^\pi)^{-1}\widehat{\mathbf{b}}}-\ket{(\mathbf{A}^\pi)^{-1}\mathbf{b}}\|\leq\sqrt{2}\kappa(\mathbf{A}^\pi)\epsilon'$. Then, we have:
        \begin{equation*}
            \begin{aligned}
                \|\ket{\widehat{w}^\pi}-\ket{w^\pi}\| & \leq \|\ket{\widehat{w}^\pi}-\ket{(\mathbf{A}^\pi)^{-1}\widehat{\mathbf{b}}}\|+\|\ket{(\mathbf{A}^\pi)^{-1}\widehat{\mathbf{b}}}-\ket{(\mathbf{A}^\pi)^{-1}\mathbf{b}}\| \\
                & \leq \epsilon/2 + \epsilon'\kappa_\Phi^2\Gamma = \epsilon
            \end{aligned}
        \end{equation*}
    \end{proof}

\begin{algorithm}[t]
    \caption{Quantum Approximate Policy Iteration}
    \label{algorithm:quantum-approximate-policy-iteration}
    \begin{algorithmic}
    \STATE {\bfseries input} MDP $\mathcal{M}$,  features $\Phi$, number of measurements $M$, number of iterations $T$, precision $\epsilon$.
    \STATE initialize policy $\pi_0$.
    \FOR{$t=1$ {\bfseries to} $T$}
        \FOR{$s \in \mathcal{S}$}
            \STATE initialize measurement histogram.
            $M_t(a)=0$ for every action $a$.
            \FOR{$m=0$ {\bfseries to} $M-1$}
                \STATE use quantum linear solver with precision $\epsilon$ to obtain $\ket{\widehat{w}^{\pi_t}}\approx\ket{( \Phi^\intercal\Phi- \gamma \Phi^\intercal P^{\pi_t} \Phi)^{-1}\Phi^\intercal R}$.
                \STATE use quantum linear algebra with precision $\epsilon$ to obtain $\ket{\widehat{Q}^{\pi_t}(s,\cdot)}\approx \ket{\Phi(s)\widehat{w}^{\pi_t}}$.
                \STATE measure $\ket{\widehat{Q}^{\pi_t}(s,\cdot)}$ to get action $a$ with probability $|\braket{a}{\widehat{Q}^{\pi_t}(s,.)}|^2$.
                \STATE update measurement histogram $M_t(a)=M_t(a)+1$
            \ENDFOR
            \STATE improve policy as $\pi_{t+1}(s)=\argmax M_t(a)$.
        \ENDFOR
    \ENDFOR
    \STATE {\bfseries output} policy $\pi_T$
    \end{algorithmic}
\end{algorithm}

\subsection{Quantum approximate policy improvement}
\label{subsection:aqpi-improvement}
    
    \indent We will now describe several quantum policy improvements methods that work together with the approximate quantum policy evaluation method we described above, where for each policy $\pi$ we estimate a weight vector $\ket{\widehat{w}^\pi}$. Let us assume that these states can be produced in time $\cost_{w^\pi}$. Again, we can assume a very small $\epsilon$ in the approximation guarantee of the states $\ket{\widehat{w}^\pi}$ and the states $\ket{{w}^\pi}$ (since it appears only inside a logarithm in the running time) and thus the approximation to the state $\ket{\widehat{w}^\pi}$ we will achieve through measurements will provide the same guarantees for the state $\ket{{w}^\pi}$ as well.
    
    \indent Our goal is to be able to compute a greedy policy with respect to the approximate value function $\widehat{Q}^\pi =  \Phi \widehat{w}^\pi$. Since our quantum procedure produces the normalized state $\ket{\widehat{w}^\pi}$, we are going to perform measurements in order to compute the actions corresponding to the improved policy $\pi'$ defined as $\pi'(s)=\argmax_a \Phi(s) \widehat{w}^\pi$ where $\Phi(s)\in\mathbb{R}^{A\times K}$ is the matrix with rows $\Phi(s,a)^\top$ containing the features associated to the state $s$ such that $\Phi(s) \widehat{w}^\pi$ is an approximation to $Q^\pi(s,.)$. Next, we will describe three different improvement strategies. 
    
    \indent The first approach is similar to the one detailed in Subsection \ref{subsection:qpi-improvement} but requires an additional step. Since we have quantum access to $\Phi$, we use the quantum matrix multiplication procedure from Theorem \ref{theorem:quantum-linear-algebra} with the $\mu_\Phi$-block-encoding of $\Phi$ and the output $\ket{\widehat{w}^\pi}$ of the approximate quantum policy evaluation procedure to compute the quantum state $\ket{\Phi \widehat{w}^\pi}$ which is an approximate to the quantum value function $\ket{Q^\pi}$ with cost $\widetilde{\bigO}( \kappa_\Phi (\mu_\Phi \cost_\Phi + \cost_{w^\pi}))$. We then perform measurements on this quantum state and update the policy according to the rule: $$\pi'(s)=\argmax_a M(s,a) \approx \argmax_a \braket{\Phi w^\pi}{s,a}$$ where $M$ is the histogram of the measured state-action pairs $\ket{s,a}$ sampled from $\ket{\Phi \widehat{w}^\pi}$. The total cost of this policy update rule is $\widetilde{\bigO}(M \kappa_\Phi (\mu_\Phi \cost_\Phi + \cost_{w^\pi}))$ where the number of measurements can be adjusted in practice according to the arguments provided in Subsection \ref{subsection:qpi-improvement}.

    \indent The second approach reconstructs classically an approximation to the output $\ket{\widehat{w}^\pi}$ in order to improve the actual policy. First, we will perform a number of $M$ measurements on $\ket{\widehat{w}^\pi}$ such that we will sample for each measurement some feature index $\ket{k}$ with probability $|\braket{\widehat{w}^\pi}{k}|^2$. Since the components of $\ket{\widehat{w}^\pi}$ are not necessarily positive, we also need to perform sign estimation of the components of $\ket{w^\pi}$ by performing an additional number of $M$ measurements that query $\ket{\widehat{w}^\pi}$ \cite{Kerenidis2020QuantumAF}. Denoting by $M(k)$ the number of times the feature index $\ket{k}$ was sampled and by $\sigma(k)$ the estimated sign of $\braket{\widehat{w}^\pi}{k}$, the normalized vector with coordinates $\sigma(k)\sqrt{M(k)/M}$ is an approximation to the quantum state $\ket{w^\pi}$. Hence, we can use the following policy improvement rule:
    $$ \pi'(s) = \argmax_a \sum_{k} \sigma(k)\sqrt{M(k)} \phi_k(s,a) \approx \argmax_a \braket{\Phi(s,a)}{w^\pi} $$ The total cost of this policy improvement strategy is $\widetilde{\bigO}(SK + M \cost_{w^\pi})$ since we need to perform $M$ measurements in order to reconstruct classically $\ket{\widehat{w}^\pi}$ before performing $\bigO(K)$ operations to compute $\pi'(s)$ for each $s\in \mathcal{S}\text{ or }\mathcal{D}$.
    
    \indent The third approach consists of building approximations to the quantum states $\ket{\Phi(s)\widehat{w}^\pi}$ for every state $s$ using quantum matrix-vector multiplication and performing measurements on these quantum states. First, for every state $s$, we construct a $\mu_{\Phi(s)}$-block-encoding of $\Phi(s)$ that we apply to $\ket{\widehat{w}^\pi}$ in order to compute $\ket{\Phi(s)\widehat{w}^\pi}$ which is as an approximation to $\ket{Q^\pi(s,.)}$ defined as: $$ \ket{Q^\pi(s,.)} = \frac{1}{\|Q^\pi(s,.)\|} \sum_{a} Q^\pi(s,a) \ket{a}$$ Second, we measure the quantum states $\ket{\Phi(s)\widehat{w}^\pi}$ to get an action $a$ with probability $|\braket{a}{\widehat{Q}^\pi(s,.)}|^2 \approx Q^\pi(s,a)^2/\|Q^\pi(s,\cdot)\|^2$. Similarly to the approach in Subsection \ref{subsection:qpi-improvement}, we construct for every $s$ a histogram of measurements denoted by $M$ such that $M(a)$ is the number of times we measured action when applying the block-encoding of $\Phi(s)$ to $\ket{w^\pi}$. Then, we update the policy according to the rule: $$\pi'(s)=\argmax_a M(a) \approx \argmax_a \braket{a}{\Phi(s)w^\pi}$$ Let $\kappa_{\Phi|\mathcal{S}}$ and $\mu_{\Phi|\mathcal{S}}$ be upper bounds on the quantities $\kappa_{\Phi(s)}$ and $\mu_{\Phi(s)}$ of $\Phi(s)$ over all states $s$, the total cost for updating the policy is then $\bigO(MS\kappa_{\Phi|\mathcal{S}}(\mu_{\Phi|\mathcal{S}} \cost_{\Phi|\mathcal{S}} + \cost_{w^\pi}))$ since
    the cost for producing a single quantum state $\ket{\Phi(s)\widehat{w}^\pi}$ is $\bigO(\kappa_{\Phi(s)}(\mu_{\Phi(s)} \cost_{\Phi(s)} + \cost_{w^\pi}))$. 
    
    \indent We have defined different quantum approximate policy improvement methods that can be used together with the quantum policy evaluation described in previous sections. The quantum approximate policy iteration method with the third improvement strategy, which provides a good method for near term implementations, is used in Algorithm \ref{algorithm:quantum-approximate-policy-iteration}. In the next subsections, we are going to discuss how to construct the block-encodings of the features matrix $\Phi$, analyze the running time of our approach and provide a model-free implementation. 
    
\subsection{Constructing block-encodings}
\label{subsection:aqpi-block-encoding}
    
    \indent We are going to show how to build quantum access to the parameters required by Theorem \ref{theorem:model-based-evaluation}. We need to construct the block-encodings of the transition matrix $P^\pi$ and the features matrix $\Phi$ and build quantum access to the reward vector $\ket{R}$. 
    
    \indent We assume quantum access to $\mathcal{M}$ as in Definition \ref{definition:block-mdp} for the parameters of the MDP and we have already discussed in Subsection \ref{subsection:qpi-block-encoding} how to get a block-encoding for $P^\pi$ and the procedure that prepares  $\ket{R}$. In the following, we apply a similar approach to get a block-encoding for $\Phi$. We extend Definition \ref{definition:block-mdp} to the approximate case it by assuming access to an additional oracle that encodes the features: 
    
    \begin{definition}[Model-based quantum access]
    \label{definition:block-basis}
        Let $\Phi \in \mathbb{R}^{SA\times K}$ be a features matrix such that $\|\Phi(s,a)\|=1$ for every state-action pair $(s,a)$. We say that we have quantum access in the model-based case with cost $(\cost_P,\cost_R,\cost_\Phi)$ if, additionally to the oracles in Definitions \ref{definition:block-mdp} and \ref{definition:block-policy} with cost $(\cost_P,\cost_R)$, we can implement with cost $\cost_\Phi$ the following oracle and its controlled version for the features matrix $\Phi$:  
        $$\oracle_\Phi:\ket{s,a}\ket{0_k} \longrightarrow  \ket{s,a}\ket{\Phi(s,a)} = \sum_{sa} \phi_k(s,a) \ket{s,a}\ket{k} $$
    \end{definition}
    
    \indent Then, we use the oracle $\oracle_\Phi$ to build the block-encoding of $\Phi$ as shown in the following lemma:  
    
    \begin{lem}[Block-encoding of $\Phi$]
    \label{lemma:block-model-based} 
        Given quantum access to $\Phi$ with cost $\cost_\Phi$, we can implement a $\sqrt{K}$-block-encoding of $\Phi$ with cost $\bigO(\cost_\Phi)$.
    \end{lem}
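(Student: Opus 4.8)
The plan is to reduce the statement to a single application of Lemma~\ref{lemma:constructing-blocks} (the generic recipe that turns a pair of row/column state-preparation oracles into a block-encoding), in close analogy with the proof of Lemma~\ref{lemma:block-policy} for $P^\pi$. I would instantiate Lemma~\ref{lemma:constructing-blocks} on the matrix $\mathbf{A}=\Phi$ with the splitting parameter $p=1$. The choice $p=1$ is the right analogue of the choice $p=1/2$ used for the transition matrix: there one exploited that the rows of a stochastic matrix become valid quantum states after taking square roots, whereas here, by the hypothesis of Definition~\ref{definition:block-basis}, every feature vector $\Phi(s,a)$ is already a unit vector in $\ell_2$, so $\ket{\Phi(s,a)}=\sum_k \phi_k(s,a)\ket{k}$ is directly a normalized state and no square root is needed.

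The first step is to observe that the row-oracle $\oracle^r$ required by Lemma~\ref{lemma:constructing-blocks} for $p=1$ is literally the oracle $\oracle_\Phi$ of Definition~\ref{definition:block-basis}: it maps $\ket{s,a}\ket{0_k}\mapsto \sum_k \phi_k(s,a)\ket{s,a}\ket{k}$ with no garbage term, and since ${\rm s}_{2}(\Phi)=\max_{sa}\|\Phi(s,a)\|_2^2=1$ by hypothesis we may take the row-normalization $\alpha=1$. Thus $\oracle^r$ (and, by Definition~\ref{definition:block-basis}, its controlled version, which Lemma~\ref{lemma:constructing-blocks} also needs) is available at cost $\cost_\Phi$.

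The second step is to supply the column-oracle $\oracle^c$, which with $1-p=0$ must send $\ket{0_{sa}}\ket{k}$ to $\tfrac{1}{\beta}\sum_{sa}\ket{s,a}\ket{k}$ over the appropriate index set plus a garbage state orthogonal to every computational basis state. This is implemented simply by applying Hadamard gates on the $(s,a)$-register, at a cost $\widetilde{\bigO}(1)$ that is absorbed into $\bigO(\cost_\Phi)$. The delicate point — and the step I expect to be the main obstacle — is pinning down the normalization $\beta$: it is controlled by ${\rm s}_{0}(\Phi^\top)$, the largest column support of $\Phi$, exactly as the analogous factor $\sqrt{{\rm c}_P}$ for the transition matrix was controlled by ${\rm c}_P$, and it is here that the structure/normalization of the feature matrix enters to give $\beta=\sqrt{K}$. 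I would isolate this normalization bound as a small separate claim and verify the orthogonality condition on the garbage register there, rather than mix it into the circuit description.

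Finally, with $\alpha=1$ and $\beta=\sqrt{K}$ established, Lemma~\ref{lemma:constructing-blocks} immediately produces an $\alpha\beta=\sqrt{K}$-block-encoding of $\Phi$ from a constant number of calls to $\oracle^r=\oracle_\Phi$ and to the Hadamard-based $\oracle^c$, for a total cost $\bigO(\cost_\Phi)$, which is the claim. Everything except the normalization bookkeeping for $\oracle^c$ is mechanical once the identification $\oracle^r=\oracle_\Phi$ is made.
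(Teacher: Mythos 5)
Your route is the same as the paper's: its proof of Lemma~\ref{lemma:block-model-based} is exactly the one-line application of Lemma~\ref{lemma:constructing-blocks} with $p=1$, identifying $\oracle^r=\oracle_\Phi$ and $\alpha=1$ from the row normalization. The genuine gap is precisely the step you defer as a ``small separate claim'': the identification $\beta=\sqrt{K}$. With $p=1$ the lemma requires $\beta^2\geq {\rm s}_0(\Phi^\top)$, the largest \emph{column} support of $\Phi$, and nothing in Definition~\ref{definition:block-basis} bounds that quantity by $K$: the row normalization controls ${\rm s}_2(\Phi)$, and the trivial bound $K$ applies to the row support ${\rm s}_0(\Phi)$, which sits on the wrong side of the lemma for the oracle pairing you chose. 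A single feature that is nonzero on every state--action pair (a constant feature, or the Fourier features of Subsection~\ref{subsection:exp-InvertedPendulum}, each supported on all pairs sharing one action) makes ${\rm s}_0(\Phi^\top)$ of order $SA\gg K$. Moreover, the concrete column oracle you propose---Hadamards on the $(s,a)$ register---prepares $\frac{1}{\sqrt{SA}}\sum_{sa}\ket{s,a}\ket{k}$, whose overlap with the row state $\sum_k\phi_k(s,a)\ket{s,a}\ket{k}$ is $\phi_k(s,a)/\sqrt{SA}$; that circuit therefore yields a $\sqrt{SA}$-block-encoding, which is inconsistent with the $\sqrt{K}$ normalization you assert two sentences later.

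To land on $\sqrt{K}$, the $\ell_0$ bound has to fall on the feature-index side: for instance, apply Lemma~\ref{lemma:constructing-blocks} with $p=0$ (equivalently, block-encode $\Phi^\top$), using the uniform superposition $\frac{1}{\sqrt{K}}\sum_k\ket{k}$ as one oracle (here ${\rm s}_0(\Phi)\leq K$ holds trivially) and a preparation of the normalized columns of $\Phi$ as the other, with ${\rm s}_2(\Phi^\top)\leq 1$ guaranteed by the assumption $\|\Phi\|\leq1$ of Theorem~\ref{theorem:model-based-evaluation}; but this needs column access to $\Phi$, which $\oracle_\Phi$ alone does not supply, in contrast to Definition~\ref{definition:block-mdp}, where a separate column oracle and the bound ${\rm c}_P$ play exactly this role for $P$. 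So you have correctly located the crux (the analogue of ${\rm c}_P$), and the paper's own one-line proof does not spell this step out either, but as written your proposal establishes only a $\sqrt{SA}$-block-encoding of $\Phi$ (or a $\sqrt{{\rm s}_0(\Phi^\top)}$-block-encoding given an additional column-support oracle), not the stated $\sqrt{K}$ one.
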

    
    \begin{proof}
        Since the rows of $\Phi$ are normalized, we get from $\text{O}_\Phi$ a $\sqrt{K}$-block-encoding of $\Phi$ using Lemma \ref{lemma:constructing-blocks} with $p=1$.
    \end{proof}
    
\subsection{Running time analysis}
\label{subsection:aqpi-running-time}
    
    \indent We have formally defined the oracles that we need for the implementation of quantum approximate policy iteration and we are going to analyze its running time. From Theorem \ref{theorem:model-based-evaluation}, we see that the cost of quantum approximate policy evaluation in the model-based case is:
    $$\bigO\paren*{ \kappa_\Phi^2
        \paren*{\paren*{\mu_\Phi^2\mu_{P^\pi}+\mu_\Phi\kappa_\Phi}\cost_\Phi+\mu_\Phi^2\mu_{P^\pi}(\cost_{P}+\cost_{\pi})+\kappa_\Phi \cost_R} \Gamma\polylog\paren*{\kappa_\Phi\Gamma /\epsilon}}$$
    
    \indent As we said, we could make the assumption that oracles for the transition matrix $P$ and the policy $\pi$ can be built in poly-logarithmic depth, and the same for the feature matrix $\Phi$ that is hand-picked by us. We also have that the normalizing factor $\mu_\Phi=\sqrt{K}$.  We then have the following simplification of the running time of the model-based approximate quantum policy evaluation: $$ \cost_{w^\pi} = \bigO\paren*{\kappa_\Phi^2\paren*{K\mu_{P^\pi}+\sqrt{K}\kappa_\Phi}\Gamma \polylog\paren*{\kappa_\Phi SA\Gamma K/\epsilon}} $$
    
    \indent The overall running time of our algorithm where we apply the greedy update rule as in Algorithm \ref{algorithm:quantum-approximate-policy-iteration} will be $ \widetilde{\bigO} (\kappa_{\Phi|\mathcal{S}} MS\cost_{w^\pi})$. Classically, the running time of approximate policy iteration is $\bigO(SAK^2)$ for the approximate policy evaluation step and $\bigO(SAK)$ for the policy improvement step. Whether our algorithm provides an advantage over the classical one depends on the number of measurements $M$ required for policy improvement and the properties of the features function $\Phi$, namely the dimension $K$ and the condition numbers $\kappa_\Phi$ and $\kappa_{\Phi|\mathcal{S}}$, which given that we pick the matrix $\Phi$ ourselves, we can easily control. Moreover, we do not expect the number of measurements to grow with the size of the state space $S$ since we measure quantum states $\ket{Q^\pi(s,.)}$ of size $A$ which was not the case with $\ket{Q^\pi}$ of size $SA$. 
    
\subsection{Model-free implementation}
\label{subsection:aqpi-model-free}
    
    \indent We have defined a quantum algorithm for performing model-based approximate policy iteration where we have access to a model for the MDP $\mathcal{M}$. Next, we are going to show that we can also implement a model-free approach that does not require such access. When $P$ and $R$ are unknown, we assume having access to a source $\mathcal{D}$ containing transition samples of the form $(\tilde{s},\tilde{a},\tilde{s}',\tilde{r})$ and we compute an estimate $\widetilde{w}^\pi$ of $w^\pi$ as a solution to $\mathbf{A}^\pi \widetilde{w}^\pi = \mathbf{b}$ with  $\mathbf{A}^\pi= \widetilde{\Phi}^\top (\widetilde{\Phi}-\gamma \widetilde{P^\pi\Phi})$ and $\mathbf{b}=\widetilde{\Phi}^\top\widetilde{R}$ such that $\widetilde{\Phi} \in \mathbb{R}^{D\times K}$, $\widetilde{P^\pi\Phi}\in \mathbb{R}^{D\times K}$ and $\widetilde{R}\in \mathbb{R}^D$ are estimated using the samples from the source $\mathcal{D}$. Denoting by $D$ the number of samples in $\mathcal{D}$, $\widetilde{\Phi} \in \mathbb{R}^{D\times K}$  is the matrix with rows $\phi(\tilde{s}_i,\tilde{a}_i)^\top$ where $i$ denotes the $i$-th sample of $\mathcal{D}$, $\widetilde{P^\pi\Phi}\in \mathbb{R}^{D\times K}$ is the matrix with rows $\phi(\tilde{s}'_i,\pi(\tilde{s}'_i))^\top$ and $\widetilde{R}\in\mathbb{R}^{D}$ is the vector with elements $\widetilde{R}_i = \widetilde{r}_i$. In the quantum case, we will assume having quantum access to these three quantities and use the quantum linear algebra techniques to build an $\epsilon$-approximation (in $\ell_2$-norm) state $\ket{\widehat{w}^\pi}$ to $\ket{\widetilde{w}^\pi}$: 
    
    \begin{thm}[Model-free evaluation]
    \label{theorem:model-free-evaluation}
        Let $\mathcal{M}=(\mathcal{S},\mathcal{A},P,R,\gamma)$ be a finite or non-finite Markov decision process with unknown model $P$ and $R$, $\Phi$ a  features function such that $\|\Phi(s,a)\|=1$ for every state-action pair $(s,a)$, $\pi$ a deterministic  policy and $\epsilon > 0$ the precision parameter. Suppose there exists a $\mu_{\widetilde{\Phi}}$-block-encoding of the estimated $\widetilde{\Phi}$ and $\widetilde{P^\pi\Phi}$ with cost $T_{\widetilde{\Phi}}$. Both matrices having singular values ranging in $[1/\kappa_{\widetilde{\Phi}},1]$. Also suppose that we can prepare the estimated reward vector $\ket{\widetilde{R}}$ with cost $\cost_{\widetilde{R}}$. Then there exists a quantum algorithm that returns a quantum state $\ket{\widehat{w}^\pi}$ such that  $\|\ket{\widehat{w}^\pi}-\ket{\widetilde{w}^\pi}\|\leq\epsilon$ with cost:  
        $$ \bigO\paren*{\kappa_{\widetilde{\Phi}}^2\paren*{\mu_{\widetilde{\Phi}}^2\cost_{\widetilde{\Phi}}+\kappa_{\widetilde{\Phi}} \mu_{\widetilde{\Phi}} \cost_{\widetilde{\Phi}}+\kappa_{\widetilde{\Phi}}\cost_R}
        \Gamma\polylog\paren*{\kappa_{\widetilde{\Phi}}\Gamma /\epsilon}} $$
    \end{thm}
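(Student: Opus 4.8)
The plan is to follow the same route as the proof of Theorem~\ref{theorem:model-based-evaluation}, since the model-free system $\mathbf{A}^\pi\widetilde{w}^\pi=\mathbf{b}$ with $\mathbf{A}^\pi=\widetilde{\Phi}^\top(\widetilde{\Phi}-\gamma\widetilde{P^\pi\Phi})$ and $\mathbf{b}=\widetilde{\Phi}^\top\widetilde{R}$ has the same shape, with the single empirical matrix $\widetilde{P^\pi\Phi}$ now playing the role that $P^\pi\Phi$ did before (and coming with its own block-encoding, which is why the $\mu_{P^\pi}$-dependence disappears). First I would prepare the right-hand side: using the $\mu_{\widetilde{\Phi}}$-block-encoding of $\widetilde{\Phi}$ and the transpose-multiplication case of Theorem~\ref{theorem:quantum-linear-algebra} applied to $\ket{\widetilde{R}}$, I obtain a state $\ket{\widehat{\mathbf{b}}}$ that is $\epsilon'$-close to $\ket{\mathbf{b}}=\ket{\widetilde{\Phi}^\top\widetilde{R}}$ at cost $\cost_{\mathbf{b}}=\bigO(\kappa_{\widetilde{\Phi}}(\mu_{\widetilde{\Phi}}\cost_{\widetilde{\Phi}}+\cost_{\widetilde{R}})\polylog(\kappa_{\widetilde{\Phi}}\Gamma/\epsilon))$, leaving $\epsilon'$ to be fixed at the end.

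Next I would build a block-encoding of $\mathbf{A}^\pi$. Writing $\mathbf{A}^\pi=\widetilde{\Phi}^\top\widetilde{\Phi}-\gamma\,\widetilde{\Phi}^\top\widetilde{P^\pi\Phi}$, I use the product rule of Theorem~\ref{theorem:quantum-arithmetics} to get $\mu_{\widetilde{\Phi}}^2$-block-encodings of each of $\widetilde{\Phi}^\top\widetilde{\Phi}$ and $\widetilde{\Phi}^\top\widetilde{P^\pi\Phi}$ (a block-encoding of $\widetilde{\Phi}$ yields one of $\widetilde{\Phi}^\top$ with the same factor), and then the linear-combination rule with coefficients $1$ and $-\gamma$ to get a $\mu_{\mathbf{A}}$-block-encoding of $\mathbf{A}^\pi$ with $\mu_{\mathbf{A}}=(1+\gamma)\mu_{\widetilde{\Phi}}^2=\bigO(\mu_{\widetilde{\Phi}}^2)$ and cost $\cost_{\mathbf{A}}=\bigO(\cost_{\widetilde{\Phi}})$.

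The key step, and the one I expect to be the main obstacle, is bounding the condition number of $\mathbf{A}^\pi$: unlike the model-based proof there is no row-stochastic structure to exploit, since $\widetilde{\Phi}$ and $\widetilde{P^\pi\Phi}$ are distinct empirical matrices. For the upper bound, $\|\mathbf{A}^\pi\|\le\|\widetilde{\Phi}\|\,\|\widetilde{\Phi}-\gamma\widetilde{P^\pi\Phi}\|\le 1+\gamma=\bigO(1)$. For the lower bound I would use that $\widetilde{\Phi}$ and $\widetilde{P^\pi\Phi}$ both have singular values in $[1/\kappa_{\widetilde{\Phi}},1]$ to argue, in the style of Theorem~\ref{theorem:model-based-evaluation}, that $\widetilde{\Phi}-\gamma\widetilde{P^\pi\Phi}$ has smallest singular value on the order of $(1-\gamma)/\kappa_{\widetilde{\Phi}}$ and hence $\sigma_{\min}(\mathbf{A}^\pi)$ on the order of $(1-\gamma)/\kappa_{\widetilde{\Phi}}^2$; the honest way to secure this is to fold the required invertibility into the hypothesis on the empirical matrices, as is standard in LSTDQ/LSPI analyses. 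This yields $\kappa(\mathbf{A}^\pi)=\bigO(\kappa_{\widetilde{\Phi}}^2\Gamma)$ and $\|\mathbf{A}^\pi\|^{-1}\kappa(\mathbf{A}^\pi)=\bigO(\kappa_{\widetilde{\Phi}}^2\Gamma)$.

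Finally I would run the quantum linear-system solver of Theorem~\ref{theorem:quantum-linear-algebra} on the block-encoding of $\mathbf{A}^\pi$ with input $\ket{\widehat{\mathbf{b}}}$ and precision $\epsilon/2$, obtaining $\ket{\widehat{w}^\pi}$ with $\|\ket{\widehat{w}^\pi}-\ket{(\mathbf{A}^\pi)^{-1}\widehat{\mathbf{b}}}\|\le\epsilon/2$; plugging in $\mu_{\mathbf{A}}=\bigO(\mu_{\widetilde{\Phi}}^2)$, $\cost_{\mathbf{A}}=\bigO(\cost_{\widetilde{\Phi}})$, the cost $\cost_{\mathbf{b}}$ from the first step, and the two condition-number bounds gives exactly the claimed running time $\bigO(\kappa_{\widetilde{\Phi}}^2(\mu_{\widetilde{\Phi}}^2\cost_{\widetilde{\Phi}}+\kappa_{\widetilde{\Phi}}\mu_{\widetilde{\Phi}}\cost_{\widetilde{\Phi}}+\kappa_{\widetilde{\Phi}}\cost_{\widetilde{R}})\Gamma\polylog(\kappa_{\widetilde{\Phi}}\Gamma/\epsilon))$. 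To close the error analysis I would, as in Theorem~\ref{theorem:model-based-evaluation}, invoke Claim~\ref{claim:angle-error} to bound $\|\ket{(\mathbf{A}^\pi)^{-1}\widehat{\mathbf{b}}}-\ket{(\mathbf{A}^\pi)^{-1}\mathbf{b}}\|\le\sqrt{2}\,\kappa(\mathbf{A}^\pi)\epsilon'$ and then pick $\epsilon'=\Theta(\epsilon/(\kappa_{\widetilde{\Phi}}^2\Gamma))$, so that a triangle inequality gives $\|\ket{\widehat{w}^\pi}-\ket{\widetilde{w}^\pi}\|\le\epsilon$; since $\epsilon'$ enters only inside the logarithm, the stated bound is unaffected.
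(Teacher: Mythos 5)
Your proposal follows essentially the same route as the paper's proof: prepare $\ket{\widetilde{\Phi}^\top\widetilde{R}}$ via the block-encoding of $\widetilde{\Phi}^\top$, build the $(1+\gamma)\mu_{\widetilde{\Phi}}^2$-block-encoding of $\mathbf{A}^\pi$ by products and a linear combination, run the linear-system solver with precision $\epsilon/2$, and finish with Claim~\ref{claim:angle-error} and the choice $\epsilon'=\Theta\bigl(\epsilon/(\kappa_{\widetilde{\Phi}}^2\Gamma)\bigr)$. If anything you are more explicit than the paper on the one delicate point, namely the lower bound on the smallest singular value of $\mathbf{A}^\pi$, which no longer follows from row-stochasticity and effectively has to be taken as an assumption on the empirical matrices (the paper leaves this step implicit, and in its experiments clips the singular values of $\mathbf{A}^\pi$ to control the condition number).
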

    
    \begin{proof}
        Using a similar approach to Theorem \ref{theorem:model-based-evaluation}, we can implement a $\mu_{\widetilde{\Phi}}$-block-encoding of $\widetilde{\Phi}^\top$ and generate an $\epsilon' = (1+\gamma)\epsilon/(2\sqrt{2}\kappa_{\widetilde{\Phi}}^2\Gamma)$ approximation state $\ket{\mathbf{\widehat{b}}}$ to the state $\ket{\mathbf{b}}=\ket{\widetilde{\Phi}^\top\widetilde{R}}$ with cost: $$\cost_{\mathbf{b}}=\bigO\paren*{\kappa_{\widetilde{\Phi}} \paren*{\mu_{\widetilde{\Phi}} \cost_{\widetilde{\Phi}}+\cost_{\widetilde{R}}}\polylog\paren*{\kappa_{\widetilde{\Phi}}\Gamma/\epsilon}}$$
        
        Next, we build the block encoding of ${\mathbf{A}^\pi}=\widetilde{\Phi}^\top (\widetilde{\Phi}-\gamma \widetilde{P^\pi\Phi})$ using the block-encodings of $\widetilde{\Phi}$, $\widetilde{\Phi}^\top$ and $\widetilde{P^\pi\Phi}$.
        First, note that we can construct $\mu^2_{\widetilde{\Phi}}$-block-encodings of $\widetilde{\Phi}^\top \widetilde{\Phi}$ and ${\paren{\widetilde{\Phi}^\top \widetilde{P^\pi\Phi}}}$ with cost $\bigO(\cost_{\widetilde{\Phi}})$. Using Theorem \ref{theorem:quantum-arithmetics}, we can implement a $\mu^2_{\widetilde{\Phi}}(1+\gamma)$-block-encoding of ${\paren{\widetilde{\Phi}^\top \widetilde{\Phi}-\gamma \widetilde{\Phi}^\top\widetilde{P^\pi\Phi}}}$ with cost $\cost_{\mathbf{A}}=\bigO(\cost_{\widetilde{\Phi}})$. Then, we apply the quantum linear solver with precision $\epsilon/2$ to generate a state $\ket{\widehat{w}^\pi}$ such that $\|\ket{\widehat{w}^\pi}-\ket{(\mathbf{A}^\pi)^{-1}\widehat{\mathbf{b}}}\|\leq\epsilon/2 $ with cost: 
        $$ \bigO\paren*{\kappa(\mathbf{A}^\pi)\paren*{\|\mathbf{A}^\pi\|^{-1}\mu_{\widetilde{\Phi}}^2\paren*{1+\gamma}\cost_{\widetilde{\Phi}}+\cost_{\mathbf{b}}}\polylog\paren*{\kappa(\mathbf{A}^\pi)/\epsilon}}$$
    \end{proof}
    
    \indent The model-free quantum policy evaluation approach above can work together with any of the improvement strategies described in Subsection \ref{subsection:aqpi-improvement}. The only difference is that we need to update the policy for all states $s' \in \mathcal{D}$, i.e. we iterate over all next-states $s'$ and update the estimated $\widetilde{P^\pi\Phi}$. The cost analysis is sill valid by replacing the space state size $S$ by the source size $D$. Next, we describe how to construct the necessary block-encodings in the model-free case. 
    
    \indent We are provided with a source $\mathcal{D}$ of transition samples  of the form $(\tilde{s},\tilde{a},\tilde{s}',\tilde{r})$ and their corresponding features. Similarly to the model-based case, we will assume that the features are normalized for each state-action pair $(\tilde{s},\tilde{a})$. The following definition gives the list of oracles that we need to implement in order to perform quantum approximate policy evaluation needed for Theorem \ref{theorem:model-free-evaluation}: 
    
    \begin{definition}[Model-free quantum access]
    \label{definition:block-model-free}
        Let $\mathcal{M}=(\mathcal{S},\mathcal{A},P,R,\gamma)$ be a finite or non-finite Markov decision process with unknown model $P$ and $R$, $\mathcal{D}$ a finite source of transition samples from $\mathcal{M}$ of the form $(\tilde{s},\tilde{a},\tilde{s}',\tilde{r})$, $\Phi$ a feature function such that $\|\Phi(\tilde{s},\tilde{a})\|=1$ for all $(\tilde{s},\tilde{a})$ and $\pi$ a deterministic  policy. We say that we have quantum access in the model-free case with costs $(\cost_{\widetilde{P}},\cost_{\widetilde{R}}, \cost_{\widetilde{\Phi}}, \cost_{\widetilde{\pi}})$ if we can implement: 
        \begin{enumerate}
            \item Two oracles for the transition samples with cost $\cost_{\widetilde{P}}$ such that:
            $$ \oracle_{\widetilde{P}}^{s,a}: \ket{i} \ket{0_s,0_a} \longrightarrow  \ket{i}\ket{\tilde{s}_i,\tilde{a}_i} $$
            $$ \oracle_{\widetilde{P}}^{s'}: \ket{i} \ket{0_s} \longrightarrow  \ket{i}\ket{\tilde{s}'_i} $$
            
            \item An oracle for the reward samples with cost $\cost_{\widetilde{R}}$ such that:
            $$ \oracle_{\widetilde{R}}:\ket{0_i} \longrightarrow \ket{\widetilde{R}} =  \frac{1}{\|\widetilde{R}\|}\sum_{i} \tilde{r}_i\ket{i} $$
            
            \item An oracle for the features function $\Phi$ with cost $\cost_{\Phi}$ such that:
            $$\oracle_{\Phi}: \ket{\tilde{s},\tilde{a}}\ket{0_k}\longrightarrow \ket{\tilde{s},\tilde{a}}\ket{\Phi(\tilde{s},\tilde{a})} = \sum_k \phi_k(\tilde{s},\tilde{a}) \ket{\tilde{s},\tilde{a}}\ket{k}$$
            
            \item An oracle for the deterministic policy $\pi$ with cost $\cost_{\widetilde{\pi}}$ such that: 
            $$\oracle_{\tilde{\pi}}: \ket{\tilde{s},0_a} \longrightarrow \ket{\tilde{s},\pi(\tilde{s})}$$
            
        \end{enumerate}
    \end{definition}
    
    \indent Our quantum policy evaluation algorithm requires a procedure for the estimated vector $\ket{\widetilde{R}}$, which is given by the oracle $\oracle_{\widetilde{R}}$, and the block-encodings of $\widetilde{\Phi}$ and  $\widetilde{P^\pi\Phi}$ given by the following lemma:
    
    \begin{lem}[Block-encodings of $\widetilde{\Phi}$ and  $\widetilde{P^\pi\Phi}$]
    \label{lemma:block-model-free} 
        Given quantum access in the model-free case to $\mathcal{D}$, $\Phi$ and $\pi$ as in Definition \ref{definition:block-model-free}, we can implement a $\sqrt{K}$-block-encoding of $\widetilde{\Phi}$ and $\widetilde{P^\pi\Phi}$ with cost $\cost_{\widetilde{\Phi}}=\bigO(\cost_{\widetilde{P}}+\cost_\Phi+\cost_{\widetilde{\pi}})$.
    \end{lem}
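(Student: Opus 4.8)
The plan is to assemble the row‑ and column‑access oracles required by Lemma~\ref{lemma:constructing-blocks} for each of $\widetilde{\Phi}$ and $\widetilde{P^\pi\Phi}$ out of the model‑free oracles of Definition~\ref{definition:block-model-free}, and then conclude exactly as in the proof of Lemma~\ref{lemma:block-model-based}. The only genuinely new step is obtaining access to the rows of these two matrices: the $i$-th row of $\widetilde{\Phi}$ is the feature vector $\Phi(\tilde{s}_i,\tilde{a}_i)^\top$ and the $i$-th row of $\widetilde{P^\pi\Phi}$ is $\Phi(\tilde{s}'_i,\pi(\tilde{s}'_i))^\top$, both indexed by the sample index $i$ and both of unit $\ell_2$-norm by hypothesis, so once we can prepare these as amplitudes conditioned on $\ket{i}$ the remainder is identical to the treatment of $\Phi$ and does not involve the reward oracle.

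First I would build the row oracle for $\widetilde{\Phi}$ by composing the sample oracle with the feature oracle and uncomputing the intermediate register: starting from $\ket{i}\ket{0_s,0_a}\ket{0_k}$, apply $\oracle_{\widetilde{P}}^{s,a}$ to get $\ket{i}\ket{\tilde{s}_i,\tilde{a}_i}\ket{0_k}$, then $\oracle_\Phi$ to get $\ket{i}\ket{\tilde{s}_i,\tilde{a}_i}\sum_k \phi_k(\tilde{s}_i,\tilde{a}_i)\ket{k}$, and finally $(\oracle_{\widetilde{P}}^{s,a})^{\dagger}$ to clear the state--action register, leaving $\ket{i}\ket{0_s,0_a}\sum_k \phi_k(\tilde{s}_i,\tilde{a}_i)\ket{k}$; since $\|\Phi(\tilde{s}_i,\tilde{a}_i)\|=1$ this is exactly the row access of Lemma~\ref{lemma:constructing-blocks} with $p=1$ and normalizing factor $1$, implemented in cost $\bigO(\cost_{\widetilde{P}}+\cost_\Phi)$. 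For $\widetilde{P^\pi\Phi}$ I would insert one more layer: from $\ket{i}\ket{0_s}$ apply $\oracle_{\widetilde{P}}^{s'}$ to obtain $\ket{i}\ket{\tilde{s}'_i}$, adjoin an action register $\ket{0_a}$ and apply $\oracle_{\tilde{\pi}}$ to obtain $\ket{i}\ket{\tilde{s}'_i,\pi(\tilde{s}'_i)}$ (well defined precisely because $\pi$ is deterministic, so $\pi(\tilde{s}'_i)$ is a basis state), then $\oracle_\Phi$ on $\ket{\tilde{s}'_i,\pi(\tilde{s}'_i)}\ket{0_k}$, and finally uncompute $\oracle_{\tilde{\pi}}$ and $\oracle_{\widetilde{P}}^{s'}$, leaving $\ket{i}\ket{0_s,0_a}\sum_k \phi_k(\tilde{s}'_i,\pi(\tilde{s}'_i))\ket{k}$ at cost $\bigO(\cost_{\widetilde{P}}+\cost_{\widetilde{\pi}}+\cost_\Phi)$, with the row again normalized by assumption.

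With these row oracles in hand, the column access and the assembly into a $\sqrt{K}$-block-encoding go through verbatim as in Lemma~\ref{lemma:block-model-based}: both $\widetilde{\Phi}$ and $\widetilde{P^\pi\Phi}$ lie in $\mathbb{R}^{D\times K}$ with unit-norm rows, so ${\rm s}_2(\cdot)\le 1$, and Lemma~\ref{lemma:constructing-blocks} with $p=1$ yields a $\sqrt{K}$-block-encoding of each, the cost being dominated by the row-oracle construction, hence $\cost_{\widetilde{\Phi}}=\bigO(\cost_{\widetilde{P}}+\cost_\Phi+\cost_{\widetilde{\pi}})$. The part to handle carefully is not a real obstacle but the register bookkeeping: one must check that $\oracle_\Phi$ (and the column oracle) act only on the feature register, so that the uncomputation of the $\ket{\tilde{s},\tilde{a}}$ and $\ket{\tilde{s}'},\ket{\pi(\tilde{s}')}$ registers is exact and the final state has the clean-ancilla form demanded by Lemma~\ref{lemma:constructing-blocks}, and one should state explicitly that determinism of $\pi$ is what makes $\oracle_{\tilde{\pi}}$ and its adjoint legitimate unitaries in this pipeline.
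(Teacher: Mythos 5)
Your proposal is correct and follows essentially the same route as the paper: compose $\oracle_{\widetilde{P}}^{s,a}$ (resp.\ $\oracle_{\widetilde{P}}^{s'}$ and $\oracle_{\tilde{\pi}}$) with $\oracle_\Phi$, uncompute the intermediate state--action registers, and invoke Lemma~\ref{lemma:constructing-blocks} with $p=1$ using the unit-norm rows to obtain the $\sqrt{K}$-block-encodings at cost $\bigO(\cost_{\widetilde{P}}+\cost_\Phi+\cost_{\widetilde{\pi}})$. Your explicit uncomputation of $\oracle_{\tilde{\pi}}$ before $(\oracle_{\widetilde{P}}^{s'})^\dagger$ is a slightly more careful bookkeeping of what the paper states tersely, but it is the same argument.
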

    
    \begin{proof}
    If we start from the state $\ket{i}\ket{0_k}\ket{0_s,0_a}$ and apply $\oracle_{\widetilde{P}}^{s,a}$ on the first and third registers followed by $\oracle_\Phi$ on the third and second register, we get the following mapping after uncomputing the third register using the adjoint operation $(\oracle_{\widetilde{P}}^{s,a})^\dag$: $$ \ket{i}\ket{0_k} \longrightarrow \sum_k \phi_k(\tilde{s}_i,\tilde{a}_i) \ket{i}\ket{k}$$
    
    Similarly, if we start from the state $\ket{i}\ket{0_k}\ket{0_s,0_a}$ and apply $\oracle_{\widetilde{P}}^{s'}$ followed by $\oracle_{\widetilde{\pi}}$ and  $\oracle_\Phi$, we get the following mapping after uncomputing the last register using $(\oracle_{\widetilde{P}}^{s'})^\dag$: $$ \ket{i}\ket{0_k} \longrightarrow \sum_k \phi_k(\tilde{s}'_i,\pi(\tilde{s}'_i)) \ket{i}\ket{k} $$
    Then, we use Lemma \ref{lemma:constructing-blocks} to construct respectively the block-encodings for $\widetilde{\Phi}$ and  $\widetilde{P^\pi\Phi}$ by setting the factor $p$ to be $1$.
    \end{proof}
    
    \indent Assuming that the circuits for model-free quantum access can be implemented in poly-logarithmic depth as in Subsection \ref{subsection:aqpi-running-time}, the running time of approximate quantum policy iteration simplifies to: 
    $$ \bigO\paren*{\kappa_{\widetilde{\Phi}}^3\sqrt{K}\Gamma \polylog\paren*{\kappa_{\widetilde{\Phi}} DAK\Gamma/\epsilon}} $$
    
\section{Applications}
\label{section:applications}
    
    \indent In the previous sections, we formulated our quantum policy iteration algorithms using the block-encoding framework and we have explicitly described what quantum oracles we need in order to construct these block-encodings. In this section, we will describe how to implement in practice quantum access to those oracles for the \textsc{FrozenLake} and \textsc{InvertedPendulum} which are two environments listed in OpenAI's Gym \cite{Brockman2016OpenAIG} and widely used in reinforcement learning. 
    
\subsection{Application to \textsc{FrozenLake}}
\label{subsection:exp-frozen-lake}
    
    \indent \textbf{Description of the environment:} \textsc{FrozenLake} is an environment that consists of a two-dimensional grid of size $X\times Y$ where the agent moves around the grid in four directions to reach the goal state without falling into holes. The episode terminates if the agent steps into a hole or reaches the goal state where a reward of $+1$ is perceived. Its state space $\mathcal{S}=\{(x,y)|x\in[X],y\in[Y]\}$ is the set of all grid positions and its action space $\mathcal{A}=\{(0,1),(0,-1),(1,0),(-1,0)\}$ contains the four possible actions: \textit{up}, \textit{down}, \textit{left} and \textit{right}. For example, taking action $a=(0,-1)$ when in state $s=(x,y)$ moves the agent to the next state $s'=s+a=(x,y-1)$.  
    
    \indent \textbf{Quantum access:} We want to build quantum access to the MDP $\mathcal{M}=(\mathcal{S},\mathcal{A},P,R,\gamma)$ associated to this environment by constructing the oracles as in Definition \ref{definition:block-mdp}. In the classical case, we can recover the environment dynamics $(P,R)$ by specifying the goal state $s_G$ and the subset of walkable positions $\mathcal{F}\subset\mathcal{S}$, i.e. positions that are neither holes nor the goal state. Similarly, we show in the following claim that we can build quantum access to $\mathcal{M}$ if we have access to appropriate oracles that encode the subset $\mathcal{F}$ and the state $s_G$: 
    
    \begin{claim}\label{claim-frozen-lake} 
         Let $\mathbbm{1}_\mathcal{F}$ be the indicator function for the subset of walkable positions in the grid. Given quantum access to an oracle $\oracle_\mathcal{F}: \ket{s}\ket{0}\xrightarrow{}\ket{s}\ket{\mathbbm{1}_\mathcal{F}(s)}$ with cost $\cost_\mathcal{F}$ and to an oracle $\oracle_G: \ket{0_s}\xrightarrow[]{}\ket{s_G}$ with cost $\cost_G$, we can build quantum access to $\mathcal{M}$ with costs  $(\cost_P,\cost_R)=(\bigO(\cost_\mathcal{F}),\bigO(\cost_G))$.
    \end{claim}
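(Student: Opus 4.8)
The plan is to realize each of the three oracles of Definition~\ref{definition:block-mdp} from the two primitives $\oracle_\mathcal{F}$ and $\oracle_G$, using in addition only $\polylog(XY)$ reversible arithmetic (coordinate-wise additions, subtractions, and range comparisons on the $\ket{s}$ and $\ket{a}$ registers), which is what the $\bigO(\cdot)$ absorbs. Throughout I use the deterministic \textsc{FrozenLake} dynamics: from a walkable state $s$ (i.e.\ $\mathbbm{1}_\mathcal{F}(s)=1$) action $a$ leads to $s+a$ when $s+a$ lies in the grid and stays at $s$ otherwise, while every non-walkable state (a hole or the goal, exactly the states killed by $\mathbbm{1}_\mathcal{F}$) is absorbing; the slippery variant is handled identically, replacing the single outcome by three outcomes with amplitude $\sqrt{1/3}$.

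First I would build a reversible subroutine $\mathrm{Next}\colon\ket{s,a}\ket{0_s}\mapsto\ket{s,a}\ket{f(s,a)}$ that writes the next state into the third register: compute $s+a$ there by a coordinate adder, query $\oracle_\mathcal{F}$ once on $\ket{s}$, then, conditioned on $\mathbbm{1}_\mathcal{F}(s)=0$ or on a range-comparison flag detecting $s+a\notin[X]\times[Y]$, overwrite that register with $s$; a second $\oracle_\mathcal{F}$ call uncomputes the walkability bit so the map is a clean unitary. Since the transition is deterministic, $\sum_{s'}\sqrt{p(s,a,s')}\ket{s'}=\ket{f(s,a)}$, so $\mathrm{Next}$ is precisely $\oracle_P^r$ at cost $\bigO(\cost_\mathcal{F})$.

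Next I would construct $\oracle_P^c$. For a fixed target column $\ket{s'}$ the only pairs with $p(s,a,s')\neq 0$ are the at most four \emph{incoming moves} $(s'-a,a)$ with $s'-a$ in-grid and walkable, together with the at most four \emph{self-loops} $(s',a)$, active either because $s'$ is non-walkable (hence absorbing) or because $s'$ is walkable but $s'+a$ leaves the grid; these eight labelled candidates are pairwise distinct, so $\sum_{sa}p(s,a,s')\le 8$ and one may take ${\rm c}_P=8=\bigO(1)$, independent of the grid size. The construction prepares the uniform superposition over the eight candidate branches, computes in each branch its candidate pair $(s,a)$ by arithmetic, uses one $\oracle_\mathcal{F}$ call plus range comparisons to set a validity flag, and leaves the invalid branches (flag raised) as the garbage term $\ket{G_{s'}^\perp}\ket{s'}$; the flag qubit enforces $\braket{G_{s'}^\perp}{s,a}=0$, and since each valid branch carries amplitude $1$ the valid part appears with exactly the prescribed prefactor $1/\sqrt{{\rm c}_P}$. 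This gives $\oracle_P^c$ at cost $\bigO(\cost_\mathcal{F})$, hence $\cost_P=\bigO(\cost_\mathcal{F})$.

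Finally, $\oracle_R$: the \textsc{FrozenLake} reward is triggered only by the goal, so $\ket{R}$ is supported on $s_G$ — under the convention that reward $+1$ is perceived for occupying $s_G$ we have $\ket{R}=\ket{s_G}\otimes\frac{1}{\sqrt A}\sum_a\ket{a}$, prepared by applying $\oracle_G$ to the state register and a uniform-superposition gate to the action register, at cost $\bigO(\cost_G)$; the alternative convention "reward on the transition into $s_G$" is obtained the same way after one extra subtraction $s\mapsto s_G-a$ and a walkability check, still within $\bigO(\cost_G+\cost_\mathcal{F})$. Assembling the three oracles yields quantum access to $\mathcal{M}$ with the stated costs. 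I expect the main obstacle to be the bookkeeping inside $\oracle_P^c$: exhaustively and non-redundantly enumerating the incoming transitions at boundary cells and at absorbing cells, certifying that the garbage branches are genuinely orthogonal to all computational $\ket{s,a}$, and confirming that the resulting normalization ${\rm c}_P$ really is a grid-size-independent constant.
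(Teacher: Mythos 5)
Your proposal is correct and follows essentially the same route as the paper: build $\oracle_P^r$ by reversible coordinate arithmetic controlled on one $\oracle_\mathcal{F}$ query, build $\oracle_P^c$ by superposing the $\bigO(1)$ candidate ancestors of $s'$ and flagging the invalid ones as garbage, and build $\oracle_R$ from $\oracle_G$ plus a superposition over actions. The only differences are minor bookkeeping: you take ${\rm c}_P=8$ because you include the absorbing self-loops at non-walkable states (which the paper's column oracle actually omits when it sets ${\rm c}_P=4$, so your accounting is if anything the more careful one), you handle grid boundaries by range comparisons instead of the paper's assumption that border cells are non-walkable, and the paper adopts what you call the alternative reward convention, $\ket{R}=\frac{1}{2}\sum_a \ket{s_G-a,a}$ --- none of which changes the claimed $(\bigO(\cost_\mathcal{F}),\bigO(\cost_G))$ costs.
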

\newpage
    \begin{proof}
        We will build quantum access to $\mathcal{M}$ by constructing the oracles $\oracle_P^r$, $\oracle_P^c$ and $\oracle_R$ from Definition \ref{definition:block-mdp}. Classically, the transition matrix $P$ and the reward vector $R$ can both be recovered from $\mathcal{F}$ and $s_G$ since: 
        $$ p(s,a,s')= 
            \begin{cases}
                1, & \text{if}\ (s\in \mathcal{F} \ \text{and} \ s'=s+a) \ \text{or} \ (s\not\in \mathcal{F} \ \text{and} \ s'=s)\\
                0, & \text{otherwise}\\
            \end{cases}  $$
        $$ r(s,a)= 
            \begin{cases}
                1, & \text{if}\ (s\in \mathcal{F}\ \text{and} \ s+a =s_G) \\
                0, & \text{otherwise}\\
            \end{cases} $$
        First, we will construct the oracle $\oracle_P^r$. If we assume without loss of generality that all positions located in the borders of the grid are non-walkable, then the oracle $\oracle_P^r$ that encodes the rows of $P$ corresponds to the mapping: 
        $$ \oracle_P^r: \ket{s,a}\ket{0_s} \longrightarrow 
        \begin{cases}
                \ket{s,a}\ket{s+a}, & \text{if}\ s\in \mathcal{F}\\
                \ket{s,a}\ket{s}, & \text{if}\ s\not\in \mathcal{F}\\
        \end{cases} $$
        Both mappings $\ket{s,a}\ket{0_s}\xrightarrow[]{}\ket{s,a}\ket{s+a}$ and $\ket{s,a}\ket{0_s}\xrightarrow[]{}\ket{s,a}\ket{s}$ can be implemented in linear cost on the number of qubits used to represent state-action pairs $\ket{s,a}$. If we combine both mappings with $\oracle_\mathcal{F}$ applied on one ancilla qubit, we can construct the oracle $\oracle_P^r$ with cost $\bigO(\cost_\mathcal{F})$.
    
    \begin{figure}[t]
    \centering
    \resizebox{0.25\linewidth}{!}{
        \begin{tikzpicture}
            \draw[step=1cm, gray, very thin] (0, -8) grid (8, 0);
            \foreach \x in {0,1,...,7} {
                \node at (-0.5,-\x-0.5) {$\x$};
                \node[left] at (\x+0.5,0.5) {$\x$};
                \fill[black] (0, -\x) rectangle (1, -\x-1);
                \fill[black] (7, -\x) rectangle (8, -\x-1);
                }
            \foreach \x in {1,...,5} {
                \fill[black] (\x, -\x) rectangle (\x+1, -\x-1);
                }
            \foreach \x in {1,...,6} {
                \fill[black] (\x, 0) rectangle (\x+1, -1);
                \fill[black] (\x, -7) rectangle (\x+1, -8);
                }
        \end{tikzpicture}
                }
    \caption{An example of the \textsc{FrozenLake} environment with holes located in the diagonal.}
    \label{fig:environment-frozen-lake}
\end{figure}
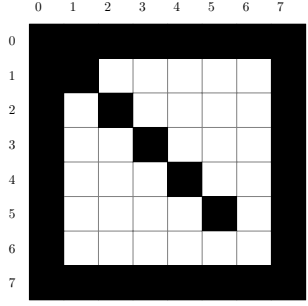
    
        Next, we will construct the oracle $\oracle_P^c$ that encodes the columns of $P$. Note that for this particular environment, any next state $s'$ arises as a transition from at most four state-action pairs corresponding to taking the action $a=s'-s$ from adjacent positions $s$. In other words, the value ${\rm c}_P$ can be chosen to be $4$ and we have: 
        $$ \oracle_P^c: \ket{0_s,0_a}\ket{s'} \longrightarrow \frac{1}{2}\sum_{\substack{s\in \mathcal{F} | s+a=s'}} \ket{s,a}\ket{s'} + \ket{G^\perp_{s'}}$$
        The mapping $\ket{0_s,0_a}\ket{s'}\xrightarrow[]{}\frac{1}{2}\sum_{s,a|s+a=s'}\ket{s,a}\ket{s'}$, that builds the superposition of the four possible ancestors of $s'$, can be implemented in linear cost. 
        It can be combined with $\oracle_\mathcal{F}$ to construct $\oracle_P^c$ with total cost $\bigO(\cost_\mathcal{F})$.
        
        Finally, we also need to build the oracle $\oracle_R$ that gives access to the reward vector $R$. For example, assuming that all four adjacent positions to the goal state $s_G$ are walkable, we can rewrite $\oracle_R$ as:
        $$ 
        \oracle_R: \ket{0_s,0_a}\longrightarrow\ket{R}= \frac{1}{2} \sum_{a} \ket{s_G-a,a}.
        $$
        Similar constructions are easy to design when there are some holes around the goal state $s_G$.
        The sum on the above oracles contain at most four elements and can be implemented with cost $\bigO(T_G)$ by first mapping $\ket{0_s}$ to $\ket{s_G}$ using $\oracle_G$ and then mapping $\ket{s_G,0_a}$ to $\ket{R}$.
    \end{proof}
    
    We have shown how to implement the necessary oracles that give quantum access to $\mathcal{M}$. It is important to note that we have set ${\rm c}_P=4$ which implies that the normalizing factor $\mu_P=2$ of the block-encoding of the transition matrix $P$ is fixed and does not depend on the grid size $S = X\times Y$. Given some policy $\pi$, the total cost of our quantum policy evaluation procedure from Theorem \ref{theorem:quantum-policy-evaluation} is then $\bigO(\Gamma(\cost_\mathcal{F}+\cost_G+\cost_\pi))$ where $\cost_\pi$ is the cost for the oracle encoding $\pi$ as in Definition \ref{definition:block-policy}. In the general case, we can implement $\oracle_\mathcal{F}$, $\oracle_G$ and $\oracle_\pi$ with $\bigO(XY)$ qubits and $\bigO(\polylog(XY))$ depth. However, there exists specific cases where the implementation of the oracle $\oracle_\mathcal{F}$ requires only $\bigO(\polylog(XY))$ qubits and depth. If for example all the non-walkable positions are located in the diagonal of the grid as in Figure \ref{fig:environment-frozen-lake}, we can implement $\oracle_\mathcal{F}$ using elementary mappings that require only $\bigO(\polylog(XY))$ qubits. 
    
    \indent \textbf{Running time:} The running time of quantum policy evaluation is $\bigO(\Gamma\-\polylog(XY))$ and the total running time of quantum policy iteration is $\bigO(M\Gamma\polylog(XY))$ with $M$ being the number of measurements. Executing the same classical algorithm yields a running time of $\bigO((XY)^\omega)$ since $A=4$ and does not depend on $XY$. Whether or not we have a quantum advantage depends on what is the required value of $M$. Setting $M=\bigO(\log(XY)/\epsilon^2)$ may not suffice when $XY$ is very large because the value function concentrates around the goal state and the $\ell_\infty$-tomography only guarantees $\epsilon$-approximation in average. In this case, we may use $\ell_2$-tomography with $M=\widetilde{\bigO}(XY/\epsilon^2)$ to guarantee that each grid position is sampled enough and the running time becomes $\widetilde{\bigO}(XY\Gamma/\epsilon^2)$ where the value of $\epsilon$ does not depend on the grid size, which still gives us a polynomial speedup over the classical in the worst case. 
    
    \indent \textbf{Experimental results:} We simulated the quantum policy iteration on a classical computer by introducing the appropriate noise and randomness within the linear algebraic procedures of the algorithm. More precisely, two types of noise were added to the normalized state-value function evaluated with a classical procedure. Given a precision parameter $\epsilon$, the first noise corresponds to the matrix inversion error (Theorem \ref{theorem:quantum-linear-algebra}) in the quantum policy evaluation method, whereas the second noise corresponds to the sampling error due to the finite number of quantum measurements (Theorem \ref{theorem:tomography}) where the number of measurements is chosen to be $M={36 \log(SA)}/{\epsilon^2}=36 \log(4XY)/\epsilon^2 $ as in \cite{Kerenidis2020QuantumAF}. We used $5$ different random seeds to run our experiments on the $4\times 4$ and $8 \times 8$ maps for the \textsc{FrozenLake} environment \cite{Brockman2016OpenAIG} and we saw that the quantum policy iteration converges to the optimal policy after at most five iterations for a precision parameter $\epsilon=10^{-2}$. 
    
\subsection{Application to \textsc{InvertedPendulum}}
\label{subsection:exp-InvertedPendulum}

    \indent \textbf{Description of the environment:} \textsc{InvertedPendulum} is an environment that requires maintaining a pendulum in a stable position by moving the cart it is attached to \cite{Wang1996AnAT}.  The space state $\mathcal{S}\subset \mathbb{R}^2$ is continuous and consists of tuples of the form $s=(\theta,\dot{\theta})$ where $\theta\in[-\pi/2,\pi/2]$ is the vertical angle and $\dot{\theta}\in\mathbb{R}$ the velocity. The action space consists of three Newtonian forces $\mathcal{A}=\{-50N,0N,+50N\}$ that can be applied to the cart to balance the pendulum. A uniform noise in $[-10,10]$ is added to any action. The game stops when the angle is greater than $\pi/2$ in absolute value. 
    
    The dynamics of the environment are governed by the following equation: $$\ddot{\theta} = \frac{g \sin(\theta) - \alpha ml \dot{\theta}^2\sin(2\theta)/2-\alpha \cos(\theta) a}{4l/3-\alpha ml\cos^2(\theta)}$$ where $g$ is the gravity constant, $m$ is the mass of the pendulum, $M$ is the mass of the cart, $l$ is the length of the pendulum and $\alpha = 1/(m+M)$.  
    
    \indent \textbf{Quantum access:} Since the state space $\mathcal{S}$ is continuous, we will apply the model-free implementation of quantum policy iteration. We want to build quantum access to a source $\mathcal{D}$ of transition samples classically collected from the \textsc{InvertedPendulum} environment and to some features function $\Phi$ by constructing the oracles as in Definition \ref{definition:block-model-free}. 
    
    First, let us consider quantum access to $\mathcal{D}$ by constructing $\oracle_{\widetilde{P}}^{s,a}$, $\oracle_{\widetilde{P}}^{s'}$. Assuming that we have a $B$-bit binary description for the states and actions, we can implement both oracles with cost $\cost_{\widetilde{P}}=\bigO(D\times B)$ where $D$ is the number of samples in $\mathcal{D}$. Moreover, we can also implement the oracle $\oracle_{\widetilde{R}}$ with cost $\cost_{\widetilde{R}}=\bigO(\polylog(D))$ since all rewards have value $+1$ and the approximated reward vector $\ket{\widetilde{R}}= \sum_i \ket{i}/\sqrt{D}$ can be implemented by applying a Hadamard transform to $\ket{0_i}$. 
    
    Next, we will construct an oracle that implements the features function $\Phi$. In particular, we will use the Fourier features \cite{Konidaris2011ValueFA}. Given some policy $\pi$, we will approximate the value function $Q^\pi$ using a multivariate Fourier series expansion of $Q^\pi(\cdot,a)$ on $[-1,1]^2$:
    $$Q^\pi(s,a) = \sum_{\mathbf{c}} \alpha_{\mathbf{c}} \cos(\pi \mathbf{c}\cdot s)+\beta_{\mathbf{c}} \sin(\pi \mathbf{c}\cdot s) \ \ \ \text{with} \ \ \mathbf{c}\in \mathbb{N}^{\text{dim}(\mathcal{S})}=\mathbb{N}^2$$
    
    To do so, we rescale the state parameters to range in $[0,1]^2$. We limit the expansion to some degree $k\in \mathbb{N}$ by considering coefficients $\mathbf{c}\in [k]^2=\{0,\dots,k-1\}^2$  which results in $2k^2$ features per action and a total number of $K=2Ak^{\text{dim}(\mathcal{S})}=6k^2$ features. The features function $\Phi: \mathcal{S}\times\mathcal{A} \longrightarrow \mathbb{R}^K$ maps every state-action pair $(s,a)$ to $\Phi(s,a)=\{\phi_{\cos}^{\mathbf{c},\mathbf{a}}(s,a),\phi_{\sin}^{\mathbf{c},\mathbf{a}}(s,a)|{\mathbf{c}\in[k]^2,\mathbf{a}\in\mathcal{A}}\}$ where: $$\phi_{\cos}^{\mathbf{c},\mathbf{a}}(s,a) = \mathds{1}[\mathbf{a}=a] \cos(\pi \mathbf{c}\cdot s) \quad\quad \text{and} \quad\quad \phi_{\sin}^{\mathbf{c},\mathbf{a}}(s,a) = \mathds{1}[\mathbf{a}=a] \sin(\pi \mathbf{c}\cdot s)$$ 
    
    In the next claim, we show how to efficiently implement the oracle $\oracle_\Phi$ associated to the features function $\Phi$:

    \begin{claim}\label{claim-cart-pole} 
         Let $k\in\mathbb{N}$ be the Fourier degree expansion and $B$ the number of bits used to describe $s$. We can implement the oracle $\oracle_\Phi$ with cost  $\cost_\Phi=\bigO(B\polylog(K))$.
    \end{claim}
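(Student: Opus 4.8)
The plan is to build $\oracle_\Phi$ by preparing, on input $\ket{\tilde s,\tilde a}$, the normalized state $\ket{\Phi(\tilde s,\tilde a)}$ directly in the index register, which I view as three pieces: a register carrying $\mathbf c\in[k]^2$ (two blocks of $\lceil\log k\rceil$ qubits), a register carrying the action label $\mathbf a\in\mathcal A$, and one ``type'' qubit distinguishing the $\phi_{\cos}$ features from the $\phi_{\sin}$ features. Because $\phi_{\cos}^{\mathbf c,\mathbf a}(s,a)$ and $\phi_{\sin}^{\mathbf c,\mathbf a}(s,a)$ both vanish unless $\mathbf a=a$, the first step is simply to copy the input action $\tilde a$ into the $\mathbf a$-register with CNOTs; afterwards only the block $\mathbf a=\tilde a$ carries amplitude, which realizes the indicator $\mathds 1[\mathbf a=a]$ for free at cost $\bigO(\log A)=\bigO(1)$.

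Next I would put the $\mathbf c$-register into the uniform superposition $\frac1k\sum_{\mathbf c\in[k]^2}\ket{\mathbf c}$ — Hadamard gates if $k$ is a power of two, and otherwise a standard $\bigO(\polylog k)$ state preparation for the uniform distribution on $[k]$ applied to each coordinate. The core step is to rotate the type qubit from $\ket 0$ to $\cos(\pi\mathbf c\cdot\tilde s)\ket 0+\sin(\pi\mathbf c\cdot\tilde s)\ket 1$, controlled on $\mathbf c$ and $\tilde s$. Writing $c_i=\sum_j c_{i,j}2^{j}$ and $\tilde s_i=\sum_\ell \tilde s_{i,\ell}2^{-\ell}$ bit by bit gives $\pi\,\mathbf c\cdot\tilde s=\pi\sum_{i,j,\ell}c_{i,j}\,\tilde s_{i,\ell}\,2^{\,j-\ell}$, and since all $Y$-rotations commute this conditional rotation factors as a product over the $\bigO(B\log k)$ triples $(i,j,\ell)$ of doubly-controlled gates $R_y(2\pi\cdot 2^{\,j-\ell})$, each controlled on one bit of $c_i$ and one bit of $\tilde s_i$. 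Using $R_y(2\theta)\ket 0=\cos\theta\ket 0+\sin\theta\ket 1$ for all real $\theta$, the signs of the cosines and sines come out correctly with no case distinctions and no ancillas, and if the oracle stores raw (un-rescaled) coordinates the affine rescaling to $[0,1]^2$ can be absorbed into the rotation angles. The state produced is exactly $\frac1k\sum_{\mathbf c\in[k]^2}(\cos(\pi\mathbf c\cdot\tilde s)\ket{\mathbf c,\tilde a,0}+\sin(\pi\mathbf c\cdot\tilde s)\ket{\mathbf c,\tilde a,1})$, which equals $\ket{\Phi(\tilde s,\tilde a)}$ since $\|\Phi(\tilde s,\tilde a)\|^2=\sum_{\mathbf c}(\cos^2+\sin^2)=k^2$; the controlled version of $\oracle_\Phi$ is obtained by appending one more control to each gate, leaving the asymptotic cost unchanged.

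Summing the three stages gives $\bigO(1)+\bigO(\polylog k)+\bigO(B\log k)=\bigO(B\log k)$, and with $K=6k^2$ so that $\log k=\Theta(\log K)$ this is $\bigO(B\polylog(K))$, as claimed. I expect the only points needing care to be the non-power-of-two case for the uniform superposition over $[k]$, and the bookkeeping of the exponent $j-\ell$ so that the accumulated $R_y$ angle is precisely $2\pi\,\mathbf c\cdot\tilde s$ (in particular handling the $\mathbf c$-dependent phase offset that the rescaling introduces, which is itself a product of singly-controlled $R_y$ gates); the remainder is routine gate counting.
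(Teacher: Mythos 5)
Your proposal is correct and reaches the same bound, with the same overall skeleton as the paper's proof: split the feature index register into a type qubit, a $\mathbf c$-register and an action register, prepare the uniform superposition $\frac1k\sum_{\mathbf c\in[k]^2}\ket{\mathbf c}$, realize the indicator $\mathds 1[\mathbf a=a]$ by copying the input action, and reduce everything to the map $\ket 0\mapsto\cos(\pi\mathbf c\cdot s)\ket 0+\sin(\pi\mathbf c\cdot s)\ket 1$ on the type qubit. Where you genuinely diverge is in how that conditional rotation is implemented. The paper first computes $\mathbf c\cdot s$ into a $B$-bit ancilla register using quantum addition/multiplication circuits (cost $\bigO(B\polylog(K))$), then applies $B$ singly-controlled rotations conditioned on the bits of the stored product, and finally uncomputes the ancilla. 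You instead expand $\pi\,\mathbf c\cdot s=\pi\sum_{i,j,\ell}c_{i,j}s_{i,\ell}2^{j-\ell}$ bitwise and apply $\bigO(B\log k)$ doubly-controlled $R_y$ gates directly, exploiting that same-axis rotations commute and add their angles; this is exact (both expansions are finite), needs no ancillas and no uncomputation, and handles the affine rescaling of $s$ and the controlled version of $\oracle_\Phi$ by adjusting angles and adding controls. The trade-off is mild: the paper's route keeps $\mathbf c\cdot s$ available in a register (reusable, and built from off-the-shelf arithmetic circuits with rotations controlled on only $B$ bits), while yours trades that for a somewhat larger count of two-controlled rotations but a shallower, ancilla-free circuit; both give $\cost_\Phi=\bigO(B\polylog(K))$. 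The only points to keep an eye on are the ones you already flag — exact uniform superposition over $[k]$ when $k$ is not a power of two, and the bookkeeping of the $\mathbf c$-dependent offset introduced by rescaling — neither of which affects the asymptotics.
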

    
    \begin{proof}
        We need to build quantum access to the oracle $\oracle_\Phi: \ket{s,a}\ket{0_k}\xrightarrow{} \ket{s,a}\ket{\Phi(s,a)}$. It is important to note that $\|\Phi(s,a)\|$ is constant for all state-action pairs and that the corresponding quantum state $\ket{\Phi(s,a)}$ can be written as: $$ \ket{\Phi(s,a)} = \frac{1}{k}\sum_{\mathbf{c}\in[k]^2} (\cos(\pi \mathbf{c}\cdot s)\ket{0}+\sin(\pi \mathbf{c}\cdot s)\ket{1})\ket{\mathbf{c},a}$$
        Note that the features register $\ket{0_k}$ can be decomposed into three registers $\ket{0}\ket{0_{\mathbf{c}},0_a}$ that index the $K=6k^2$ features. Moreover, the state representation $\ket{s}$ can also be decomposed into $\ket{\theta,\dot{\theta}}$ and we assume that every observation is encoded as a $B$-bit binary description. 
        
        Starting from $\ket{s,a}\ket{0_k}=\ket{s,a}\ket{0}\ket{0_{\mathbf{c}},0_a}$, we can map $\ket{0_{\mathbf{c}}}$ to $\sum_{\mathbf{c}\in[k]^2}\ket{\mathbf{c}}/k$ with cost $\bigO(\polylog(K))$. Then we use $B$-ancilla qubits to compute and store the results of the mapping $\ket{s}\ket{\mathbf{c}}\ket{0_B}\longrightarrow\ket{s}\ket{\mathbf{c}}\ket{\mathbf{c}\cdot s}$ with cost $\bigO(B\polylog(K))$ using quantum circuits for addition and multiplication. Next, we control on $\ket{\mathbf{c}\cdot s}$ to map the first qubit of the features register to $(\cos(\pi \mathbf{c}\cdot s)\ket{0}+\sin(\pi \mathbf{c}\cdot s)\ket{1})$ with cost $B$. Finally, we finish by uncomputing $\ket{\mathbf{c}\cdot s}$ and copy the action register of $\ket{s,a}$ to $\ket{0_a}$.
    \end{proof}
    
     We have shown how to implement the necessary oracles that give quantum access to the parameters of our model-free quantum approximate policy evaluation. Using Lemma \ref{lemma:block-model-free}, we can use these oracles to construct $\sqrt{K}$-block-encodings of the matrices $\widetilde{\Phi}$ and $\widetilde{P^\pi\Phi}$ with cost $\cost_{\widetilde{\Phi}}=\bigO(BD + B\polylog(K))$ where we assumed that the implementation of $\widetilde{\pi}$ has the same cost as the one of $\widetilde{P}$. The total cost of our evaluation procedure from Theorem \ref{theorem:model-free-evaluation} simplifies to  $\bigO(\kappa^2_{\widetilde{\Phi}}BD\Gamma\polylog(K\kappa_{\widetilde{\Phi}}/\epsilon))$. 

    \indent \textbf{Running time:} As demonstrated in the analysis above, the implementation of the oracles that give access to the memory $D$ require at most $\bigO(BD)$ qubits and can be performed with constant depth. However, the implementation of the features function uses quantum circuits with a linear dependency on $B$ since we need to control on the $B$ qubits used to store the values $\mathbf{c}\cdot s$. The running time of quantum approximate policy evaluation simplifies then to $\bigO(\kappa^2_{\widetilde{\Phi}}B\polylog(K\kappa_{\widetilde{\Phi}}/\epsilon))$. Since we need to improve the policy for every transition state $s$ in $\mathcal{D}$, the total running time of quantum approximate policy iteration is then $\widetilde{\bigO}(\kappa_{\widetilde{\Phi}|\mathcal{S}}\kappa^2_{\widetilde{\Phi}}MDB\polylog(K\kappa_{\widetilde{\Phi}}/\epsilon))$ where $M$ is the total number of measurements performed to update one state $s$. In comparison, executing classically this algorithm takes $\widetilde{\mathcal{O}}(BDK^2 + BK^w)$ for the approximate policy evaluation that computes $\widehat{w}^\pi=$ and  $\widetilde{\mathcal{O}}(BDK)$ for the approximate policy improvement step where we use the bit time complexity for matrix multiplication and inversion. Our analysis show that both approaches have linear dependency on $B$ and $D$, however our quantum algorithm provides a polynomial speedup in the total number of features $K=2Ak^{\text{dim}(\mathcal{S})}$ that grows exponentially with the dimension of the state space if we apply this approach to other environments. 
    
    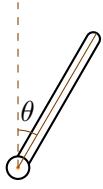
\begin{figure}[t]
    \centering
    
    \begin{tikzpicture}[thick]
     
        \newcommand{\ang}{30}
        \begin{scope} [draw = black,
            fill = white, 
            dot/.style = {orange, radius = .025}]
         \filldraw [rotate around = {-\ang:(0,1.5)}] (.09,1.5) -- 
            +(0,2) arc (0:180:.09) 
            coordinate [pos = .5] (T) -- (-.09,1.5);
        \filldraw (0,1.52) circle (.15);
        \fill[dot] (0,1.52) circle;
        \end{scope}
        \begin{scope} [thin, orange!50!black]
            \draw (T) -- (0,1.52) coordinate (P);
            \draw[dashed] (P) + (0,0) -- +(0,2.2);
            \draw (P) + (0,.5) arc (90:90-\ang:.5) node [black, midway, above] {$\theta$};
        \end{scope}
    \end{tikzpicture}
    \caption{\textsc{InvertedPendulum} environment.}
    \label{fig:cart-pole}
\end{figure}
    
    \indent \textbf{Experimental results:} We simulated the model-free implementation of quantum approximate policy iteration on a classical computer with a Fourier expansion of degree $k=4$ for a total of $K=96$ features per state-action pair. Similarly to the experiments in Subsection \ref{subsection:exp-frozen-lake}, we added a noise of magnitude $\epsilon$ to all algebraic procedures and we performed $M=100$ measurements for every sample in the memory. We preprocessed the state to range in $[0,1]^2$ by normalizing the angle and clipping the angle velocity $\dot{\theta}$ between $[-1,1]$ before rescaling to $[0,1]$. All other simulation parameters are identical to those in \cite{Lagoudakis2003LeastSquaresPI}. Moreover, we also clipped the singular values of $\mathbf{A}^\pi= \widetilde{\Phi}^\top (\widetilde{\Phi}-\gamma \widetilde{P^\pi\Phi})$ so that its corresponding condition number is constant and has value $\kappa=1/10^{-3}$. We repeated the experiment over $5$ different random seeds with a precision $\epsilon=10^{-2}$ and saw that our algorithm converges to the optimal policy within the first $8$ iterations. 
    
\section{Conclusion and discussions}
\label{section:conclusion}

    \indent In this work, we provided a general framework for performing quantum reinforcement learning via exact and approximate policy iteration. We validated our framework by designing and analyzing quantum policy evaluation methods for infinite horizon discounted problems by building quantum states that approximately encode the value function of a policy $\pi$, and quantum policy improvement methods by post-processing measurement outcomes on these quantum states. In all cases, we provided details about constructing block encodings for all matrices needed in the quantum linear algebra computations. Last, we studied the theoretical and experimental performance of our quantum algorithms on the \textsc{FrozenLake} and \textsc{InvertedPendulum} environments.
    
    \indent Our framework can be adapted and generalized to encompass many different policy iteration algorithms, including ones using deep learning techniques, and, of course, further theoretical work is needed in order to fully understand the strengths and limits of this approach. We conclude by providing several directions for possible future work.
    
    \indent First, the cost and running time of our quantum policy evaluation algorithms have linear dependency on the quantities $\mu$ and $\kappa$ of the different matrices appearing in the linear systems used to compute or estimate the value function. The condition number $\kappa$ is a property of the matrix and cannot be optimized, but one can use a much smaller threshold $\kappa_{th}$, thus disregarding smaller eigenvalues, a method that works well when there is a good low rank approximation of the matrix. The quantity $\mu$ depends on the procedure used to build quantum access as in the block-encoding framework and different methods will provide different $\mu$ parameters. We have provided examples where both these parameters are small, but it remains open to understand the families of environments for which quantum linear algebra can be faster than classical methods.
    
    \indent Second, we provided several quantum policy improvements strategies that consist of performing a series of measurements on the outputs of quantum policy evaluation to update the actual policy. Again, the running time of one iteration of our algorithm is linearly dependent on the number of measurements which also affects the overall performance of our policy. Moreover, there is also inherent noise induced from measurements that is specific to the quantum procedures. We have set this number, for most of our results, to be $\widetilde{\bigO}(1/\epsilon^2)$ for the theoretical guarantees provided by $\ell_\infty$-tomography but this number may be far from optimal. Possible research directions include adaptively controlling this number or/and making it state-dependent to appropriately balance between exploration-exploitation. If no exploration is needed, we can instead focus on finding the correct argmax using the quantum maximum finding algorithm by Dürr and Høyer \cite{durr1996quantum} similarly to the approach in \cite{Wang2021QuantumAF}. Understanding better how the number of measurements affects the convergence and performance of the quantum reinforcement learning methods needs to be more thoroughly explored.
    
    \indent Last, one may also study the different variants of classical policy iteration that exist and try to provide similar theoretical guarantees of convergence for some appropriate norm for the quantum case. 

\newpage
\bibliographystyle{unsrtnat}
\bibliography{bibliography}

\begin{thebibliography}{34}
\providecommand{\natexlab}[1]{#1}
\providecommand{\url}[1]{\texttt{#1}}
\expandafter\ifx\csname urlstyle\endcsname\relax
  \providecommand{\doi}[1]{doi: #1}\else
  \providecommand{\doi}{doi: \begingroup \urlstyle{rm}\Url}\fi

\bibitem[Mnih et~al.(2015)Mnih, Kavukcuoglu, Silver, Rusu, Veness, Bellemare,
  Graves, Riedmiller, Fidjeland, Ostrovski, Petersen, Beattie, Sadik,
  Antonoglou, King, Kumaran, Wierstra, Legg, and
  Hassabis]{Mnih2015HumanlevelCT}
Volodymyr Mnih, Koray Kavukcuoglu, David Silver, Andrei~A. Rusu, Joel Veness,
  Marc~G. Bellemare, Alex Graves, Martin~A. Riedmiller, Andreas Fidjeland,
  Georg Ostrovski, Stig Petersen, Charlie Beattie, Amir Sadik, Ioannis
  Antonoglou, Helen King, Dharshan Kumaran, Daan Wierstra, Shane Legg, and
  Demis Hassabis.
\newblock Human-level control through deep reinforcement learning.
\newblock \emph{Nature}, 518:\penalty0 529--533, 2015.

\bibitem[Silver et~al.(2017)Silver, Schrittwieser, Simonyan, Antonoglou, Huang,
  Guez, Hubert, baker, Lai, Bolton, Chen, Lillicrap, Hui, Sifre, van~den
  Driessche, Graepel, and Hassabis]{Silver2017MasteringTG}
David Silver, Julian Schrittwieser, Karen Simonyan, Ioannis Antonoglou, Aja
  Huang, Arthur Guez, Thomas Hubert, Lucas baker, Matthew Lai, Adrian Bolton,
  Yutian Chen, Timothy~P. Lillicrap, Fan Hui, L.~Sifre, George van~den
  Driessche, Thore Graepel, and Demis Hassabis.
\newblock Mastering the game of go without human knowledge.
\newblock \emph{Nature}, 550:\penalty0 354--359, 2017.

\bibitem[Szegedy et~al.(2014)Szegedy, Zaremba, Sutskever, Bruna, Erhan,
  Goodfellow, and Fergus]{Szegedy2014IntriguingPO}
Christian Szegedy, Wojciech Zaremba, Ilya Sutskever, Joan Bruna, D.~Erhan,
  Ian~J. Goodfellow, and Rob Fergus.
\newblock Intriguing properties of neural networks.
\newblock \emph{CoRR}, abs/1312.6199, 2014.

\bibitem[Arute et~al.(2019)Arute, Arya, Babbush, Bacon, Bardin, Barends,
  Biswas, Boixo, Brand{\~a}o, Buell, Burkett, Chen, Chen, Chiaro, Collins,
  Courtney, Dunsworth, Farhi, Foxen, Fowler, Gidney, Giustina, Graff, Guerin,
  Habegger, Harrigan, Hartmann, Ho, Hoffmann, Huang, Humble, Isakov, Jeffrey,
  Jiang, Kafri, Kechedzhi, Kelly, Klimov, Knysh, Korotkov, Kostritsa, Landhuis,
  Lindmark, Lucero, Lyakh, Mandr{\`a}, McClean, McEwen, Megrant, Mi,
  Michielsen, Mohseni, Mutus, Naaman, Neeley, Neill, Niu, Ostby, Petukhov,
  Platt, Quintana, Rieffel, Roushan, Rubin, Sank, Satzinger, Smelyanskiy, Sung,
  Trevithick, Vainsencher, Villalonga, White, Yao, Yeh, Zalcman, Neven, and
  Martinis]{Arute2019QuantumSU}
Frank Arute, Kunal Arya, Ryan Babbush, Dave Bacon, Joseph~C. Bardin, Rami
  Barends, Rupak Biswas, Sergio Boixo, Fernando G. S.~L. Brand{\~a}o, David~A.
  Buell, Brian Burkett, Yu~Chen, Zijun Chen, Benjamin Chiaro, Roberto Collins,
  William Courtney, Andrew Dunsworth, Edward Farhi, Brooks Foxen, Austin~G.
  Fowler, Craig Gidney, Marissa Giustina, Rob Graff, Keith Guerin, Steve
  Habegger, Matthew~P. Harrigan, Michael~J. Hartmann, Alan~K. Ho, Markus
  Hoffmann, Trent Huang, T.~Humble, Sergei~V. Isakov, Evan Jeffrey, Zhang
  Jiang, Dvir Kafri, Kostyantyn Kechedzhi, Julian Kelly, Paul Klimov, Sergey
  Knysh, Alexander~N. Korotkov, Fedor Kostritsa, David Landhuis, Mike Lindmark,
  Erik Lucero, Dmitry~I. Lyakh, Salvatore Mandr{\`a}, Jarrod~R. McClean,
  Matthew~J. McEwen, Anthony Megrant, Xiao Mi, Kristel Michielsen, Masoud
  Mohseni, Josh Mutus, Ofer Naaman, Matthew Neeley, Charles~J. Neill,
  Murphy~Yuezhen Niu, Eric~P. Ostby, Andre Petukhov, John~C. Platt, Chris
  Quintana, Eleanor~Gilbert Rieffel, Pedram Roushan, Nicholas~C Rubin,
  Daniel~Thomas Sank, Kevin~J Satzinger, Vadim~N. Smelyanskiy, Kevin~J. Sung,
  Matthew~D Trevithick, Amit Vainsencher, Benjamin Villalonga, Theodore White,
  Z.~Jamie Yao, P.~Yeh, Adam Zalcman, Hartmut Neven, and John~M. Martinis.
\newblock Quantum supremacy using a programmable superconducting processor.
\newblock \emph{Nature}, 574:\penalty0 505--510, 2019.

\bibitem[Lloyd et~al.(2014)Lloyd, Mohseni, and Rebentrost]{Lloyd2014QuantumPC}
Seth Lloyd, Masoud Mohseni, and Patrick Rebentrost.
\newblock Quantum principal component analysis.
\newblock \emph{Nature Physics}, 10:\penalty0 631--633, 2014.

\bibitem[Kerenidis and Prakash(2017)]{Kerenidis2017QuantumRS}
Iordanis Kerenidis and Anupam Prakash.
\newblock Quantum recommendation systems.
\newblock \emph{ArXiv}, abs/1603.08675, 2017.

\bibitem[Biamonte et~al.(2017)Biamonte, Wittek, Pancotti, Rebentrost, Wiebe,
  and Lloyd]{Biamonte2017QuantumML}
Jacob~D. Biamonte, Peter Wittek, Nicola Pancotti, Patrick Rebentrost, Nathan
  Wiebe, and Seth Lloyd.
\newblock Quantum machine learning.
\newblock \emph{Nature}, 549:\penalty0 195--202, 2017.

\bibitem[Lloyd and Weedbrook(2018)]{Lloyd2018QuantumGA}
Seth Lloyd and Christian Weedbrook.
\newblock Quantum generative adversarial learning.
\newblock \emph{Physical review letters}, 121 4:\penalty0 040502, 2018.

\bibitem[Kerenidis et~al.(2019)Kerenidis, Landman, and
  Prakash]{Kerenidis2020QuantumAF}
Iordanis Kerenidis, Jonas Landman, and Anupam Prakash.
\newblock Quantum algorithms for deep convolutional neural networks.
\newblock In \emph{International Conference on Learning Representations}, 2019.

\bibitem[Kerenidis and Prakash(2020)]{Kerenidis2020QuantumGD}
Iordanis Kerenidis and Anupam Prakash.
\newblock Quantum gradient descent for linear systems and least squares.
\newblock \emph{Physical Review A}, 101:\penalty0 022316, 2020.

\bibitem[Kerenidis et~al.(2021)Kerenidis, Landman, and
  Mathur]{Kerenidis2021ClassicalAQ}
Iordanis Kerenidis, Jonas Landman, and Natansh Mathur.
\newblock Classical and quantum algorithms for orthogonal neural networks.
\newblock \emph{ArXiv}, abs/2106.07198, 2021.

\bibitem[Sutton and Barto(2005)]{Sutton2005ReinforcementLA}
Richard~S. Sutton and Andrew~G. Barto.
\newblock Reinforcement learning: An introduction.
\newblock \emph{IEEE Transactions on Neural Networks}, 16:\penalty0 285--286,
  2005.

\bibitem[Dong et~al.(2008)Dong, Chen, Li, and Tarn]{Dong2008QuantumRL}
Daoyi Dong, Chunlin Chen, Hanxiong Li, and Tzyh-Jong Tarn.
\newblock Quantum reinforcement learning.
\newblock \emph{IEEE Transactions on Systems, Man, and Cybernetics, Part B
  (Cybernetics)}, 38\penalty0 (5):\penalty0 1207--1220, 2008.

\bibitem[Grover(1996)]{Grover1996AFQ}
Lov~K. Grover.
\newblock A fast quantum mechanical algorithm for database search.
\newblock In \emph{Proceedings of the twenty-eighth annual ACM symposium on
  Theory of computing}, pages 212--219, 1996.

\bibitem[Cornelissen(2018)]{Cornelissen2018QuantumGE}
Arjan Cornelissen.
\newblock Quantum gradient estimation and its application to quantum
  reinforcement learning.
\newblock Master's thesis, Delft University of Technology, 2018.

\bibitem[Gily{\'e}n et~al.(2019{\natexlab{a}})Gily{\'e}n, Arunachalam, and
  Wiebe]{gilyen2019optimizing}
Andr{\'a}s Gily{\'e}n, Srinivasan Arunachalam, and Nathan Wiebe.
\newblock Optimizing quantum optimization algorithms via faster quantum
  gradient computation.
\newblock In \emph{Proceedings of the Thirtieth Annual ACM-SIAM Symposium on
  Discrete Algorithms}, pages 1425--1444. SIAM, 2019{\natexlab{a}}.

\bibitem[Ronagh(2019)]{Ronagh2019QuantumAF}
Pooya Ronagh.
\newblock Quantum algorithms for solving dynamic programming problems.
\newblock \emph{ArXiv}, abs/1906.02229, 2019.

\bibitem[Chen et~al.(2020)Chen, Yang, Qi, Chen, Ma, and
  Goan]{Chen2020VariationalQC}
Samuel Yen-Chi Chen, Chao-Han~Huck Yang, Jun Qi, Pin-Yu Chen, Xiaoli Ma, and
  Hsi-Sheng Goan.
\newblock Variational quantum circuits for deep reinforcement learning.
\newblock \emph{IEEE Access}, 8:\penalty0 141007--141024, 2020.

\bibitem[Lockwood and Si(2020)]{Lockwood2020ReinforcementLW}
Owen Lockwood and M.~Si.
\newblock Reinforcement learning with quantum variational circuits.
\newblock \emph{ArXiv}, abs/2008.07524, 2020.

\bibitem[Skolik et~al.(2021)Skolik, Jerbi, and Dunjko]{Skolik2021QuantumAI}
Andrea Skolik, Sofi{\`e}ne Jerbi, and Vedran Dunjko.
\newblock Quantum agents in the gym: a variational quantum algorithm for deep
  q-learning.
\newblock \emph{ArXiv}, abs/2103.15084, 2021.

\bibitem[Jerbi et~al.(2021)Jerbi, Gyurik, Marshall, Briegel, and
  Dunjko]{Jerbi2021VariationalQP}
Sofi{\`e}ne Jerbi, Casper Gyurik, Simon Marshall, Hans~J. Briegel, and Vedran
  Dunjko.
\newblock Variational quantum policies for reinforcement learning.
\newblock \emph{ArXiv}, abs/2103.05577, 2021.

\bibitem[Wang et~al.(2021)Wang, Sundaram, Kothari, Kapoor, and
  R{\"o}tteler]{Wang2021QuantumAF}
Daochen Wang, Aarthi Sundaram, Robin Kothari, Ashish Kapoor, and Martin
  R{\"o}tteler.
\newblock Quantum algorithms for reinforcement learning with a generative
  model.
\newblock In \emph{ICML}, 2021.

\bibitem[Lagoudakis and Parr(2003)]{Lagoudakis2003LeastSquaresPI}
Michail~G. Lagoudakis and Ronald~E. Parr.
\newblock Least-squares policy iteration.
\newblock \emph{J. Mach. Learn. Res.}, 4:\penalty0 1107--1149, 2003.

\bibitem[Bertsekas(2011)]{Bertsekas2011ApproximatePI}
Dimitri~P. Bertsekas.
\newblock Approximate policy iteration: a survey and some new methods.
\newblock \emph{Journal of Control Theory and Applications}, 9:\penalty0
  310--335, 2011.

\bibitem[Scherrer et~al.(2012)Scherrer, Gabillon, Ghavamzadeh, and
  Geist]{Scherrer2012ApproximateMP}
Bruno Scherrer, Victor Gabillon, Mohammad Ghavamzadeh, and Matthieu Geist.
\newblock Approximate modified policy iteration.
\newblock \emph{ArXiv}, abs/1205.3054, 2012.

\bibitem[Bertsekas(2019)]{Bertsekas2019Reinforcement}
Dimitri~P. Bertsekas.
\newblock \emph{Reinforcement learning and optimal control}.
\newblock Athena Scientific, 2019.

\bibitem[Nielsen and Chuang(2002)]{Nielsen2000QuantumCA}
Michael~A Nielsen and Isaac Chuang.
\newblock Quantum computation and quantum information, 2002.

\bibitem[Chakraborty et~al.(2018)Chakraborty, Gily{\'e}n, and
  Jeffery]{Chakraborty2019ThePO}
Shantanav Chakraborty, Andr{\'a}s Gily{\'e}n, and Stacey Jeffery.
\newblock The power of block-encoded matrix powers: improved regression
  techniques via faster hamiltonian simulation.
\newblock \emph{ArXiv}, abs/1804.01973, 2018.

\bibitem[Gily{\'e}n et~al.(2019{\natexlab{b}})Gily{\'e}n, Su, Low, and
  Wiebe]{Gilyn2019QuantumSV}
Andr{\'a}s Gily{\'e}n, Yuan Su, Guang~Hao Low, and Nathan Wiebe.
\newblock Quantum singular value transformation and beyond: exponential
  improvements for quantum matrix arithmetics.
\newblock \emph{Proceedings of the 51st Annual ACM SIGACT Symposium on Theory
  of Computing}, 2019{\natexlab{b}}.

\bibitem[Munos(2003)]{Munos2003ErrorBF}
R{\'e}mi Munos.
\newblock Error bounds for approximate policy iteration.
\newblock In \emph{ICML}, volume~3, pages 560--567, 2003.

\bibitem[Brockman et~al.(2016)Brockman, Cheung, Pettersson, Schneider,
  Schulman, Tang, and Zaremba]{Brockman2016OpenAIG}
Greg Brockman, Vicki Cheung, Ludwig Pettersson, Jonas Schneider, John Schulman,
  Jie Tang, and Wojciech Zaremba.
\newblock Openai gym.
\newblock \emph{ArXiv}, abs/1606.01540, 2016.

\bibitem[Wang et~al.(1996)Wang, Tanaka, and Griffin]{Wang1996AnAT}
Hua~O. Wang, Kazuo Tanaka, and Michael~F. Griffin.
\newblock An approach to fuzzy control of nonlinear systems: stability and
  design issues.
\newblock \emph{IEEE Trans. Fuzzy Syst.}, 4:\penalty0 14--23, 1996.

\bibitem[Konidaris et~al.(2011)Konidaris, Osentoski, and
  Thomas]{Konidaris2011ValueFA}
George Konidaris, Sarah Osentoski, and Philip Thomas.
\newblock Value function approximation in reinforcement learning using the
  fourier basis.
\newblock In \emph{Twenty-fifth AAAI conference on artificial intelligence},
  2011.

\bibitem[Dürr and Høyer(1996)]{durr1996quantum}
Christoph Dürr and Peter Høyer.
\newblock A quantum algorithm for finding the minimum.
\newblock \emph{arXiv preprint quant-ph/9607014}, 1996.

\end{thebibliography}

\end{document}